\documentclass{article}
\usepackage{hyperref}
\usepackage{etoc}
\usepackage{authblk}
\usepackage{tikz}
\usetikzlibrary{shapes.geometric, arrows.meta, positioning, fit, backgrounds, calc, patterns}
\usepackage[utf8]{inputenc} 
\usepackage{graphicx}
\usepackage{algorithm}
\usepackage[noend]{algpseudocode}
\usepackage{amsmath}        
\usepackage{amsthm}
\usepackage{amsfonts}       
\usepackage{amssymb}        
\usepackage{geometry}       
\usepackage{comment}
\usepackage{xcolor}
\usepackage{booktabs}
\usepackage[toc,page,header]{appendix}
\usepackage{minitoc}

\usepackage[normalem]{ulem}

\newtheorem{theorem}{Theorem}[section]
\newtheorem{corollary}[theorem]{Corollary}
\newtheorem{lemma}[theorem]{Lemma}
\newtheorem{proposition}[theorem]{Proposition}

\newtheorem{remark}{Remark}

\newtheorem{assumption}{Assumption}
\newcommand{\toD}{\overset{\mathrm{d}}{\longrightarrow}}
\newcommand{\toP}{\overset{\mathrm{p}}{\longrightarrow}}

\newcommand{\method}{OPAL}
\newcommand{\bbeta}{\boldsymbol{\beta}}

\newcommand{\bmu}{\boldsymbol{\mu}}
\newcommand{\indep}{\perp \!\!\! \perp}

\newcommand{\bbP}{\mathbb{P}}
\newcommand{\cf}{\mathrm{cf}}

\newcommand{\B}{\mathcal{B}}

\newcommand{\E}{\mathbb{E}}

\newcommand{\G}{\mathbb{G}}

\newcommand{\R}{\mathbb{R}}

\newcommand{\N}{\mathcal{N}}

\newcommand{\Var}{\mathrm{Var}}
\newcommand{\Cov}{\mathrm{Cov}}
\renewcommand{\P}{\mathbb{P}}
\newcommand{\neff}{n_{\mathrm{effective}}}
\newcommand{\Bern}{\mathrm{Bern}}
\renewcommand{\N}{\mathcal{N}}

\newcommand{\Unif}{\mathrm{Unif}}

\newcommand{\budget}{n_{\mathrm{budget}}}

\renewcommand{\iff}{\Leftrightarrow}

\title{Optimized Labeling Resource Allocation\\
for Prediction-Assisted Inference via OPAL}
\author[1]{Virginia L. Ma}
\author[1,2]{Emmanuel J. Cand\`{e}s}

\affil[1]{Department of Statistics, Stanford University}
\affil[2]{Department of Mathematics, Stanford University}
\date{} 

\usepackage{biblatex} 
\begin{document}
\doparttoc
\faketableofcontents

\maketitle 

\begin{abstract}
Active Statistical Inference is a new framework to make precise claims about population parameters with provable statistical guarantees. It uses 
a predictive "black-box" machine learning (ML) model to strategically decide which data points to label, roughly prioritizing samples for which the ML model is unsure about their label values. A major issue is that the framework can be brittle when uncertainty estimates are noisy.
This paper introduces OPAL (Optimized Policy for Allocation of Labels), which learns a labeling strategy within a tractable class of smooth policies to yield estimators with the lowest variance. In effect, OPAL is an end-to-end pipeline that turns a black-box model's uncertainty scores into a data-adaptive labeling strategy and then performs inference on the collected samples. We evaluate OPAL on real datasets spanning medical imaging data, computational social science, and proteomics. {As a concrete example, we consider predicting breast cancer subtype from histopathology images and using OPAL to form valid confidence intervals for odds ratios for different demographic groups.} We show that OPAL achieves nominal coverage in finite samples and has the accuracy one expects from methods which have far more labeled samples. 
\end{abstract}

\section{Introduction}

\subsection{Motivation} \label{sec:motivation}
 While modern data pipelines readily accumulate vast quantities of complex features, obtaining verified labels remains a costly bottleneck. In medical imaging, for instance, acquiring chest X-rays is routine and inexpensive, yet identifying specific pathologies requires time-intensive review by expert radiologists \cite{CheXpert-paper, CheXpert, annotation-med-image}. {Similarly, in drug discovery, candidate properties predicted in silico such as binding affinity often require experimental validation. Because assays are costly and results arrive sequentially, it is natural to allocate validation effort adaptively as early measurements refine predictive models and reveal which regions of protein space are most informative \cite{reker2015active, reker2019practical, negoescu2011knowledge, warmuth2001active, warmuth2003active}.} 
 Off-the-shelf machine-learning models offer a tempting substitute via cheap provisional labels, but their errors can be unevenly distributed, e.g.~across patient subgroups and clinical contexts. This matters not only for prediction, but also for downstream scientific investigations: when we use model outputs as stand-ins for labels, systematic misclassification can translate into biased estimates of clinically meaningful quantities. This raises a resource-allocation problem: given a pool of unlabeled features and a finite budget for expert review, how should we select which samples to label so that inference about the target parameter remains valid and precise? 
 
Consider a setting where this issue is especially acute: estimating disease odds ratios across disparately represented subpopulations, a common way to quantify disparities and heterogeneity. In breast cancer, subtype identification guides treatment and prognosis, and subgroup-specific prevalence estimates help characterize disparities. Determining a sample subtype typically requires expert review of MRIs or histological slides---time-intensive work by radiologists and pathologists \cite{CNN3}. AI-based classifiers are compelling alternatives, but for fine-grained tasks like subtype diagnosis, error rates can vary across subgroups. When the estimand of interest is an odds ratio comparing two demographic strata, even modest differential misclassification can substantially distort the estimate, and these effects may be difficult to detect from aggregate accuracy metrics alone. A parallel example arises in chest X-ray imaging: in estimating the odds ratio of cardiomegaly (enlarged heart) among patients above age 40 versus those younger than 40, report-derived or model-predicted labels are inexpensive, but age-dependent misclassification (even with high overall accuracy) can bias the odds ratio unless expensive radiologist reads are allocated strategically.

A natural response is to label adaptively: rely on predictions where they are reliable, and prioritize expert effort where errors are most consequential for the estimand. Recent lines of work formalize this idea. In the semi-supervised regime where labels are obtained uniformly at random, prediction-powered inference (PPI) \cite{ppi} and follow-up work---PPI++ \cite{ppi++}, cPPI \cite{cppi}, and rePPI \cite{reppi}---guarantee valid inference by combining a black-box predictor with a modest set of ground-truth labels; related perspectives appear in \cite{witten}. Active statistical inference goes further by optimizing which points to label to reduce estimator variance within a broad M-estimation framework \cite{active, robust_active}. For mean estimation in large language model evaluation, a closely related approach prescribes a deterministic ``label-if-too-uncertain'' rule derived to optimize asymptotic variance \cite{active_LLMeval}. Parallel literatures on bandits and policy learning similarly optimize actions under budget constraints, though they typically target welfare or the average treatment effect rather than general plug-in estimands \cite{bandit, PL, TE}. For a more detailed literature review, see Appendix~\ref{app:lit}.

Existing active-labeling principles are most directly suited to target single mean or estimating-equation parameter estimands. The odds ratio example above illustrates why this is neither sufficient nor variance-minimizing. The log-odds ratio is a nonlinear function of two subgroup means, so the relevant objective might be the variance of a nonlinear function of sample averages, not the variance of averages in isolation. Treating the two mean-estimation problems separately can therefore miss the allocation that is optimal for the odds ratio itself. The same issue arises more generally for smooth functions of several component estimands. Beyond this, rank-based quantities such as Kendall's or Spearman's correlation are not directly covered by procedures designed around mean or standard M-estimation targets. OPAL addresses these cases by deriving the estimand-specific variance contribution and optimizing the labeling policy directly for the target estimand.

\subsection{Problem setup} \label{sec:setup}

We formalize the labeling setting next so we can state the optimization objective and inference guarantees.
We observe unlabeled features $X_1, \dots, X_n$ drawn i.i.d.~from some distribution $P_X$. For each observation we also record a binary group indicator $Z_1, \dots, Z_n \in \{0,1\}^n$, drawn i.i.d.~from a distribution $P_Z$ parametrized by $\P(Z_i = 1) = p$. For notational convenience, we treat $Z_i$ as one coordinate of $X_i$, so that the covariate vector includes the group membership indicator. Conditional on the features, the (unobserved) labels $Y_1, \dots, Y_n$ are generated according to $P_{Y \mid X}$.

In addition to the unlabeled data, we assume access to a fixed predictive model $f$ that maps features $X$ to predicted labels $\hat{Y}$, yielding predictions $f(X_1), \dots, f(X_n)$. We do not impose any assumptions on the form of $f$ or on the quality of its predictions; the model is used as a tool to guide our data-collection strategy and to allow for debiasing when imputed labels $\hat{Y}_i = f(X_i)$ are imperfect. For now, we assume that $f$ is a pre-trained black-box model, which was trained independently of our data sample. We provide a discussion on how to obtain $f$ when a pre-trained one is not available in Appendix~\ref{app:train_model}.

Obtaining labels is costly, so we assume a budget that allows us to label only a fraction $\rho$ of the data points, corresponding to an expected labeling budget of $\budget = n \rho$. Rather than sampling points uniformly at random, we aim to use the predictive model $f$ to identify a subset of informative points to label, thereby making more efficient use of this budget.

To capture heterogeneity across subpopulations, we single out the covariate $Z$ from the full vector $X$, emphasizing its role in defining the group structure of the data. We consider target parameters of the form $\theta = g(\theta_1, \theta_0)$, where each $\theta_j$ is a function of the subgroup distribution $P_{Y|X,Z = k} \times P_{X|Z = k}$.

When all observations belong to a single group (so that $Z_1 = \cdots = Z_n = j$), there is only one subgroup and $\theta \equiv \theta_j$ for all $j$, reducing to the one-group active statistical inference setting studied in \cite{active}. A canonical example involving subgroups is the average treatment effect $\tau = \E[Y \mid Z = 1] - \E[Y \mid Z = 0]$, obtained by taking $Z = 1$ as treatment and $Z = 0$ as control. In Section \ref{sec:general} we further generalize this framework to a parameter vector $\boldsymbol{\psi} \in \R^k$ and corresponding estimand $\theta = g(\boldsymbol{\psi})$.

Returning to the running example of diagnosing cardiomegaly from chest X-rays, $X_i$ consists of X-ray images and demographic features, and $Z_i$ is a binary indicator of whether a patient is at least 40 years old. The response $Y_i$ indicates the presence of cardiomegaly. Our inferential target is the odds ratio
$$\theta = \frac{\mu_1/(1 - \mu_1)}{\mu_0/(1 - \mu_0)},$$
where $\mu_1 = \P(Y = 1 \mid Z = 1)$ and $\mu_0 = \P(Y = 1 \mid Z = 0)$. Here, both $\mu_1$ and $\mu_0$ are functionals of the joint distribution of  $(Y, X, Z)$. For $X_i$ consisting of imaging data, a suitable classifier $f$ for predicting cardiomegaly may be a convolutional neural network such as ResNet fine-tuned on chest X-ray images \cite{CNN1, CNN2, CNN3}.

The primary goals of this work are (1) to construct confidence intervals for the target $\theta$; and (2) to design an optimized, data-driven labeling policy $\pi: \mathcal{X} \to [0,1]$ that decides which $\budget$ points to label, given the unlabeled features $X_i$ and the predictions of the black-box model $f$. 

The labeling policy $\pi$ is a \textit{soft} policy: it specifies labeling probabilities rather than hard assignments. For each data point $i$, we draw an indicator $\xi_i \sim \Bern(\pi(X_i))$. If $\xi_i = 1$, we observe $Y_i$; if $\xi_i = 0$, we do not query the label and instead use the prediction $\hat{Y}_i = f(X_i)$. Given a labeling budget of $k$ points, the policy $\pi$ guarantees labeling of $k$ points \textit{on average}, but not necessarily exactly $k$ points.

Like \cite{active}, we are interested in two labeling regimes: (1) a \textit{batch} setting, in which we have a pre-trained model $f$ and must decide all labels to query at once; and (2) a \textit{sequential} setting, in which we choose labels to query in one-at-a-time or in batches (possibly updating $f$ between batches), and update the policy $\pi$ as more data are collected.

\subsection{Our contributions} \label{sec:contributions}

We introduce Optimized Policy for Allocating Labels (\method), a framework for variance-optimal label collection and inference. First, we define a labeling policy $\pi_\theta(X)$, parametrized by $\theta$, which maps features $X$ to a labeling probability. The policy is constructed in a data-dependent manner using all observed features $X_1,\dots,X_n$. For ease of exposition, we consider a class of mean-type estimators, with target 
$$\mu_k = \mathbb{E}[s_k(X_i, Y_i)]$$ for some function $s_k(\cdot)$. For example, in subgroup mean estimation with groups $G_1, \dots, G_K$, at the population level, we take $s_k(X, Y) = Y1\{X \in G_k\} / \P(X \in G_k)$.  

We use an augmented inverse propensity weighted-style estimator, as in the active statistical inference literature on mean estimation \cite{active}, and make its dependence on the labeling policy explicit: 
\begin{equation}\label{eq:meanest}
    \hat{\mu}_k^\pi = \frac{1}{n}\sum_{i=1}^n \bigg[f_k(X_i) + \frac{\xi_i}{\pi(X_i)}(s_k(X_i, Y_i) - f_k(X_i))\bigg],  
\end{equation}
where $f_k(X)$ is a black box predictor for $s_k(X, Y)$; e.g. we may take $f_k(X) = s_k(X, \hat{Y})$ with $\hat{Y} = f(X)$. {Recall $\xi$ is the indicator of whether label $Y$ is acquired, so for unlabeled data points, the contribution of the second term  vanishes.} For subgroup means, with $n_k$ denoting the size of group $G_k$, the corresponding estimator is 
\begin{equation}\label{eq:subgpmean}
    \hat{\mu}_k^\pi = \frac{1}{n_k}\sum_{i \in G_k} \bigg[f(X_i) + \frac{\xi_i}{\pi(X_i)}(Y_i - f(X_i))\bigg].
\end{equation}

For a differentiable functional $\theta = \psi(\bmu)$, where $\bmu = (\mu_1, \dots, \mu_K)$, and a policy $\pi(X_i)$ that can vary across components (e.g. $\pi^{(k)}(X_i)$ when estimating $\mu_k$), the plug-in estimator of $\theta$ is 
\begin{equation}\label{eq:estimator}
\hat{\theta}^{\pi} = \psi(\hat{\mu}_1^\pi, \dots, \hat{\mu}_K^{\pi}).
\end{equation}

For the (log) odds ratio example in the chest X-ray setting, we take $\bmu = (\mu_1, \mu_0)$ and $\psi(\bmu) = g(\mu_1) - g(\mu_0)$ where $g(\mu) = \log [\mu/(1 - \mu)]$. Under mild regularity conditions (as specified in Appendix~\ref{app:regularity} Assumption~\ref{ass:basic_sampling}), the estimators $\hat{\mu}_{k}^\pi$ are asymptotically normal; consequently, $\hat{\theta}^\pi$ is also asymptotically normal with variance $V(\pi) = \text{AsyVar}(\hat{\theta}^\pi)$. This variance determines the width of confidence intervals and quantifies the effective precision attainable when only $\budget$ labels are collected while still leveraging predictions from $f$.

The considerations above reduce the optimized policy design problem to a variance-minimizing problem: how should we choose $\pi$, subject to the labeling budget and valid probability constraints, to minimize $V(\pi)$? Our main contribution is to answer this question by deriving an explicit expression for $V(\pi)$ in terms of the joint distribution of $(X, Z)$, the predictive model $f$, and the policy $\pi$.

One natural strategy is to define a scalar \textit{uncertainty score} $u(X)$ from the predictive model $f$ and let the labeling probability be proportional to this uncertainty. In the binary classification case, uncertainty is measured according to $u(x) = 2\min\{p(x), 1 - p(x)\}$, where $p(x) = \hat{\bbP}(Y = 1 \mid X)$ is the predicted class probability \cite{active}. Here, the predictive model is $f(\cdot) = p(\cdot)$, and the points with predicted class probability near $1/2$ are considered most uncertain. The proportional-to-uncertainty rule sets $\pi(X_i) \propto u(X_i)$, with a normalizing constant chosen so that $$\E[n_{\mathrm{labeled}}] = \E\bigg[\sum_{i=1}^n \xi_i\bigg] = \E[\pi(X)] \cdot n \leq \budget.$$
Because $\E[u(X)]$ is unknown, it must be estimated from the unlabeled data. This approach efficiently spreads labels across regions where predictions are uncertain. The odds ratio estimation problem is framed as the estimation of two separate means $\mu_1$ and $\mu_0$, which are then combined via the Delta Method in \cite{active}. However, while the proportional-to-uncertainty rule minimizes the asymptotic variance of each individual mean estimator, it does not minimize the asymptotic variance of the functional $\psi(\bmu)$. It also fixes the relationship between uncertainty and labeling probability to be linear, which can be suboptimal if very small or very large uncertainties should be treated separately.

To address these limitations, we parametrize the labeling policy as a smooth function of the uncertainty. Concretely, instead of forcing labeling probability to be linear in the uncertainty, we learn a flexible curve: $\pi_{\bbeta}(X_i) = g_{\bbeta}(u(X_i)),$
where $u(X_i)$ is a chosen uncertainty score and $g_{\bbeta}$ is a low-dimensional, spline-based function parameterized by $\bbeta$. The uncertainty therefore enters our method explicitly through $u(X_i)$, while the shape of the mapping from the uncertainty to the labeling probability is learned from the data.

We show that the resulting empirical variance $V(\pi_{\bbeta})$ is a convex function of $\bbeta$, and the optimal policy can be obtained by solving the tractable convex program
\begin{align*}
    \text{minimize } \;  & \text{Var}(\hat{\theta}^{\bbeta}), \\
    \text{subject to } \; & \hat{\E}[n_{\text{labeled}}] = \sum_{i=1}^n\pi_{\bbeta}(X_i) = \sum_{i=1}^n g_{\bbeta}(u(X_i)) \leq n_\text{budget}, \\
    &  0 \leq \pi(X_i) = g_{\bbeta}(u(X_i)) \leq 1, \; i = 1, \dots, n.
\end{align*}

We find that parametrizing policy probabilities as smooth functions of uncertainties can improve upon proportional-to-uncertainty labeling introduced in active statistical inference \cite{active}, even in relatively simple M-estimation settings such as mean estimation. The resulting policies are more robust to misspecification of the uncertainty scale and can adapt to the particular estimand. In Section \ref{sec:general} we extend the \method~framework to a broader class of estimators by deriving general expressions for $V(\pi_{\bbeta})$. In Section \ref{sec:discussion}, we discuss extensions to (i) simultaneous uncertainty quantification for multiple estimands; (ii) overlapping subgroups; and (iii) the use of data outside the (sub)groups of primary interest. Before detailing the method, we preview its utility in a real-world application.

\subsection{Chest X-ray For Cardiomegaly Classification} \label{sec:CHX_intro}

We return to the problem of estimating the odds ratio of cardiomegaly among patients above and below age 40. This is a scenario where group imbalance is expected---existing models for predicting cardiomegaly likely have been trained on data skewed heavily towards older patients due to the aging nature of chest-related health conditions. As a result, we expect predictions to potentially be less accurate for patients below age 40, but also fewer patients in this category, raising concerns of sensitivity for a quantity like the odds ratio. 

Using chest X-ray imaging data along with demographic features such as age and sex from CheXpert \cite{CheXpert}, a large publicly available dataset, we perform uncertainty quantification for the odds ratio of cardiomegaly among patients above and below 40, using a pre-trained model \cite{CheXpert-paper} to generate predicted labels. {To evaluate our method's performance on this task, we use the metric of effective sample size (ESS, denoted $\neff$): the number of samples needed for a uniform-at-random sampling method to achieve the accuracy of the labeling method under study---see Section \ref{sec:eval} for a formal definition.} The ratio $\neff/\budget$ measures the gain in ESS from additionally using ML predictions over using only the $\budget$ human-collected labels. {We compare the ESS using \method~against three alternative labeling policies: proportional-to-uncertainty \cite{active} (denoted active), uniform (reducing to PPI \cite{ppi++}), and classical (using only labels and no predictions).} Figure~\ref{fig:CHX_ESS} shows these ESS values across a range of labeling budgets given on the x-axis, averaged over 500 iterations per fixed labeling budget. We see that the green curves, corresponding to OPAL, dominate all others. That is, with optimally chosen labeling probabilities, we can achieve ESS on the order of nearly three times the number of actual labels collected by incorporating predicted labels adaptively. {Details including discussion of other quality metrics are in Section \ref{sec:CHX} and Appendix \ref{app:CHX_details}.} The source code for all implementation and experiments is available at \href{https://github.com/ginniema/OPAL-active-labeling-inference}{https://github.com/ginniema/OPAL-active-labeling-inference}.

\begin{figure}[H]
    \centering
    \includegraphics[width = 0.8\linewidth]{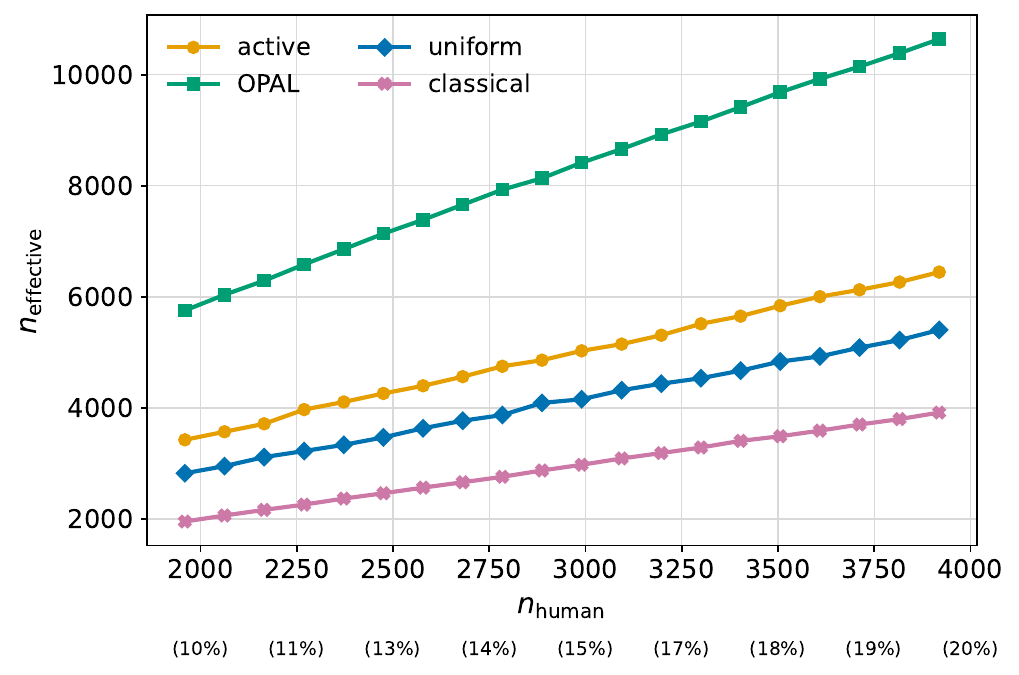}
    \caption{(a) Effective sample size and (b) coverage of each method listed in the legend. The x-axis $n_{\text{human}}$ denotes the labeling budget (how many human labels are collected) while the y-axis denotes effective sample size and coverage, respectively. The budget ranges from 10\% to 20\% of the total unlabeled observations, where labels are queried without replacement.}
    \label{fig:CHX_ESS}
\end{figure}

\section{{OPAL} for Active Statistical Inference}\label{sec:optim_label}

\method~can be broken into components consisting of two primary optimization modules: (1) an optimal labeling policy via a convex optimization program; and (2) a closed-form power tuning parameter $\lambda$. These modules are integrated into an end-to-end inference pipeline inspired by the basic structure from Zrnic and Cand\`{e}s \cite{active}, featuring both batched and sequential ``modes,'' as shown in Figure~\ref{fig:method}. 

In this section, we will introduce the optimal labeling policy module. At a high level, this module takes as input an initial, off-the-shelf predictive model $f$ and returns a labeling policy $\pi$ over the available unlabeled features. The policy is chosen by solving a convex program that targets statistical efficiency of the downstream estimator. Then, in Section~\ref{sec:inference}, this  labeling policy is incorporated in a complete inferential framework.

\begin{figure}[h!]
    \centering
    \includegraphics[width=\linewidth]{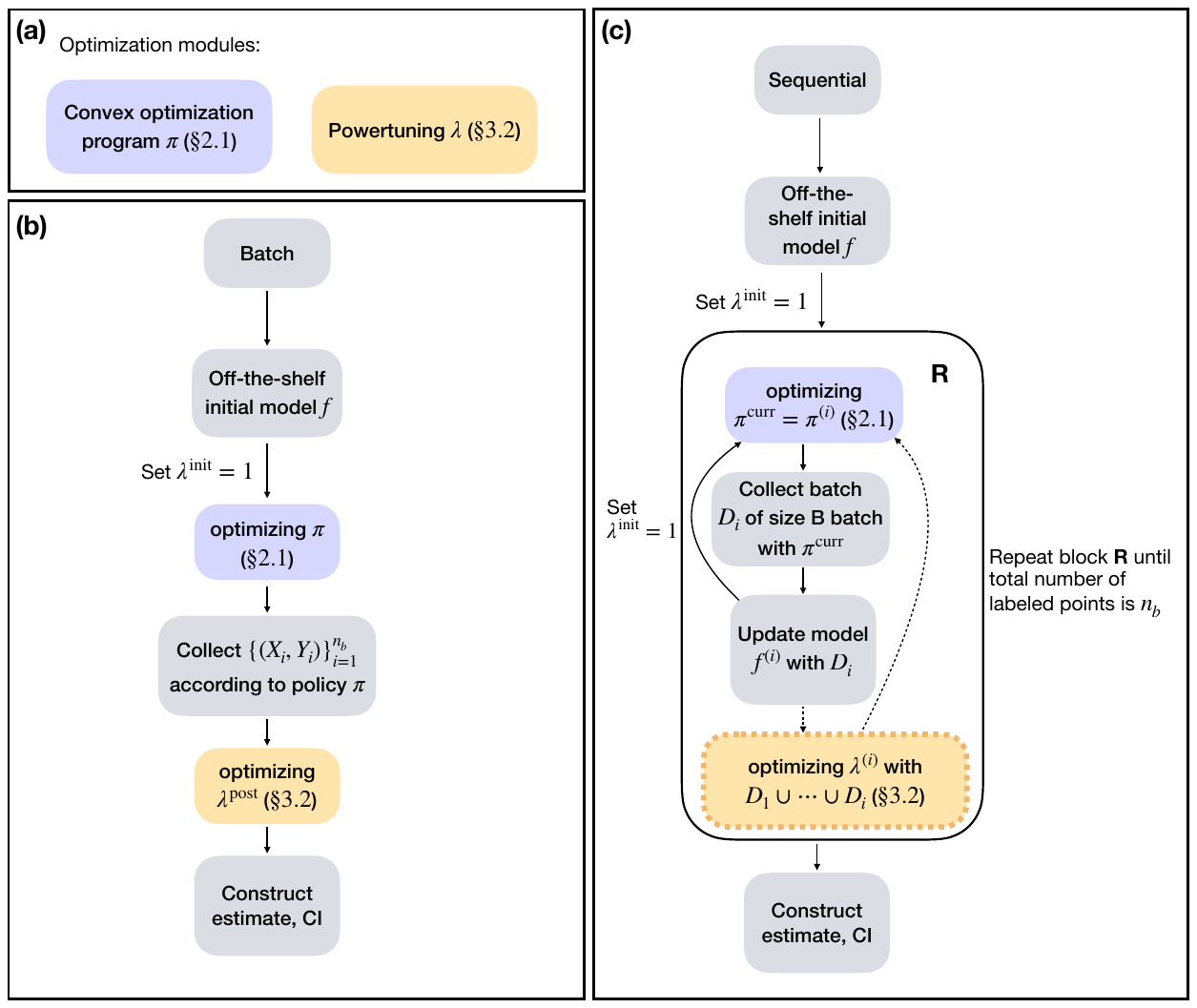}
    \caption{Overview of OPAL incorporating (a) optimization modules for finding labeling policy $\pi$ and best tuning parameter $\lambda$ (Section~\ref{sec:tuning}). The workflow using these two modules is given for (b) the batch setting where all covariate data is available; and (c) the sequential (online) setting where covariate data is revealed one-individual-at-a-time. Dashed lines around a module and dashed lines indicate optional modules. {A more complete version of this diagram, including scenarios when an initial $f$ is not available, is provided in Appendix~\ref{app:full_diagram}.}}
    \label{fig:method}
\end{figure}

\subsection{Framing min variance estimation as a convex optimization objective} \label{sec:optim}

We begin by revisiting the function-of-means setting introduced in Section \ref{sec:contributions}. Let $\theta = g(\bmu)$ be the parameter of interest, where $g$ is a smooth function and $\bmu = (\mu_1, \dots, \mu_K)$ is a vector of means such that $\mu_k = \E[s_k(X, Y)]$, where $s_k(\cdot)$ is some function of $X, Y$. Under standard regularity conditions for M-estimators, $\widehat{\bmu}^\pi = (\hat{\mu}_1, \dots, \hat{\mu}_K)$, where each $\hat{\mu}_k$ is as defined in Eq.~\eqref{eq:meanest} satisfies a central limit theorem (stated formally in Section \ref{sec:batch} and proved in Appendix \ref{app:CLT}):
$$\sqrt{n}(\hat{\bmu}^\pi - \bmu) \toD \N(0, \Sigma(\pi)),$$
where $\Sigma(\pi)$ denotes the asymptotic covariance matrix induced by the labeling policy $\pi$. To derive the asymptotic distribution for our target parameter $\theta$, we apply the Delta Method. For a smooth map $g$, the plug-in estimator $\hat{\theta}^\pi$ (defined in Eq.~\eqref{eq:estimator}) converges as
$$\sqrt{n}(\hat{\theta}^\pi - \theta) \toD \N(0, (\nabla g)^\top\Sigma(\pi)(\nabla g)).$$

Consequently, our objective is to minimize the asymptotic variance of this estimator. As shown in Appendix \ref{app:asyvar}, this variance takes a tractable form for optimization:
\begin{equation} \label{eq:mean_asyvar}
    V(\pi) = (\nabla g)^\top\Sigma(\pi)(\nabla g) =  \E\bigg[\frac{c(X)}{\pi(X)}\bigg] + C,
\end{equation}
where $C$ is a constant independent of the policy $\pi$. The numerator $c(X)$ captures the contribution of a data point to the variance. For the specific case of the log odds ratio $\theta = g(\mu_1, \mu_0)$, where $\mu_k = \E[Y \mid X \in G_k]$ and $g(x,y) = \log \frac{x}{1 - x} - \log \frac{y}{1 - y}$, this term is given by 
\begin{equation}\label{eq:OddsRatioVar}
    c(X) = \mathbb{E}[(Y - f(X))^2 \mid X]\bigg(\frac{1\{X \in G_1\}}{p_1^2\mu_1^2(1 - \mu_1)^2} + \frac{1\{X \in G_0\}}{p_0^2\mu_0^2(1 - \mu_0)^2}\bigg),
\end{equation}
as shown in Appendix~\ref{app:asyvar_OR}.
The expectation $\E[(Y - f(X))^2 \mid X]$ is the mean squared error of the predictor $f$  at $X$. If $f$ is consistent for the true conditional mean $\E[Y \mid X]$, this term simplifies to the conditional variance of the residual (since the outcomes are binary $Y \in \{0,1\}$): $\E[(Y - f(X))^2 \mid X] \approx f(X) (1 - f(X))$. However, directly minimizing $V(\pi)$ is impossible because it depends on unknown population quantities ($\mu_1, \mu_0$ and the conditional MSE). To make this optimization tractable, we construct a sample-based proxy $\hat{V}(\pi)$ by replacing population terms with empirical estimates derived from the burn-in set or training split (depending on whether we are in a batched or sequential setting):
$$\hat{V}(\pi) = \frac{1}{n}\sum_{i=1}^n \frac{\hat{c}(X_i)}{\pi(X_i)}.$$
Finding the optimal labeling policy $\hat{\pi}^*$ thus becomes a well-defined optimization problem: $$\hat{\pi}^* = \underset{\pi \in \Pi}{\arg\min} \; \hat{V}(\pi),$$ where the ``hat'' notation emphasizes that we are minimizing an \textit{estimate} of the asymptotic variance. 

{Before describing variance-minimizing policies, it is helpful to note that if we restrict the class of labeling policies to constant ones $\pi(x) \equiv \pi$ and enforcing the budget constraint $\E[\pi(X)] = \budget/n$, this yields the uniform sampling rule, i.e., each data point is labeled with the same probability. Under this choice, our estimator reduces to the prediction-powered estimator in \cite{ppi, ppi++}.}

To separate the general optimization machinery from specific estimands, we adopt the notion of an uncertainty score from \cite{active}. We denote $\hat{c}(X)$ as the uncertainty score $u(X)$ (up to a multiplicative factor). For the log odds ratio example, with the population-level $c(X)$ derived in Eq.~\eqref{eq:OddsRatioVar}, the uncertainty is driven by the predicted variance $u(X) = f(X)(1 - f(X))$, weighted by the group-specific terms: $$\hat{c}(X) = u(X) \bigg[\frac{1\{X \in G_1\}}{(\hat{\mu}_1^2(1 - \hat{\mu}_1)^2)} + \frac{1\{X \in G_0\}}{(\hat{\mu}_0^2(1 - \hat{\mu}_0)^2)\}}\bigg].$$ In this framework, the dependence of the variance estimate on the features is fully characterized by this one-dimensional summary $u(X)$.

Theoretically, one could attempt to solve this optimization problem directly by treating the labeling probabilities $\{\pi(X_i)\}_{i=1}^n$ as $n$ independent variables. The analytical solution to this problem corresponds to the classical Neyman Allocation, where labeling probabilities are proportional to the square root of the local uncertainty contribution \cite{NeymanAlloc, active}. This does not strictly enforce the probability constraints $\pi(X) \in [0,1]$, which requires numerical optimization. By employing a change of variables---defining $d_i = \prod_{j \neq i} \pi(X_j)$---this formulation can be cast as a Geometric Program, which is theoretically tractable and solvable using standard convex optimization tools \cite{boyd2007gp, agrawal2019dgp}.

However, if the uncertainty estimates are misspecified or noisy, treating each $\pi(X)$ individually makes the optimization overly sensitive to potential errors. Instead, we propose the parametrization of each $\pi(X)$ as a \textit{smooth} function of the uncertainties, which can guard against individual low-quality uncertainty estimates and provide overall regularization; that is, we set $\pi_{\bbeta}(X_j) \equiv \pi_{\bbeta}(u(X_j))$, with parameters $\bbeta$. For each mean $\mu_k$, we define a labeling policy $\pi^k_{\bbeta}$; these are concatenated for the overall policy $\pi$, parametrizing each as a spline function of the uncertainties $u(X_i)$.

To implement this smoothing, we define the spline using two externally provided tuning parameters---the degree $d$ and the number of internal knots $K$---along with a set of free coefficients $\{\beta_j^{(k)}\}_{j=1}^B$ to be estimated. Using the shorthand notation $U_i := u(X_i)$, we parametrize the labeling probability $\pi^{(k)}_{\bbeta^{(k)}}(X_i)$ via the log-linear link:
\begin{equation} \label{eq:splineprobs}
    \log \frac{1}{\pi^{(k)}(X)} = 
\boldsymbol{h}(U)^\top\boldsymbol{\beta}^{(k)}\implies \pi^{(k)}(X) = \exp\{-\sum_{b=1}^B \beta_b^{(k)}h_b(U)\},
\end{equation}
where $\{h_b\}$ are the spline basis functions, see Appendix \ref{app:spline} for further details on spline basis construction. We use functions of the scalar uncertainty $U$ rather than $X$ directly to utilize a one-dimensional summary, avoiding the complexity of tensor product splines required for high-dimensional $X \in \R^d$. We propose a default setting of 5 knots and 3 degrees of freedom. While these are technically hyperparameters, we find in simulation and experiments that the procedure is robust to these choices. This robustness is crucial, as the labeling budget constraint prevents us from performing standard hyperparameter tuning via multiple runs.

This parametrization naturally enforces valid probability constraint. Since $\exp\{-u\} \in (0,1]$ for all $u \geq 0$, ensuring that $\pi^{(k)}(X) \in (0, 1]$ is equivalent to constraining the spline exponents to be nonnegative: $\boldsymbol{h}(U)^\top\boldsymbol{\beta}^{(k)} \geq 0$. Crucially this formulation results in an exponential-cone program. Because the objective is log-convex in the parameters $\{\beta_j^{k}\}_{j=1, \dots, B}^{k = 1, \dots, K}$, we are guaranteed to efficiently find global solutions using standard convex optimization software such as \texttt{cvxpy}\cite{cvxpy1, cvxpy2} and \texttt{MOSEK}\cite{mosek}.

Formally, the explicit constrained optimization problem for a specified labeling budget $\budget$ is defined as follows:
\begin{align} 
    \text{minimize}  & \qquad \sum_{i=1}^n c(U_i) \exp\{\sum_{b=1}^B \beta_b h_b(U_i)\}, \label{eq:optim_full} \\
    \text{subject to}  & \qquad \sum_{b=1}^B \beta_b h_b(U_i) \geq 0, \quad \forall i \in [n], \nonumber \\
    & \qquad \sum_{i=1}^n \exp\{-\sum_{b=1}^B \beta_bh_b(U_i)\} \leq n_{\text{ budget}}. \nonumber 
\end{align}
Solving this problem yields the \textbf{optimized policy for acquiring labels} $\pi^\text{opt}$, which minimizes the estimate of the asymptotic variance $\hat{V}(\pi^\text{opt})$ of the corresponding estimator. In settings with multiple subgroups, this policy automatically balances the budget between groups in a principled manner, weighting them according to both the difficulty of mean estimation within each group and their respective contributions to the overall estimator variance. {If it is desired, monotonicity can be enforced with additional linear inequality constraints (see Appendix~\ref{app:monotonicity} for more details). We do not impose monotonicity in the main formulation, since optimal allocation may not necessarily be strictly monotone when the uncertainties and thus $\hat{c}(\cdot)$ are noisy or misspecified and when subgroup weights interact with the policy. }

\begin{figure}[h]
    \centering
    \includegraphics[width=0.8\linewidth]{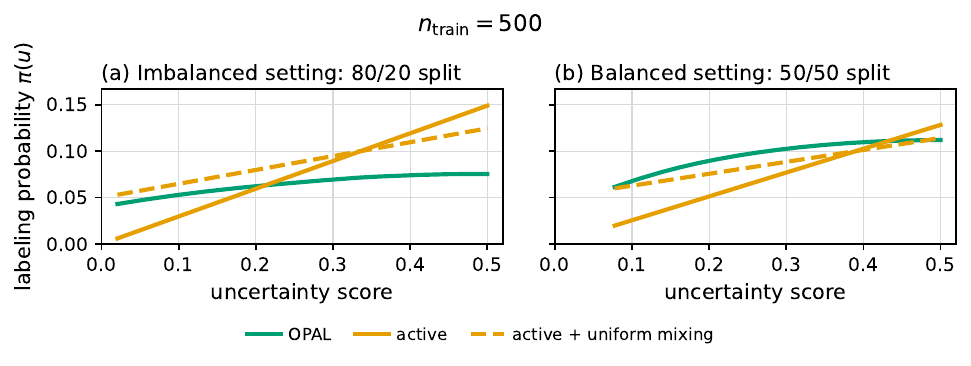}
    \caption{Labeling policy generated via active vs. OPAL based on data generated from (a) balanced setting and (b) imbalanced setting, as described in Section~\ref{sec:sim_odds}. A logistic regression model was trained on $n_{\mathrm{train}} = 500$ data points to predict label $Y \in \{0,1\}$. The x-axis represents uncertainty scores $u(X) = \min(f(X), 1 - f(X))$ and the y-axis is the labeling probability $\pi(X) = \pi(u(X))$.}
    \label{fig:probs}
\end{figure}

Figure~\ref{fig:probs} contrasts the resulting probabilities from \method~against proportional-to-uncertainty labeling, with and without mixing with the uniform sampling rule \cite{active}. While proportional sampling typically requires an added step---mixing with the budget-specified uniform sampling rule to provide regularization and guard against extreme probabilities---our method achieves this naturally. The introduction of the spline basis acts as a data-driven regularizer: it performs a similar task to uniform mixing but adapts flexibly to the specific problem geometry. This avoids the need to pre-specify a fixed mixing constant---a hyperparameter that cannot be tuned in real-world settings where labels are unknown. Further, the spline approach allows for non-linear mixing, offering greater expressivity than simple linear interpolation. In the next section, we show the same convex-program structure extends beyond the functions-of-means setting.

\subsection{A Broader Class of Estimands} \label{sec:general}

While the mean functionals discussed previously cover many applications including the log odds ratio, complex tasks may require estimating parameters defined implicitly through estimating equations. We now expand our framework to this broader class of estimators. Consider a target estimand $\psi = \Phi(P)$, a functional of the data distribution $P$, and a corresponding estimator $\hat{\psi}$ which admits an efficient influence function (EIF) of the general form:
\begin{equation} \label{eq:EIF}
    \phi(X,Y, \xi) =  \frac{\xi}{\pi(X)}\zeta(X,Y) + h(X) - \psi,
\end{equation}
where $\xi_i \in \{0,1\}$ indicates whether the label $Y_i$ is observed; in particular, $\xi_i \sim \Bern(\pi(X_i))$. We denote the propensity score---our labeling policy---as $\pi(x) = \P(\xi_i = 1 \mid X_i = x)$. The remaining terms satisfy standard influence function properties: $\zeta(X,Y)$ is a mean-zero residual term such that $\E[\zeta(X,Y) \mid X] = 0$, and $h(X)$ is an unbiased proxy for the parameter, satisfying $\E[h(X)] = \psi$. We still assume the labels are missing at random; that is, $\xi \indep Y \mid X$. Under this structure, the asymptotic variance of the estimator $\hat{\psi}$ is given by $$V[\phi] = \E[\frac{c(X)}{\pi_\beta(X)}] + V[h(X)], \; \text{where } c(X) = \E[\zeta(X,Y)^2 \mid X].$$
The first term matches the form of the asymptotic variance derived for functions of means in Eq.~\eqref{eq:mean_asyvar}. Since the second term $V[h(X)]$ is constant with respect to the labeling policy $\pi$, the convex optimization framework developed in Section \ref{sec:optim} is directly applicable. This allows us to find optimal labeling policies for a vast array of statistical targets beyond simple means.

With the proposed EIF, we now discuss strategies for constructing an estimator $\hat{\psi}$ that is asymptotically linear\footnote{Here, we define an estimator $\hat{\psi}$ as asymptotically linear if it admits a first-order expansion $$\sqrt{n}(\hat{\psi} - \psi) = \frac{1}{\sqrt{n}}\sum_i \phi(O_i) + o_p(1),$$ for $\phi$ a mean-zero function with finite variance, which is known as the influence function. \label{footnote:asymptotically_linear}} and admits $\phi$ as its EIF.  We let $\hat{\zeta}$ and $\hat{h}$ denote estimators of the nuisance functions defined above, which can be obtained using pilot data or a burn-in sample in the sequential setting. For an asymptotically linear estimator $\hat{\psi}$ for a parameter $\psi$, we can view the EIF $\phi(X,Y)$ as a measure of how much a data point $(X,Y)$ affects $\hat{\psi}$. 

\paragraph{One-step estimation}
The most direct approach is the one-step estimator \cite{bickel, vdv}. This procedure corrects the bias of an initial plug-in estimate by adding the empirical mean of the estimated influence function:
$$\hat{\psi} = \frac{1}{n}\sum_{i=1}^n \bigg[\hat{h}(X_i) + \frac{\xi_i}{\pi(X_i)} \hat{\zeta}(X_i, Y_i)\bigg].$$
This formulation generalizes the mean estimator proposed earlier in Eq.~\eqref{eq:meanest}. Specifically, in the case of mean estimation where $\zeta(X,Y) = Y - f(X)$ and $h(X) = f(X)$, this recovers the exact form of our primary estimator. Structurally, it maintains the ``plug-in + residual'' architecture in existing prediction-powered and active inference literature \cite{ppi++, active}. The corresponding one-step estimator has approximate asymptotic variance $\Var(\phi(X, \xi, \xi Y))/n$ as shown in \cite{vdv}  (estimated in practice by the empirical variance of the estimated influence function divided by $n$), validating the use of our variance-minimization objective.

\paragraph{Targeted Maximum Likelihood Estimation (TMLE)} {While the one-step estimator is asymptotically efficient, it is not a substitution estimator: it forms a bias-corrected estimate by adding an influence-function term to an initial plug-in estimate. Consequently, it does not automatically enforce parameter constraints in finite samples (e.g. probability may fall outside $[0,1]$). TMLE instead targets the parameter by updating the nuisance fit within a model that preserves the natural parameter space and thus respects these constraints \cite{tmle, tmle2}.}

Starting from initial nuisance estimates (e.g. an initial $\hat{h}$ and any additional components needed to evaluate the EIF), TMLE defines a parametric fluctuation submodel $\{h_{\varepsilon} : \varepsilon \in \R\}$ that passes through $\hat{h}$ at $\varepsilon = 0$ and is chosen so that moving along $\varepsilon$ changes the estimand in the EIF direction. We define the clever covariate $H_i$ to be the known weighting term that aligns the fluctuation with the EIF: in the missing-label mean case it is $H_i = \xi_i / \pi(X_i)$. 

The fluctuation parameter $\hat{\varepsilon}$ is then estimated so that the empirical EIF has mean approximately zero; that is, the targeted estimate solves the EIF estimating equation. Finally, the output TMLE is the plug-in estimator $\hat{\psi}_{\mathrm{TMLE}} := \Phi(\hat P^*)$, where $\hat P^*$ is the targeted estimate of the data distribution. This is done by replacing initial nuisance estimate $\hat h$ with the targeted version $\hat h^*$ along with other updating other nuisance components needed to evaluate the EIF. As a concrete example, if the target is the mean $\psi = \Phi(P) = \E_{P}[Y] = \E_P[h(X)]$ where $h(X) = \E_P[Y \mid X]$, then the nuisance parameter update produces $\hat h^*$. The corresponding TMLE is the plug-in estimator $\hat\psi = \frac{1}{n}\sum_{i=1}^n \hat h^*(X_i)$.

For binary outcomes, the targeting step can be implemented with a logistic fluctuation model using the clever covariate $H_i$ and inverse-propensity weights $\xi_i/\pi(X_i)$ with offset $\mathrm{logit}(\hat h(X_i))$, so only a low-dimensional parameter is estimated. Under standard regularity conditions, the TMLE is asymptotically linear with influence function $\phi$, and thus has asymptotic variance $\Var(\phi(X, \xi, \xi Y))/n$ \cite{vdv}. This provides an alternative strategy for constructing estimators with the same variance decomposition as the one-step estimator, while preserving the parameter-space constraints in finite samples. Details on the TMLE algorithm are provided in Appendix~\ref{app:TMLE}.

\subsection{Examples}\label{sec:examples}
This framework accommodates a diverse set of estimands, notably including non-M estimators. We provide distinct examples below to illustrate the flexibility of the influence function form. The corresponding residual term $\zeta(X, Y)$ and average conditional squared residual $c(X) = \E[\zeta(X,Y)^2 \mid X]$ are given in Table~\ref{tab:EIF_form}. Detailed derivations are provided in Appendix~\ref{app:EIF}.

\paragraph{Example 1 (Linear functional).} We consider the functional $\psi = \E[h(X)Y]$. This arises in a number of applications including subgroup mean estimation, where $h(X) = 1\{X \in G\}$. In this setting, the conditional variance $c(X)$ represents the heterogeneity of the outcome within the subgroup. This can also be generalized to the high-dimensional case where $Y, h(X) \in \R^d$ for $d > 1$ and $\psi = \E[h(X)^\top Y]$.

\paragraph{Example 2 (Regression coefficient).}  {Let $\psi = \gamma_j$ be the $j$th coordinate of the population least squares coefficient $\gamma$, defined by $\E[X(Y - X^\top\gamma)] = 0$; that is, $X^\top\gamma$ is the best linear predictor of $Y$ from $X$ in squared error loss. We define the population covariance matrix $\Sigma = \E[XX^\top]$ and weight $w_j(X) = (\Sigma^{-1}X)_j$. The residual $Y - X^\top\gamma$ is orthogonal to the span of $X$ in the population, but the conditional mean $\E[Y - X^\top\gamma]$ is not necessarily zero unless the linear model is correctly specified. If the linear model is correctly specified, then $m(X) := \E[Y \mid X] = X^\top\gamma$, so the residual term and conditional second moment in Table~\ref{tab:EIF_form} are $c(X) = w_j(X)^2\E\!\left[(Y-X^\top\gamma)^2\mid X\right]$ and $\zeta(X,Y) = w_j(X)(Y - X^\top \gamma)$.}

\paragraph{Example 3 (U-statistic).} Let $\psi = \E[k(Z, Z')]$ where $k(\cdot, \cdot)$ is a symmetric kernel function. We derive the influence function using the Hajek projection principle \cite{hajeknotes, vdv}. This entails defining the conditional expectation $\tilde{\psi}(x) = \E[k(z, Z') \mid Z = z]$, a first-order projection of $\psi$. Then, the corresponding residual term $\zeta(X,Y)$ is the centered (mean zero) version of this projection.

\paragraph{Example 4 (Kendall's tau).} For the final example, we consider a specific U-statistic: Kendall's rank correlation coefficient (Kendall's $\tau$), which is a non-parametric measure of correlation with $Z = (X,Y)$ and $k(Z, Z') = k((X,Y), (X', Y')) = \text{sign}(Y-Y')\text{sign}(X-X')$. When $Y \in \{0,1\}$ (binary outcomes), the projection $\tilde{\psi}(X,Y)$ simplifies significantly such that $c(X) = \Var(\tilde{\psi}(X,Y) \mid X)$ can be expressed in terms of the class-conditional distributions, $p(X) = \P(Y = 1 \mid X)$, $F_0(t) = \P(X \leq t \mid Y = 0)$ and $F_1(t) = \P(X \leq t \mid Y = 1)$, and the class proportions $p_1 = \P(Y = 1), p_0 = \P(Y = 0)$. 

\begin{table}[t]
\caption{Residual term $\zeta(X,Y)$ and conditional second moment $c(X)=\E[\zeta(X,Y)^2\mid X]$ for examples in Section~\ref{sec:examples}.}\vspace{0.5em}
\label{tab:EIF_form}
\centering
\small
\setlength{\tabcolsep}{6pt}
\renewcommand{\arraystretch}{1.15}

\begin{tabular}{lcc}
\toprule
\textbf{Example} & \textbf{$c(X)$} & \textbf{$\zeta(X,Y)$} \\
\midrule
Linear functional 
& $h(X)^2\,\Var(Y\mid X)$
& $h(X)\bigl(Y-\E[Y\mid X]\bigr)$ \\[2pt]

Regression coefficient 
& $w_j(X)^2\,\Var(Y \mid X)$ & $w_j(X)\left(Y - \E[Y \mid X]\right)$ \\[2pt]

U-statistic 
& $\Var\!\bigl(\tilde{\psi}(X,Y)\mid X\bigr)$
& $\tilde{\psi}(X,Y)-\E\!\bigl[\tilde{\psi}(X,Y)\mid X\bigr]$ \\[2pt]

Kendall's $\tau$ 
& $p(X)\{1-p(X)\}\Big(1-2\big[p_0F_0(X)+p_1F_1(X)\big]\Big)^2$
& $\tilde{\psi}(X,Y)-\E\!\bigl[\tilde{\psi}(X,Y)\mid X\bigr]$ \\
\bottomrule
\end{tabular}
\end{table}
\vspace{.1in}

Examples 1 and 2 correspond to familiar M-estimators, while Examples 3 and 4 involve U-statistics. Our framework, however, is not limited to these two classes. For instance, the average treatment effect (ATE) in a causal inference setting has full-data efficient influence function
$$\phi_{\mathrm{ATE,full}}(X,A,Y) = \frac{A}{e(X)}\{Y - m_1(X)\} - \frac{1-A}{1-e(X)}\{Y - m_0(X)\} + m_1(X) - m_0(X) - \psi,$$
where $Y(a)$ denotes the potential outcome under treatment $a\in\{0,1\}$, $Y=Y(A)$ is the observed outcome, $e(X)=\P(A=1\mid X)$ is the propensity score, and $m_a(X)=\E[Y\mid A=a,X]$. 

In our missing-label model, we allow the querying rule to depend on both features and treatment, with $\xi\in\{0,1\}$ satisfying $\P(\xi=1\mid X,A)=\pi(X,A)$ and $\xi \perp Y \mid (X,A)$. The corresponding observed-data EIF can be written as
$$\phi_{\mathrm{ATE}}(X,A,\xi,\xi Y) = \frac{\xi}{\pi(X,A)}\left[ \frac{A}{e(X)}\{Y - m_1(X)\} - \frac{1-A}{1-e(X)}\{Y - m_0(X)\} \right] + m_1(X) - m_0(X) - \psi,$$ where $\bar\phi_{\mathrm{ATE}}(X,A)=\E[\phi_{\mathrm{ATE,full}}(X,A,Y)\mid X,A]
= m_1(X)-m_0(X)-\psi$. The details for this example are again found in Appendix~\ref{app:EIF}. These examples illustrate that \method~results apply to a broad range
of asymptotically linear estimators beyond standard M-estimators.

\section{End-to-end inference with optimal labeling policy} \label{sec:inference}
Section~\ref{sec:optim_label} focused on designing a variance-minimizing labeling policy. In this section, we integrate that policy into a complete inference pipeline—specifying the estimator, confidence interval construction, and the algorithmic workflow in both batch and sequential modes, as depicted in Figure~\ref{fig:method}. We first establish asymptotic normality under a learned policy in the batch setting, then introduce power tuning and cross-fitting for improved efficiency and robustness, and finally adapt the procedure to the online setting where both the predictor and policy are updated over time.

\subsection{Batch inference with \method} \label{sec:batch}

To streamline the presentation of statistical validity, we state the main CLT and its corollary, which guide construction of confidence intervals; the intermediate propositions and all proofs are deferred to Appendix \ref{app:CLT}. 

We focus on the context of odds ratio estimation, where the data consists of two disjoint groups ($Z_i \in \{0,1\}$). All unlabeled data points $X_1, \dots, X_n$ are observed at the start. The optimized labeling policy for estimating each group mean $\mu_k = \E[Y \mid Z = k]$ is denoted by $\pi^{(k)}_{\bbeta_k}(\cdot)$, with the corresponding estimate $\hat{\mu}_k^\pi$ constructed according to Eq.~\eqref{eq:meanest}. In practice, we solve the convex optimization problem introduced in Eq.~\eqref{eq:optim_full} using sample-level quantities to obtain the estimated optimal spline coefficients $\hat{\bbeta}_k$. The resulting global policy $\pi(X_i)$ selects samples according to the group membership: 
\begin{equation} \label{eq:policy_OR}
    \pi(X_i) = 1\{i \in G_1\}\pi^{(1)}_{\hat{\bbeta}_1}(X_i) + 1\{i \in G_0\}\pi^{(0)}_{\hat{\bbeta}_0}(X_i),
\end{equation}
and we accordingly collect labels $\xi_i \sim \Bern(\pi(X_i))$ to form the observed data $O_i = (X_i, \xi_i, \xi_iY_i)$. The resulting log odds ratio estimate follows Eq.~\eqref{eq:estimator}. 
\begin{theorem}[CLT for batch inference on mean estimation] \label{thm:batch_mean} 
Under Assumptions~\ref{ass:basic_sampling}--\ref{ass:emp_process} in Appendix~\ref{app:regularity}, the estimator $\hat{\bmu}^{\pi}$ constructed using the learned policy $\pi_{\hat{\bbeta}}$ satisfies $$\sqrt{n}(\hat{\bmu}^\pi - \bmu)= \sqrt{n}(\hat{\bmu}^{\hat{\boldsymbol{\beta}}} - \bmu) \toD \N(0, \Sigma_*),$$ where the exact form of $\Sigma_*$ can be found in Appendix \ref{app:CLT}.
\end{theorem}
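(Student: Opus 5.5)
Our plan is to reduce the claim to a CLT for a fixed (non-random) policy plus a stochastic-equicontinuity argument showing that plugging in the learned coefficients $\hat{\bbeta}$ costs only $o_p(n^{-1/2})$. Throughout we assume, as part of Assumptions~\ref{ass:basic_sampling}--\ref{ass:emp_process}, three things. First, $\hat{\bbeta} \toP \bbeta_*$ for some population minimizer $\bbeta_*$ of the constrained program~\eqref{eq:optim_full}. Second, policies in the spline class are bounded below, $\pi_{\bbeta}(x) \ge \pi_{\min} > 0$, uniformly over a neighbourhood of $\bbeta_*$. Third, $Y$ and $f$ have finite second (slightly higher) moments. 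We also assume that, conditional on $X_{1:n}$ and $\hat{\bbeta}$, the labels $\xi_i \sim \Bern(\pi_{\hat{\bbeta}}(X_i))$ are drawn independently.

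For fixed $\bbeta$, write the centred summand for group $k$ as
\[
\phi_{k,\bbeta}(O_i) = \frac{1\{Z_i=k\}}{p_k}\Big[f(X_i) + \frac{\xi_i}{\pi^{(k)}_{\bbeta}(X_i)}(Y_i - f(X_i)) - \mu_k\Big].
\]
The random denominator $n_k$ in Eq.~\eqref{eq:subgpmean} is handled by the usual ratio or Delta argument: $n_k/n \to p_k$, and the term involving $n_k/n - p_k$ is absorbed because the bracket has mean zero.

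\textbf{Step 1 (fixed policy).} Since $\E[\xi \mid X] = \pi_{\bbeta}(X)$ and $\xi \indep Y \mid X$, each $\phi_{k,\bbeta}$ has mean zero. The multivariate Lindeberg--L\'evy CLT then gives $\sqrt n(\hat{\bmu}^{\bbeta} - \bmu) \toD \N(0,\Sigma(\bbeta))$. The covariance has entries
\[
\Sigma_{kk}(\bbeta) = p_k^{-1}\Var(Y \mid Z=k) + p_k^{-2}\,\E\Big[1\{Z=k\}\Big(\tfrac{1}{\pi^{(k)}_{\bbeta}} - 1\Big)(Y-f)^2\Big].
\]
The off-diagonal entries vanish because the groups are disjoint. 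We set $\Sigma_* := \Sigma(\bbeta_*)$.

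\textbf{Step 2 (plug-in of the learned policy).} We decompose
\[
\sqrt n(\hat{\bmu}^{\hat{\bbeta}} - \bmu) = \G_n \phi_{\bbeta_*} + \G_n(\phi_{\hat{\bbeta}} - \phi_{\bbeta_*}) + \sqrt n\, P\phi_{\hat{\bbeta}},
\]
where $\G_n = \sqrt n(\P_n - P)$. The third term is identically zero: for every fixed $\bbeta$ we have $P\phi_{\bbeta} = 0$, so the bias term vanishes no matter which $\hat{\bbeta}$ is chosen. This is the key structural fact, because no rate for $\hat{\bbeta}$ is needed.

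For the second term we rely on Donsker-ness. The class $\{\phi_{\bbeta} : \|\bbeta - \bbeta_*\| \le \delta\}$ is a finite-dimensional parametric class. It is Lipschitz in $\bbeta$, since the spline basis is bounded on the compact range of $u$ and $\pi_{\bbeta} \ge \pi_{\min}$, with envelope $C\,|Y - f(X)|$. It is therefore Donsker. Combined with $L_2(P)$-continuity $\|\phi_{\hat{\bbeta}} - \phi_{\bbeta_*}\|_{L_2} \to 0$, this gives $\G_n(\phi_{\hat{\bbeta}} - \phi_{\bbeta_*}) = o_p(1)$ by asymptotic equicontinuity (van der Vaart, Lemma 19.24). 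Slutsky's lemma then yields the claim.

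\textbf{Main obstacle.} The hard point is that $\hat{\bbeta}$ is computed from the same $X_1,\dots,X_n$, and the labels $\xi_i$ are drawn after $\hat{\bbeta}$ is fixed. As a result, the $O_i$ are not i.i.d.\ draws from a single $\bbeta$-indexed law, so Lemma 19.24 does not apply verbatim. We would handle this in two parts.

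For the label-noise piece, we condition on $X_{1:n}$. Given $X_{1:n}$, the term
\[
n^{-1/2}\sum_i \big(\xi_i/\pi_{\hat{\bbeta}}(X_i) - 1\big)(Y_i - f(X_i))
\]
is a sum of independent mean-zero terms. We apply a conditional Lindeberg CLT, which has variance converging to the $\Sigma_*$ contribution because $\hat{\bbeta} \toP \bbeta_*$ and the lower bound on $\pi$ holds. The $f(X_i)$ and $Y_i$ piece does not depend on $\bbeta$ and satisfies the ordinary CLT. The two pieces are asymptotically independent, which we argue via characteristic functions after conditioning.

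If this coupling proves delicate, a fallback is sample splitting or cross-fitting: fit $\hat{\bbeta}$ on one fold, apply the fixed-policy CLT on the other, and average across folds. Under this scheme the equicontinuity step reduces to a conditional Chebyshev bound.
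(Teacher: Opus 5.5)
Your Step~2 does not work as written, and the problem is more basic than the non-i.i.d.\ issue you flag. For the bias term to vanish you need $P\phi_{\bbeta}=0$ for \emph{every} $\bbeta$. That forces the label indicator itself to depend on $\bbeta$. With a single $\xi\mid X\sim\Bern(\pi_{\bbeta_0}(X))$ one gets $P\phi_{\bbeta}=p_k^{-1}\E[1\{Z=k\}(\pi_{\bbeta_0}(X)/\pi_{\bbeta}(X)-1)(\E[Y\mid X]-f(X))]$, which is generally nonzero.

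The paper supplies exactly the missing device, a coupling. It sets $O_i=(X_i,Y_i,V_i)$ with $V_i\sim\Unif(0,1)$ independent of everything, defines $\xi_i(\bbeta)=1\{V_i\le\pi_{\bbeta}(X_i)\}$, and takes the realized labels to be $\xi_i(\hat\bbeta)$. Then the $O_i$ are i.i.d., and the coupled summand $m_{k,\bbeta}(O)=1\{Z=k\}[f(X)+\xi(\bbeta)(Y-f(X))/\pi_{\bbeta}(X)]$ satisfies $Pm_{k,\bbeta}=\mu_k\P(Z=k)$ for all $\bbeta$. The price is that $\bbeta\mapsto m_{k,\bbeta}$ is no longer Lipschitz, because the indicator jumps. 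So your Lipschitz-class Donsker argument is unavailable in the one setting where the bias term actually vanishes.

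The paper instead gets Donsker-ness from the VC property of $\{1\{\log v+h(u(x))^\top\bbeta\le 0\}:\bbeta\in\mathcal B\}$. It gets $L_2(P)$-continuity from $\P(\xi(\bbeta)\neq\xi(\bbeta^*)\mid X)=|\pi_{\bbeta}(X)-\pi_{\bbeta^*}(X)|$. Consistency of $\hat\bbeta$ (derived from (B1)--(B4) in Lemma~\ref{lem:beta_consistency_general}, not assumed) plus asymptotic equicontinuity then gives $\G_n(m_{k,\hat\bbeta}-m_{k,\bbeta^*})=o_p(1)$. That is the paper's proof, i.e.\ the repaired version of your Step~2.

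Your ``main obstacle'' argument, by contrast, is a correct and genuinely different route that makes Step~2 unnecessary. You should condition on $\mathcal D_n=\sigma(X_{1:n},Y_{1:n},\text{pilot data})$ rather than on $X_{1:n}$ alone. Given only $X_{1:n}$, the $\bbeta$-free piece is still random and is tied to the label-noise piece through $Y_i$. With $\mathcal D_n$ the argument goes through:
\begin{itemize}
\item Since $\hat\bbeta$ is computed before any label is queried, it is $\mathcal D_n$-measurable.
\item $A_n=n^{-1/2}\sum_i 1\{Z_i=k\}(Y_i-\mu_k)$ is $\mathcal D_n$-measurable and obeys the ordinary CLT.
\item Given $\mathcal D_n$, $B_n=n^{-1/2}\sum_i 1\{Z_i=k\}(\xi_i/\pi_{\hat\bbeta}(X_i)-1)(Y_i-f(X_i))$ is a sum of independent mean-zero terms bounded by $|Y_i-f(X_i)|/(\underline\pi\sqrt n)$, where $\underline\pi$ is your $\pi_{\min}$.
\item The conditional variance of $B_n$ tends in probability to $\sigma_B^2:=\E[1\{Z=k\}(1/\pi_{\bbeta^*}(X)-1)(Y-f(X))^2]$, using $\sup_x|1/\pi_{\hat\bbeta}(x)-1/\pi_{\bbeta^*}(x)|\lesssim\|\hat\bbeta-\bbeta^*\|$.
\end{itemize}
Conditional Lindeberg--Feller then gives $\E[e^{itB_n}\mid\mathcal D_n]\toP e^{-t^2\sigma_B^2/2}$. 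Writing $\E[e^{it(A_n+B_n)}]=\E[e^{itA_n}\E(e^{itB_n}\mid\mathcal D_n)]$, bounded convergence and Cram\'er--Wold finish the proof.

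This route needs only consistency, positivity, and the Lipschitz bound on $1/\pi_{\bbeta}$. It needs no entropy or VC verification and no Assumption~\ref{ass:emp_process}.

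The same conditioning shows your sample-splitting fallback, which would change the estimator anyway, is unnecessary. It even trivializes the paper's remainder term. Under the coupling, $\G_n(m_{k,\hat\bbeta}-m_{k,\bbeta^*})$ has conditional mean zero given $\mathcal D_n$. Its conditional variance is $n^{-1}\sum_i 1\{Z_i=k\}(Y_i-f(X_i))^2\,|1/\pi_{\hat\bbeta}(X_i)-1/\pi_{\bbeta^*}(X_i)|=o_p(1)$, so a conditional Chebyshev bound replaces the Donsker step.

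The paper's route packages the argument in the standard oracle/parameter-cost/remainder form of Eq.~\eqref{eq:expansion}, at the cost of a VC verification. Yours is more elementary because it exploits directly that, in the batch design, $\hat\bbeta$ is fixed before the labels are drawn.
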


A central technical challenge is that the learned policy $\pi_{\hat \bbeta}$ depends on the entire covariate sample $\{X_i\}_{i=1}^n$, so the labeled outcomes are no longer sampled with a fixed, independent propensity. In Zrnic and Cand\`{e}s \cite{active}, the policy parameter $\eta$ is assumed to lie on a discrete grid $\mathcal{H}$. This discreteness allows the labeling indicator $\xi_i$ to be coupled with a Bernoulli variable depending on a fixed oracle $\eta^*$, effectively sidestepping the complex dependence structure.

In contrast, our framework requires optimizing $\bbeta$ over a continuous, multi-dimensional (fixed) space involving all data points $X_1, \dots, X_n$. Consequently, the learned policy $\pi_{\hat{\bbeta}}$ depends on the entire dataset, and the coupling argument is insufficient. We therefore combine an oracle-policy CLT, consistency of $\hat{\bbeta}$, and a stochastic decomposition/equicontinuity argument 
(Appendix~\ref{app:equicontinuity}), adapting techniques from empirical process theory \cite{epnotes, vdv2}. In particular, we decompose the error for a single mean $\hat{\mu}^\pi_k$ into an oracle term, a parameter-estimation term, and a remainder:
\begin{equation} \label{eq:expansion}
    \sqrt{n}(\hat{\mu}^\pi_k - \mu_k) = \underbrace{\sqrt{n}(\hat{\mu}^{\pi^*}_k - \mu_k)}_{\text{Oracle Term}} + \underbrace{\sqrt{n}(\psi(\hat{\bbeta}_k) - \psi(\bbeta^*_k))}_{\text{Parameter estimation cost}} + \underbrace{B_n}_{\text{Remainder}}.
\end{equation}
Asymptotic normality is then shown by the following: the oracle term is asymptotically linear\footref{footnote:asymptotically_linear} by the oracle CLT, we can show the parameter estimation term is exactly zero in this case, and the remainder $B_n$ is shown to be asymptotically negligible via an asymptotic equicontinuity argument. 

This proof strategy offers an advantage in generality. Unlike grid-based approaches, it relies on the consistency of the continuous policy parameter together with stochastic equicontintuity, rather than on selection over a finite grid. This makes the argument directly applicable to multi-dimensional continuous hyperparameters, such as the spline coefficients.

This decomposition also yields the joint normality of the odds ratio components, with diagonal covariance due to the disjoint groups. Applying the Delta Method to the log-odds transformation gives the practical confidence intervals used in our algorithm.

\begin{corollary}[CLT for batch inference  on odds ratio]\label{corr:batch_odds}
Given the joint normality of the subgroup means, $$\sqrt{n}\bigg(\begin{bmatrix} \hat{\mu}_1^\pi \\ \hat{\mu}_0^\pi \end{bmatrix} - \begin{bmatrix} \mu_1 \\ \mu_0 \end{bmatrix}\bigg) \toD \N\bigg(\mathbf{0}, \begin{bmatrix} V_1 & 0 \\ 0 & V_0 \end{bmatrix}\bigg),$$ the log odds ratio estimator $\hat{\theta}^\pi$ satisfies
$$\sqrt{n}(\hat{\theta}^\pi - \theta) \toD \N(0,\, V^*), \; \text{with } V^* := \frac{V_1}{\mu_1^2(1 - \mu_1)^2} + \frac{V_0}{\mu_0^2(1 - \mu_0)^2}.$$
Taking a consistent estimate $\hat{V} \toP V^*$, a valid $1 - \alpha$ confidence interval is given by
$$\hat{\theta}^\pi \pm z_{1 - \alpha/2}\sqrt{\frac{\hat{V}}{n}},$$
where $z_{1-\alpha/2}$ is the $1 - \alpha/2$ quantile of a standard normal random variable.
\end{corollary}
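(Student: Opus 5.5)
The proof is an application of the multivariate Delta Method to the joint CLT for the subgroup means, followed by Slutsky's theorem for the interval. By Theorem~\ref{thm:batch_mean}, $\sqrt{n}(\hat{\bmu}^\pi - \bmu) \toD \N(0,\Sigma_*)$ with $\bmu = (\mu_1,\mu_0)$.

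\textbf{Step 1: the covariance is diagonal.} I would first check that $\Sigma_*$ is diagonal, using the disjointness of $G_1$ and $G_0$. By the decomposition in Eq.~\eqref{eq:expansion}, each $\sqrt{n}(\hat\mu_k^\pi-\mu_k)$ is asymptotically linear with influence function
\[
\phi_k(O)=\frac{1\{Z=k\}}{p_k}\Big[f(X)+\frac{\xi}{\pi^{(k)}_{\bbeta^*_k}(X)}(Y-f(X))-\mu_k\Big].
\]
Here the subgroup normalization $1/n_k$ is replaced by $1/(n p_k)$, at a cost of an $o_p(1)$ term because $n_k/n\toP p_k$ and the bracket has mean zero within group $k$. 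Since $1\{Z=1\}1\{Z=0\}=0$, we get $\E[\phi_1\phi_0]=0$, so the cross-covariance vanishes. Hence $\Sigma_*=\mathrm{diag}(V_1,V_0)$, which is the hypothesis stated in the corollary.

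\textbf{Step 2: Delta Method.} Set $g(x,y)=\log\frac{x}{1-x}-\log\frac{y}{1-y}$. This map is continuously differentiable on $(0,1)^2$, and $\mu_1,\mu_0\in(0,1)$ by the regularity assumptions. Its gradient at $\bmu$ is
\[
\nabla g(\bmu)=\Big(\tfrac{1}{\mu_1(1-\mu_1)},\,-\tfrac{1}{\mu_0(1-\mu_0)}\Big).
\]
The Delta Method then gives $\sqrt{n}(\hat\theta^\pi-\theta)\toD\N(0,\nabla g^\top\Sigma_*\nabla g)$. Because $\Sigma_*$ is diagonal, this variance equals $V^*$. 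One technicality is that $\hat\mu_k^\pi$ may fall outside $(0,1)$ in finite samples. Consistency makes this an event of vanishing probability, so defining $\hat\theta^\pi$ arbitrarily there does not affect the limit.

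\textbf{Step 3: coverage.} Given $\hat V\toP V^*$ with $V^*>0$, Slutsky's theorem gives
\[
\sqrt{n}(\hat\theta^\pi-\theta)/\sqrt{\hat V}\toD\N(0,1).
\]
Therefore $\P\big(\theta\in\hat\theta^\pi\pm z_{1-\alpha/2}\sqrt{\hat V/n}\big)\to 1-\alpha$.

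\textbf{Step 4: consistency of $\hat V$ (main obstacle).} If $\hat V$ must be constructed rather than assumed, I would use plug-in estimates. For $V_k$, take the empirical second moment of the estimated influence function over group $k$, evaluated with $\pi_{\hat\bbeta}$. For the Jacobian factors, plug in $\hat\mu_k^\pi$. Consistency of the Jacobian factors follows from the continuous mapping theorem. Consistency of $\hat V_k$ needs a uniform law of large numbers over the spline-indexed class. Two facts give this: $\hat\bbeta\toP\bbeta^*$, and a Glivenko--Cantelli property of the class $\{\xi(Y-f)^2/\pi_{\bbeta}^2\}$ on a neighborhood of $\bbeta^*$. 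The latter requires propensities bounded away from zero uniformly on that neighborhood, which is the same condition underlying the equicontinuity argument of Appendix~\ref{app:equicontinuity}. This uniform-in-$\bbeta$ step is where I expect the real work to lie. The Delta Method part itself is routine.
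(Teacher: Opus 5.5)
Your proposal is correct and follows essentially the paper's route: joint asymptotic normality of $(\hat\mu_1^\pi,\hat\mu_0^\pi)$ with diagonal covariance because the influence terms are supported on disjoint groups ($g_1 g_0 = 0$, Step 6 of the batch CLT proof), then the Delta method with gradient $\bigl(1/\{\mu_1(1-\mu_1)\},\,-1/\{\mu_0(1-\mu_0)\}\bigr)$, then Slutsky for the Wald interval. Your Step 4 and your remarks on $V^*>0$ and on $\hat\mu_k^\pi\notin(0,1)$ go beyond the paper, which takes $\hat V\toP V^*$ as given and only suggests the plug-in influence-function variance; if you do carry out Step 4, state the uniform law of large numbers on the coupled class with $\xi(\bbeta)=1\{V\le\pi_{\bbeta}(X)\}$, because the realized $\xi_i$ are not i.i.d. once $\hat\bbeta$ depends on all the covariates.
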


In practice, a consistent estimate of $\hat{V}$ can be found by taking consistent estimates of $V_1$, $V_0$, $\mu_1$, and $\mu_0$ and using the standard plug-in estimator. In particular, $V_1$ and $V_0$ are each asymptotic variances for mean estimates, so they can be consistently estimated with the sample variance of the individual influence function contributions: for $k = 0$, $\hat{V}_0 = \frac{1}{n_0-1}\sum_{j \in G_0}(T_j - \bar{T}_0)^2$ where $T_j = f(X_j) + \frac{\xi_j}{\pi(X_j)}(Y - f(X_j))$ and thus $\bar{T}_0 \equiv \hat{\mu}_0^\pi$, and similarly for $k = 1$.

Algorithm \ref{alg:batch} summarizes the complete procedure for implementation. 

\begin{algorithm}[H]
\caption{Algorithm for batch active inference with optimal labeling}\label{alg:batch}
\begin{algorithmic}[1]
\State \textbf{Input:} Classifier $f$, budget $\budget$, unlabeled features $(X_1, \dots, X_n)$, error level $\alpha$.
\State Compute uncertainty scores $u(x_i) = f(x_i)(1 - f(x_i))$.
\State Solve convex program (Sec. \ref{sec:optim}) to find optimal policies $\pi^{(1)}$ and $\pi^{(0)}$.
\State \textbf{Labeling:} Sample $\xi_i \sim \Bern(\pi(X_i))$ and collect labels $\{Y_i: \xi_i = 1\}$.
\State \textbf{Estimation:} Compute means $\hat{\mu}_1^\pi, \hat{\mu}_0^\pi$ via Eq.~\eqref{eq:meanest}.
\State Return confidence interval per Corollary \ref{corr:batch_odds}, estimating $V_1, V_0$ via sample variance as described above.
\end{algorithmic}
\end{algorithm}

While the preceding analysis focused on the odds ratio, the theoretical machinery extends naturally to the broader class of estimands defined in Section~\ref{sec:general}. Because the labeling policy $\pi$ is optimized to minimize the variance component $\E[c(X) / \pi(X)]$, the same consistency and normality guarantees apply to any estimator $\hat{\psi}$ constructed via one-step estimation or TMLE, provided the influence function structure holds. A formal statement and proof is provided in Appendix \ref{app:generalCLT}. Practically, this allows us to construct valid confidence intervals for complex parameters using the same workflow introduced in Algorithm \ref{alg:batch}. Given a consistent estimator $\hat{\sigma}$---such as the empirical variance of the influence function contributions---for the limiting standard deviation $\sigma_*$, a valid symmetric $1 - \alpha$ interval is $[\hat{\psi} \pm z_{1 - \alpha/2} \frac{\hat{\sigma}}{\sqrt{n}}].$ 

\subsection{Enhancing efficiency via power tuning}\label{sec:tuning}
Optimizing the labeling policy targets the policy-dependent variance term $\E[c(X)/\pi(X)]$, but we can further reduce the variance by \textit{power tuning}~\cite{ppi++}, which introduces a one-dimensional control variate based on the predictor $f$. For a fixed policy $\pi$, power tuning preserves unbiasedness for all choices of weight $\lambda$ and chooses $\lambda^*$ which minimizes variance. To make things concrete, consider mean estimation under labeling policy $\pi$ and define the power-tuned estimator \begin{equation}\label{eq:mean_power}
    \hat{\mu}^{\pi,\lambda} = \frac{1}{n}\sum_{i=1}^n \left[\lambda f(X_i) + \frac{\xi_i}{\pi(X_i)}(Y_i - \lambda f(X_i))\right]. 
\end{equation}
This estimator remains unbiased for $\mu = \E[Y]$ for every $\lambda$. Conceptually, the weight $\lambda$ controls the reliance on the predictor $f$: if $\lambda = 0$, then $\hat{\mu}^{\pi, \lambda}$ reduces to the classical mean estimator using only labeled data while if $\lambda = 1$, then $f$ is well-aligned with $Y$ and $\hat{\mu}^{\pi, \lambda}$ is exactly the estimator introduced in Section~\ref{sec:contributions}. For a fixed policy $\pi$, it can be shown that the optimal $\lambda$ which minimizes the asymptotic variance is the usual ordinary least squares/control-variate slope given by a ratio of the covariance between the predictions and labels to the variance of the predictions: $$\lambda^* = \frac{\Cov(\tilde{f}(X), \tilde{Y}))}{\Var(\tilde{f}(X))}, \; \text{where } \tilde{f}(X) := \frac{1 - \pi(X)}{\pi(X)}f(X) \; \text{and} \; \tilde{Y} = \frac{\xi}{\pi(X)}Y,$$ as in \cite{ppi++}. In practice, we estimate $\lambda^*$ using the plug-in sample covariance and variance. See Appendix~\ref{app:add-on} for details and derivations in the log odds ratio and general EIF analogue. As shown in Figure~\ref{fig:method}(b), power tuning is implemented twice: $\lambda$ is first estimated on pilot/training data prior to label acquisition, and a second time after collecting labels for the final estimator.

\subsection{Cross-fitting to prevent overfitting}\label{sec:crossfitting}

Ideally, the predictive model $f$ and the labeling policy $\pi$ should be independent of the data used for final estimation. If $f$ is trained (or fine-tuned) on the same data used to compute $\hat{\mu}$, the estimator may suffer from overfitting bias. To mitigate this, we employ cross-fitting, adapting the procedure from \cite{cppi} to the \method~framework.

The general principle is to split the data into disjoint folds (e.g. $K = 2$). We use one fold to finetune (and/or train) the model and the other fold to construct an estimate. Then, the roles of the two folds are reversed to construct a second estimator. The final estimator is the weighted average of two fold-specific estimates. Algorithm~\ref{alg:crossfitting} in Appendix~\ref{app:algorithms} details this procedure. Since we begin with an entirely unlabeled dataset, we first run the optimization program (Section~\ref{sec:optim}) on the full dataset to determine the global labeling policy and explicitly collect labels before the splitting occurs.

\subsection{Sequential inference with \method} \label{sec:sequential}

{The batch setting assumes all unlabeled data are available upfront and that a single global labeling policy can be computed before collecting any labels. Many applications, however, benefit from \textit{sequential} inference, in the sense that labeling decisions are allowed to adapt to information revealed by previously queried outcomes. This includes truly online settings (e.g., clinical trials or streaming monitoring) where data arrive over time, but it also includes offline settings in which all unlabeled data is available at $t = 0$. Even then, it can be advantageous not to fix the predictor $f$ (and hence the uncertainty scores $u$ and the optimized policy $\pi$) a priori. }

{Specifically, as labeled observations accumulate, we can update the predictor to obtain $f_t$ with improved calibration and reduced residual variance, and recompute uncertainty scores $u_t$ that more accurately reflect the contribution $c(X)$ to the downstream variance objective. Updating the policy $\pi_t$ accordingly can allocate the remaining budget more efficiently than a policy computed from the coarse initial model.}

The full end-to-end process for sequentially obtained data is shown in Figure~\ref{fig:method}(c). We will follow the general procedure introduced by Cand\`{e}s and Zrnic~\cite{active}, with the key change of using the labeling policy introduced in Section~\ref{sec:optim_label} and iteratively updating the power tuning parameter $\lambda$ rather than only as a final step. This procedure allows for the finetuning of $f_t$ as more data are labeled, adapting to the observed responses $Y_i$. 

\begin{theorem}[Mean CLT for sequential inference]\label{thm:seq}
    Let $\Delta_t = f_t(X_t) + \frac{\xi_t}{\pi_t(X_t)}(Y_t - f_t(X_t))$ such that $\hat{\mu}^\pi = \frac{1}{t} \sum_{i=1}^n \Delta_t$. Suppose $\Delta_t$, which are martingale increments, satisfy the Lindeberg condition. Then, $$\sqrt{n}(\hat{\mu}^\pi - \mu) \toD \N(0, V^2).$$
\end{theorem}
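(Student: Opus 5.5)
The plan is to reduce the statement to the martingale central limit theorem (e.g.\ the Lindeberg--Feller CLT for martingale difference arrays, as in Hall and Heyde). First I fix the filtration. Let $\mathcal{F}_{t-1}$ be the $\sigma$-field generated by $(X_s,\xi_s,\xi_sY_s)_{s<t}$. This is everything used to build the predictor $f_t$ and the policy $\pi_t$, so both are $\mathcal{F}_{t-1}$-measurable. The fresh point $(X_t,Y_t)$ is drawn i.i.d.\ and is independent of $\mathcal{F}_{t-1}$. The label indicator satisfies $\xi_t\mid X_t,\mathcal{F}_{t-1}\sim\Bern(\pi_t(X_t))$ and $\xi_t \indep Y_t \mid (X_t,\mathcal{F}_{t-1})$. (I read the normalization in the statement as $\hat\mu^\pi=\frac1n\sum_{t=1}^n\Delta_t$.)

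Next I check that the centered increments $D_t:=\Delta_t-\mu$ are martingale differences. Conditioning first on $(X_t,\mathcal{F}_{t-1})$ and using $\E[\xi_t\mid X_t,\mathcal{F}_{t-1}]=\pi_t(X_t)$ together with missing at random gives
\begin{equation*}
\E[\Delta_t\mid X_t,\mathcal{F}_{t-1}] = f_t(X_t)+\E[Y_t-f_t(X_t)\mid X_t] = \E[Y_t\mid X_t].
\end{equation*}
Taking the expectation over $X_t$ then yields $\E[D_t\mid\mathcal{F}_{t-1}]=0$. This also requires $\pi_t>0$ wherever $Y\neq f_t$, which the exponential spline parametrization in Eq.~\eqref{eq:splineprobs} guarantees. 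Hence $\sqrt n(\hat\mu^\pi-\mu)=n^{-1/2}\sum_t D_t$ is a normalized martingale sum.

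Then I compute the conditional variance, i.e.\ the predictable quadratic variation. Conditionally on $\mathcal{F}_{t-1}$,
\begin{equation*}
\E[D_t^2\mid\mathcal{F}_{t-1}] = \E\Big[\frac{(1-\pi_t(X))}{\pi_t(X)}\,(Y-f_t(X))^2 + \big(Y-\mu\big)^2\Big]\Big|_{f_t,\pi_t} =: \sigma^2(f_t,\pi_t),
\end{equation*}
which is the same $\E[c(X)/\pi(X)]+C$ structure as Eq.~\eqref{eq:mean_asyvar}. The martingale CLT needs $\frac1n\sum_{t\le n}\sigma^2(f_t,\pi_t)\toP V^2$. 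I will obtain this by assuming, as is implicit in the statement's constant $V^2$, that $(f_t,\pi_t)\to(f_\infty,\pi_\infty)$ in a sense under which $\sigma^2(\cdot,\cdot)$ is continuous. Sufficient conditions are $L^2$ convergence of $f_t$ plus $\pi_t$ bounded below by a fixed $\pi_{\min}>0$, with uniformly bounded moments. Then $\sigma^2(f_t,\pi_t)\toP V^2:=\sigma^2(f_\infty,\pi_\infty)$, and Cesàro averaging transfers this to the average.

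The conditional Lindeberg condition $\frac1n\sum_t\E[D_t^2 1\{|D_t|>\epsilon\sqrt n\}\mid\mathcal{F}_{t-1}]\toP0$ is assumed in the statement. Alternatively, it follows from a uniform $2+\delta$ moment bound, using $|D_t|\le \pi_{\min}^{-1}|Y_t-f_t|+|f_t|+|\mu|$.

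The martingale CLT then gives $\sqrt n(\hat\mu^\pi-\mu)\toD\N(0,V^2)$. I expect the main obstacle to be the variance-stabilization step. The learned policies $\pi_t$ and predictors $f_t$ are data-dependent and need not converge without further assumptions. So the plan is to state convergence of $\sigma^2(f_t,\pi_t)$, or directly of the average conditional variance, as the operative hypothesis, and to note that when the $\hat\beta_t$ converge, the continuity of $\beta\mapsto\E[c(X)e^{h(U)^\top\beta}]$ delivers it.
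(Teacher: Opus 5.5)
Your proposal is correct and takes the same route as the paper, which simply invokes the martingale CLT argument of Zrnic and Cand\`es: the centered increments $\Delta_t-\mu$ are martingale differences, and an assumed convergent predictable variance process together with the Lindeberg condition gives the normal limit, which are exactly the hypotheses you make explicit. Your write-up adds details the paper leaves out, namely the conditional-mean check under missing-at-random and the explicit conditional variance $\E[(Y-\mu)^2]+\E[\tfrac{1-\pi_t(X)}{\pi_t(X)}(Y-f_t(X))^2]$. One small caveat: the exponential parametrization does not guarantee positivity of the realized clipped probabilities $\tilde\pi_t$ once the running budget is exhausted, so positivity must be assumed, as you effectively do with $\pi_{\min}$ and as the paper does in its sequential assumption.
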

\noindent The proof follows the same martingale CLT argument from \cite{active}. Once this CLT is established, Corollary~\ref{corr:batch_odds} can be invoked to construct confidence intervals. A corresponding extension beyond mean functionals is given in Appendix \ref{app:seq_general}.

As in the batch setting, we assume a pre-specified labeling budget $\budget$ and a target error level $\alpha \in (0,1)$. The sequential algorithm is conceptually similar to the batch procedure, but differs in several important aspects. First, data points are revealed one at a time (or $k$ at a time), so the labeling policy cannot be fixed in advance for all observed features. Second, one may begin without a predictive model $f$ (or initial model to be updated). In that case, we recommend labeling a small initial set of points (referred to as a burn-in sample) to train a starting model $f_0$.

Following the set-up of~\cite{active}, we also specify a fine-tuning batch size $B$. For example, when $B = 100$, the model is fine-tuned after every 100 newly labeled observations. To spread the labeling budget across all $n$ unlabeled points, we introduce an ``imaginary'' cumulative budget, $\budget^{(t)} = t\budget / n$, which represents the target expected number of labels used by step $t$, if they are spread evenly. We then define the remaining budget at step $t$ as $n_{\Delta}^{(t)} = \budget^{(t)} - n_{\mathrm{lab}}^{(t-1)}$, where $n_{\mathrm{lab}}^{(t)}$ denotes the number of labels collected up to and including step $t$.

Then, for a proposed labeling probability $\pi_t(X_t)$ at step $t$, we enforce the budget and probability constraints by clipping the adjusted probability to $[0,1]$ after applying the remaining-budget cap: $$\tilde{\pi}_t(X_t) = \Big[\min\{\pi_t(X_t), \, n_{\Delta}^{(t)}\}\Big]_{[0,1]}.$$ Because the model $f_t$ is adaptively fine-tuned as labels are collected, the labeling policy is updated accordingly. Algorithm~\ref{alg:seq} in Appendix~\ref{app:algorithms}, adapted from Algorithm~2 of~\cite{active}, gives a step-by-step outline of the resulting procedure for unlabeled features $X_1, \dots, X_n$ considered sequentially. See Figure~\ref{fig:method} for the high-level workflow. 

\section{Real data experiments} \label{sec:experiments}

At a fixed labeling budget $\budget$, we evaluate in four real data experiments when adaptive labeling and prediction-assisted inference yield tighter confidence intervals and improved finite-sample performance relative to the following methods.

\begin{itemize}
    \item \textit{Classical:} estimator constructed using only $\budget$ fully observed samples $(X_i, Y_i)$, with labels collected uniformly at random (i.e., $\pi(x)=\budget/n$).
    \item \textit{Uniform:} prediction-assisted estimator and uncertainty quantification under uniform-at-random labeling, $\pi(x)=\budget/n$.
    \item \textit{Uncertainty-proportional:} prediction-assisted inference with adaptive labeling proportional to an uncertainty score, $\pi(x)\propto u(x)$ (with $u(x)$ chosen based on the estimand and outcome type). This is mixed with the uniform sampling rule so $\pi(x) \leftarrow \tau \pi(x) + (1 - \tau)\budget/n$. We default to sampling proportional to group size and using $\tau = 0.5$ as the mixing proportion, as implemented in Zrnic and Cand\`{e}s \cite{active}, unless otherwise stated.
    \item \textit{Spline-parametrized (\method):} labeling probabilities of the form $\pi(x)=\exp\{\hat\bbeta^\top \mathbf{h}(x)\}$ as in Eq.~\eqref{eq:splineprobs}, where $\hat\bbeta$ solves Eq.~\eqref{eq:optim_full} and $\mathbf{h}(x)$ is a degree-$B$ spline basis expansion of $u(x)$.
\end{itemize}
All approaches other than \textit{Classical} leverage auxiliary unlabeled data to augment the $\budget$ labeled samples. 

\subsection{Performance evaluation} \label{sec:eval}
We measure and compare the performance of the methods on two metrics: coverage and effective sample size (the gain in efficiency by using additional data beyond fully labeled observations). We will use a nominal level $\alpha = 0.1$ unless otherwise noted. To formally define the effective sample size (ESS) $n_{\mathrm{eff}}$, we first consider the classical baseline defined above, where labels are uniformly sampled and no predictions are used. Its effective sample size is $\budget$. For the other labeling policies considered above, the effective sample size is $n_{\mathrm{eff}}$ if the estimator achieves the same variance as the baseline estimator with budget $n_{\mathrm{eff}}$. As an example, if an estimator with corresponding $\budget = 500$ achieves the same variance as the baseline estimator constructed using twice the budgeted labels, then the estimator has $n_{\mathrm{eff}} = 1000$. Equivalently, ESS is a ratio of asymptotic variances, which is precisely the asymptotic relative efficiency; as such, a larger $n_{\mathrm{eff}}$ indicates a more efficient estimator. Details for ESS calculations are provided in Appendix~\ref{app:ESS}.

Since there is randomness in the method, as which data points are labeled is drawn from Bernoulli random variables, we will average over $n_{\text{trials}}$ when presenting results. For each trial, we decide which $\budget$ data points to label, computing the corresponding estimator, asymptotic variance, and coverage based on true odds ratio $\theta^*$. Coverage for a setting is therefore reported as the average coverage over $n_{\text{trials}}$, which should achieve at least the nominal $90\%$ level. In simulation, the true parameter is known so coverage is computed with respect to the true $\theta^*$ which is known by design. In real data experiments, however, the underlying population $\theta^*$ is not known. Taking a finite-sample perspective, we treat the sample as the whole population and compute the underlying $\theta^*$ if all labels were known. Then, the labels are masked for the labeling and uncertainty quantification procedure. 

\subsection{Finite-population calibration for coverage evaluation}\label{sec:overcoverage}

In real-data experiments, the scientific estimand is the corresponding superpopulation parameter $\theta(P)$. However, since this is an unknown quantity, empirical coverage must be evaluated against an observable proxy: the parameter $\theta_N = \theta(P_N)$ computed on the fixed evaluation population. This creates a mismatch between the inferential target used in the asymptotic theory and the target used to measure coverage
in finite-sample experiments. Conditional on the evaluation population, $\theta_N$ is fixed, and the only remaining randomness comes from the label-query indicators. As a result, intervals calibrated to superpopulation variation may be conservative when assessed against $\theta_N$, especially when the queried fraction is non-negligible. This is analogous to the finite-population correction in survey sampling: as the sample fraction increases, variance conditional on the realized finite population can be smaller than the corresponding superpopulation variance.

We therefore distinguish between two intervals in the real-data experiments. The usual Wald interval is the appropriate interval for inference about the superpopulation target $\theta(P)$. The finite-population-calibrated interval below is used only to assess coverage against the empirical benchmark $\theta_N$, and to diagnose how much of the observed overcoverage is due to evaluating against a fixed finite population rather than against the unavailable superpopulation parameter.

We illustrate this effect in the simple case of mean estimation under a uniform policy. Let $F_N=\{(X_i,Y_i)\}_{i=1}^N$ denote the fixed test population, $\theta_N = \frac{1}{N}\sum_{i=1}^N Y_i$, and suppose each point is labeled independently with probability $\pi_i := \pi = \budget / N$. Consider the debiased estimator from Eq.~\eqref{eq:meanest}, 
$$\hat\theta=\frac{1}{N}\sum_{i=1}^N\left[f(X_i)+\frac{\xi_i}{\pi}\{Y_i-f(X_i)\}\right].$$ Then, defining $R_i = Y_i - f(X_i)$, conditional on the observed data $F_N = \{(X_i, Y_i)\}_{i=1}^N$ (fixed test population), $$\hat{\theta} - \theta_N = \frac{1}{N}\sum_{i=1}^N\left(\frac{\xi_i}{\pi} - 1\right)R_i,$$ so the conditional finite-population variance is $$V_{\text{act}} := \Var(\hat\theta-\theta_N \mid F_N) =\frac{1}{N^2}\sum_{i=1}^N \frac{1-\pi}{\pi}\,R_i^2.$$ In contrast, the superpopulation-based Wald interval uses a variance estimate
$V_{\mathrm{int}}=\hat\sigma^2/N$, where $\hat\sigma^2$ is the sample variance of the summands in $\hat\theta$. To quantify this discrepancy, we define the random inflation factor $\gamma:= V_{\mathrm{int}} / V_{\mathrm{act}}$.
Then, conditional on $(F_N,\gamma)$ and under the normal approximation,
$$\text{covg}:=\P\!\left(\theta_N\in[\hat\theta\pm z_{1-\alpha/2}\sqrt{V_{\mathrm{int}}}]\,\middle|\,F_N\right) \approx 2\Phi\!\left(\sqrt{\gamma}\,z_{1-\alpha/2}\right)-1,$$
so the overcoverage amount $\Delta:=\text{covg}-(1-\alpha)$ depends on the magnitude of $\gamma$ (and is naturally upper bounded by $\alpha$). This phenomenon extends to OPAL and other non-uniform policies $\pi_i=\pi(X_i)$ by replacing $\pi$ with $\pi_i$.

Since we do not observe every $R_i$, we can estimate $V_{\mathrm{act}}$ by $$\hat V_{\mathrm{HT}} := \frac{1}{N^2}\sum_{i=1}^N \xi_i\,\frac{1-\pi_i}{\pi_i^2}\,R_i^2, \; \text{such that } \; \E[\hat V_{\mathrm{HT}}\mid F_N]=V_{\mathrm{act}}.$$ When reporting coverage for the benchmark $\theta_N$, we also report the finite-population-calibrated
interval
$$\hat\theta\pm z_{1-\alpha/2}\sqrt{\hat V_{\mathrm{HT}}}.$$
For the odds-ratio experiments, we apply the same calculation groupwise and combine the resulting variance estimates
through the Delta method. Detailed derivations are provided in Appendix~\ref{app:overcoverage}. 

\subsection{Using chest X-rays for predicting cardiomegaly} \label{sec:CHX}

We begin with a more in-depth look at the cardiomegaly odds ratio estimation problem aided by predictions using chest X-ray data first introduced in Section~\ref{sec:CHX_intro}. We use data from CheXpert, a large chest radiograph dataset comprising 65,240 patients who received care at Stanford Health Care between October 2002 and July 2017, including both inpatient and outpatient centers \cite{CheXpert}. The dataset includes additional  features such as patient age, sex, and the orientation of the X-ray image (anterior-posterior or the opposite), as well as several response variables including cardiomegaly -- an enlarged heart, where the heart muscle becomes larger than usual \cite{cardiomegaly}. Cardiomegaly is a radiographic sign of underlying pathology, generally viewed as a sign of cardiac dysfunction such as heart failure. It can also arise transiently under short-term physiologic stress (e.g., pregnancy) and occurs across ages.
In this application, we predict the odds ratio of cardiomegaly among younger and older patients, where we set the cut-off for old patients to be those above 40 years old. We curate the CheXpert dataset to include all patients with cardiomegaly status available, resulting in a total of 19,596 samples which are summarized in Table~\ref{tab:cardiomegaly_age_totals}.

\begin{table}[ht]
\centering
\caption{Contingency table of age vs.\ Cardiomegaly status (with totals)}
\label{tab:cardiomegaly_age_totals}
\begin{tabular}{lrrr}
\toprule
\textbf{Age} & \textbf{No cardiomegaly} & \textbf{Cardiomegaly} & \textbf{Total} \\
\midrule
$< 40$ & 3488 & 575  & 4063 \\
$\geq 40$ & 9576 & 5957 & 15533 \\
\midrule
\textbf{Total} & 13064 & 6532 & 19596 \\
\bottomrule
\end{tabular}
\end{table}

The true odds ratio in this dataset is $\theta = 0.265$, so the odds of an older patient having cardiomegaly is nearly 4 times that of a patient below 40 years old. That is, in the CheXpert data, cardiomegaly is diagnosed much more in older patients than younger ones. As the presence of cardiomegaly is indicative of another underlying condition, it may be the case that older patients tend to be sicker and thus will see higher prevalence of an enlarged heart. We see, however, that the number of younger patients is vastly outnumbered (comprising about 20\% of the total patients), so predictive algorithms may be biased towards correct predictions in the larger, above 40 population. 

To predict cardiomegaly based on X-ray images, we use a pre-trained model based on DenseNet-121 with weights learned from the official CheXpert model, available in the \texttt{TorchXRayVision} library \cite{TorchXRayVision, TorchXRayVision2}. Details for the model, which was published with the CheXpert dataset are available in \cite{CheXpert-paper}. The library also contains weights for several other models, including those trained on other large-scale chest X-ray datasets such as MIMIC-CXR \cite{mimic_cxr}, which could be explored as alternative prediction models. We treat the classifier as fixed and focus on inference under a labeling budget, not on optimizing predictive accuracy. 

\begin{figure}[h]
    \centering
    \includegraphics[width = \linewidth]{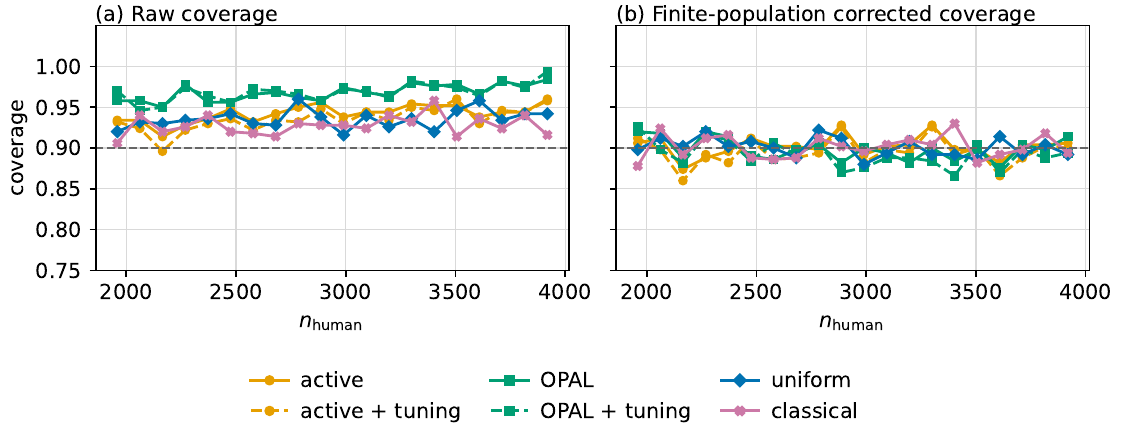}
    \caption{\textbf{Coverage for odds ratio estimation of cardiomegaly in patients below vs.~over 40 years of age}  (a) usual Monte Carlo coverage; (b) coverage after finite-population calibration, detailed in Section~\ref{sec:overcoverage}, accounting for the fact that inference is evaluated against the fixed empirical population rather than an independent superpopulation draw. The dashed horizontal line indicates the nominal 90\% target. Results are average over 500 Monte Carlo trials. The budget given on the x-axis (denoted by the number of labels acquired, $n_{\text{human}}$) ranges from 10\% to 20\% of the total unlabeled observations.}
    \label{fig:CHX_covg}
\end{figure}

Details on model fitting and performance are provided in Appendix \ref{app:CHX_details}, including the confusion matrix and other diagnostic statistics for each of the two subgroups. We compare the methods summarized in Section~\ref{sec:eval} and construct confidence intervals for the odds ratio, shown in Figure~\ref{fig:CHX_ESS}. All methods meet the set coverage level of 90\%, as seen in coverage plots provided in Figure~\ref{fig:CHX_covg}. We see that by adjusting for the finite-population setting, as discussed in Section~\ref{sec:overcoverage},
the overcoverage is corrected (Figure~\ref{fig:CHX_covg}(a) vs (b)).

In this example (Figure~\ref{fig:CHX_covg}), we see that collecting labels at random (blue) improves in effective sample size over classical (pink), showing that augmenting with predicted values can result in tighter intervals. By collecting labels in a data-adaptive manner proportional to uncertainty (orange), there is an additional boost in effective sample size. Specifying the labeling policy to explicitly minimize estimated asymptotic variance via \method~(green) results in the largest gain in effective sample size. In addition to increasing effective sample size, \method~also has more stable performance. As seen in Figure~\ref{fig:CHX_var}, the estimates, interval widths, and endpoints obtained using~\method~show much less variability between the 500 instantiations of the sampling procedure compared to the other methods. This indicates that using \method~to obtain labels and perform inference is more stable---if the procedure were done several times, the resulting intervals and estimates from each trial do not vary too much.

\begin{figure}[h]
    \centering
    \includegraphics[width = \linewidth]{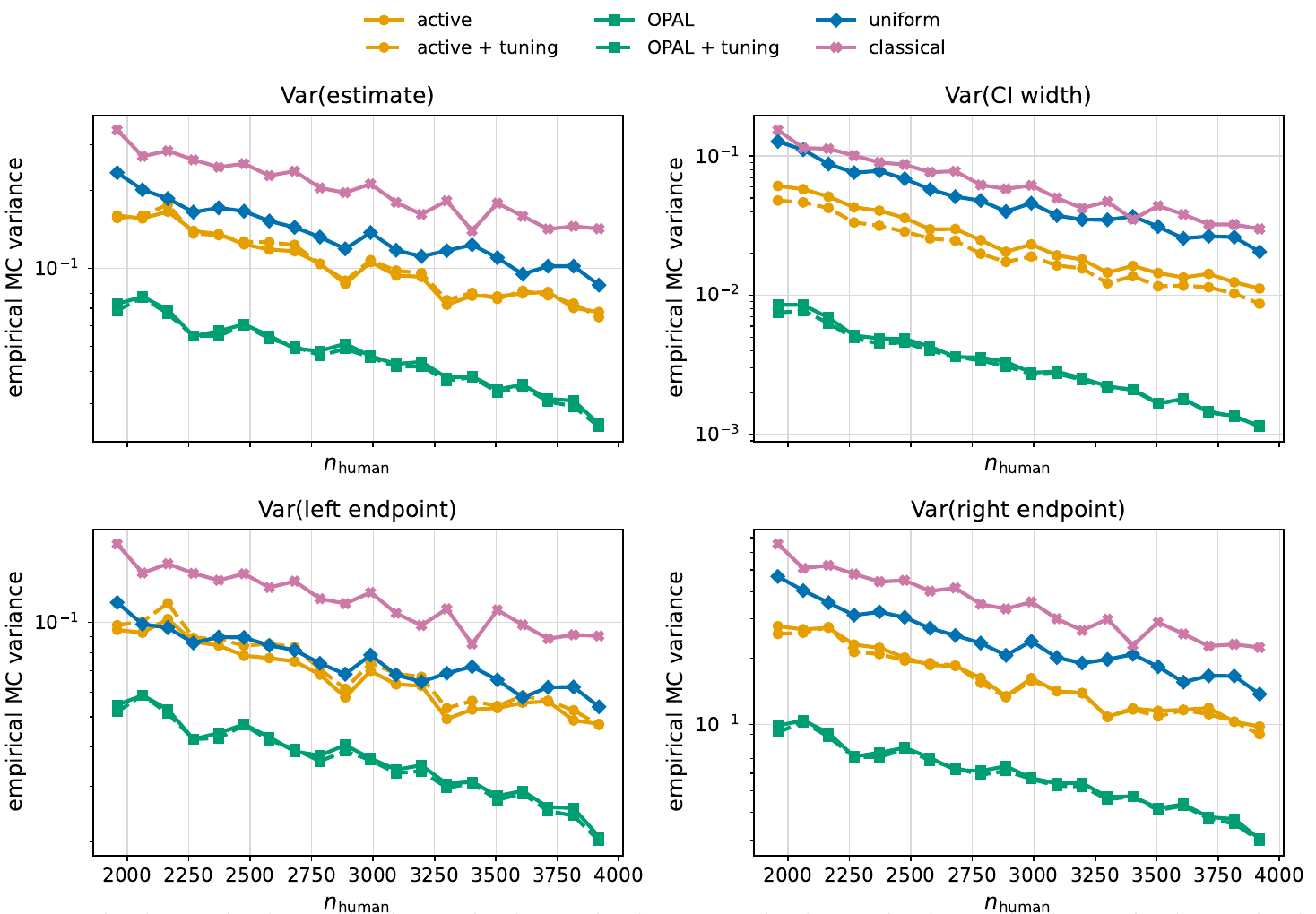}
    \caption{\textbf{Stability of odds ratio estimation of cardiomegaly in patients below vs. over 40 years of age.} Variability of estimates, interval widths, left and right endpoints over 500 Monte Carlo trials. The budget given on the x-axis (denoted by the number of labels acquired, $n_{\text{human}}$) ranges from 10\% to 20\% of the total unlabeled observations.}
    \label{fig:CHX_var}
\end{figure}

\subsection{Using imaging data for breast cancer subtype diagnosis} \label{sec:BRCA}

Here, we are interested in estimating the odds ratio of triple negative breast cancer (TNBC) among African American and White women using histopathological imaging data from The Cancer Genome Atlas (TCGA). We define our two subgroups as patients identifying as white (W) and black (B). Using OpenSlide to download and process the images, we curate a dataset of 932 samples which are summarized in Table~\ref{tab:race_tnbc_totals}. Unlike the CheXPert dataset, which is large and has a stronger pretrained model, this task has noisier predictions and sharper subgroup imbalance, making budget allocation more delicate.
\begin{table}[ht]
\centering
\caption{Contingency table of race vs.\ TNBC status (with totals)}
\label{tab:race_tnbc_totals}
\begin{tabular}{lrrr}
\toprule
\textbf{Race} & \textbf{Non-TNBC} & \textbf{TNBC} & \textbf{Total} \\
\midrule
B & 129 & 53  & 182 \\
W & 644 & 106 & 750 \\
\midrule
\textbf{Total} & 773 & 159 & 932 \\
\bottomrule
\end{tabular}
\end{table}

The true odds ratio of triple negative breast cancer is $\theta = 2.5$. That is, the odds of an African American breast cancer patient having TNBC is 2.5 times that of a Caucasian breast cancer patient. As Table~\ref{tab:race_tnbc_totals} shows, while TNBC is more common among Black patients, but Black patients are underrepresented in the dataset; this combination makes accurate subgroup inference both important and challenging.
We split the data into two parts, taking 20\% ($n_{\text{train}} = 186$) to train a classifier and reserving the remaining 80\% ($n_{\text{test}} = 746$) for inference. There are intricacies to training a classifier for histopathological images, particularly for a difficult task like labeling subtypes of breast cancer (compared to for example, diagnosing disease or no disease). Image segmentation and extracting relevant parts of these massive images is an active area of research, which is not the focus of this paper. Our classifier, based on finetuning ResNet-18 (with more details provided in Appendix \ref{app:BRCA_details}), achieves an accuracy of about 80\% on the test set; however, due to the large group imbalance and positive/negative label imbalance, minority group recall is quite poor and there are a large fraction of false negatives. We however see in Figure~\ref{fig:BRCA} that after applying each of the methods to estimate the odds ratio of interest, OPAL still outperforms  other methods, particularly with slightly larger budget sizes. In fact, we see that proportional-to-uncertainty labeling has ESS similar to that of the actual number of human labels for all budget sizes. 

\begin{figure}
    \centering
    \includegraphics[width = \linewidth]{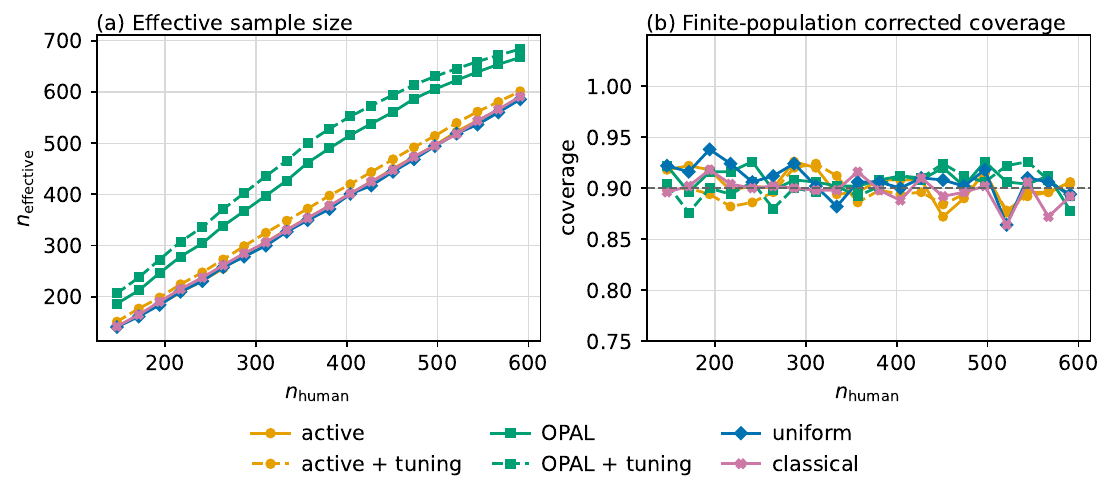}
    \caption{\textbf{Odds ratio estimation of triple negative breast cancer in Caucasian vs.~African American women} (a) effective sample size of each method where solid line denotes baseline and dashed denotes with power tuning (for active proportional-to-uncertainty labeling and active spline-parametrized optimal labeling); (b) coverage of each method, with correction to adjust for finite-population effects. We perform 500 trials per method at each budget level  (20-50\%), and average over these trials in the reported results.}
    \label{fig:BRCA}
\end{figure}

If we were to change the way in which the budget is allocated to the two different groups (default set to proportional to group size following \cite{active}), it is possible to see improvements in the ESS for the proportional-to-uncertainty policy, however, this is something we must determine a-priori, not after seeing results. This application shows that even with a classifier that exhibits non-ideal behavior, substantial gains in sample size can still be seen by incorporating predicted labels according to our method. power tuning increases the ESS for both active sampling methods by about the same amount. All methods meet the set coverage level of 90\%.

\subsection{Using affirming-device indicators as predictors for global-warming stance}

Next, we study an application in computational social science by examining stance annotations to study linguistic differences between media affirming or denying global warming \cite{stance}. The study consists of 2,300 news headlines which agree, are neutral, or disagree with the claim that global warming is a serious concern. The quantity of interest is the odds ratio of agreement given the presence of affirming devices like ``expert'' and ''renowned.'' That is, if $Z_{\text{affirm}} \in \{0,1\}$ is the indicator of an affirming device being present and $Y_{\text{agree}} \in \{0,1\}$ is the indicator of agreement (with the stance about global warming being a serious threat), then $$\theta^* = \frac{\mu_{\text{agree} \mid \text{affirm}}/(1 - \mu_{\text{agree} \mid \text{affirm}})}{\mu_{\text{agree} \mid \text{no affirm}} / (1 - \mu_{\text{agree} \mid \text{no affirm}})},$$ where $\mu_{\text{agree} \mid \text{affirm}} = \P(Y_{\text{agree}} = 1 \mid Z_{\text{affirm}} = 1)$ and $\mu_{\text{agree} \mid \text{no affirm}} = \P(Y_{\text{agree}} = 1 \mid Z_{\text{affirm}} = 0)$. We use data from Gligori\'{c} et al. \cite{llms}, which queries the LLM for verbalized confidence via a two-stage prompting procedure: the model is first asked to provide an answer and then to assign a probability $C_i$ to the correctness of the answer. We can then train a black-box predictor of LLM error via confidence score $C_i$ and the difference between predicted and true label. To perform batch inference, we reserve 200 data points to train a predictor of the LLM error and then proceed. In the sequential setting, we collect a burn-in sample by labeling the first $n_{\text{burnin}}$ data points, which are used to obtain an initial predictor of the LLM error. As more labels $Y_i$ are collected, we use $\{(C_j, (\hat{Y}_j - Y_j)^2)\}_{j < t, \xi_j = 1}$ to finetune the predictor. Budgets ranging from 20\% to 50\% of all unlabeled data are considered. We perform 500 trials per method at each budget level, and average over these trials in the reported results.

\begin{figure}
    \centering
    \includegraphics[width = \linewidth]{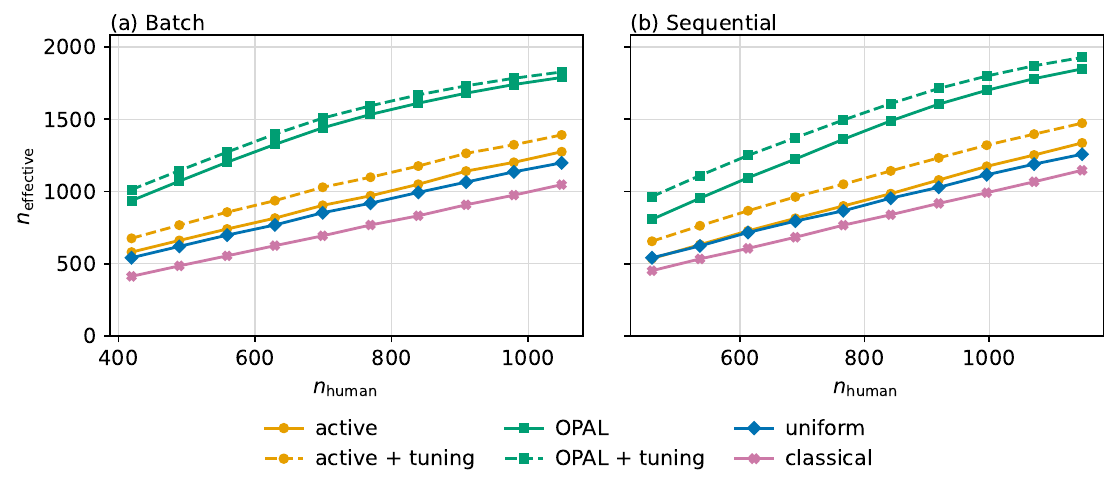}
    \caption{\textbf{Odds ratio estimation of global warming stance with affirming devices} Effective sample size of each method under (a) batch sampling and (b) sequential sampling. We perform 500 trials per method at each budget level (20-50\%), and average over these trials in the reported results.}
    \label{fig:stance-ESS-combined}
\end{figure}

In Figure~\ref{fig:stance-ESS-combined}, we compare our method to each of the other labeling policies considered, in both batch (a) and sequential (b) settings. Sequential sampling is implemented with finetuning performed after every $B = 100$ collected labels. Here, finetuning refers to both finetuning of the predictive model and also of the labeling policy to account for the newly acquired labels.  All three methods which use predictions in addition to collected labels showed improvement in ESS when using sequential over batch sampling, but the ordering remains the same. Power tuning increases ESS for both \method~and proportional-to-uncertainty labeling, in both the sequential and batch settings. Notably, using \method~in the batch setting outperforms proportional-to-uncertainty labeling even in the sequential setting. All methods achieve 90\% coverage (with finite-population correction), which is shown in Figure~\ref{fig:stance-coverage-combined}.

Further, \method~is more stable, as shown in Figure~\ref{fig:stance-sequential-dist}: the distributions of ESS resulting from the intervals constructed across 500 iterations for each method are plotted. Those for \method~are less variable compared to the others and clearly result in higher ESS values.

\subsection{Using AlphaFold-derived predictors for intrinsic disorder prediction}

In our last experiment, we consider an application to odds ratio estimation of a protein being phosphorylated, a functional property, and the protein coming from an intrinsically disordered region (IDR), a structural property. Learning the protein structure requires expensive experimental techniques to measure accurately, and  AlphaFold, a machine learning model that predicts a protein’s structure from its amino acid sequence, has been used as a cheaper alternative \cite{active, bludau, ppi, ppi++}. In the context of label budget-constrained inference, Zrnic and Cand\`{e}s \cite{active} show that by selectively picking which $\budget$ protein structures to measure experimentally, it is possible to leverage imperfect predictions from AlphaFold and produce valid confidence intervals which are tighter than ones constructed using uniform sampling, where the labeled data points are chosen at random. We also use the post-processed AlphaFold outputs made available by Angelopoulos et al. \cite{ppi} and predict the IDR structural property based on the raw AlphaFold outputs. There are 10802 total observations of ground-truth IDR indicators. While in \cite{ppi, active}, the logistic regression model trained to predict disorder is only trained on AlphaFold outputs, we instead opt to train the model with the phosphorylation indicator incorporated as well, since it is possible this indicator is a relevant covariate to incorporate in building a predictor. 

The odds ratio estimand is $\theta = (\mu_1/(1 - \mu_1))/(\mu_0(1- \mu_0))$ where $\mu_1 = \P(Y = 1 \mid Z_\text{ph} = 1)$ and $\mu_0 = \P(Y = 1 \mid Z_{\text{ph}} = 0)$ with $Y \in \{0,1\}$ being the indicator of disorder and $Z_{\text{ph}} \in \{0,1\}$ the indicator of phosphorylation. We use the estimator in Eq.~\eqref{eq:estimator} and the resulting confidence interval from Corollary~\ref{corr:batch_odds}, based on the labeling policies outlined in Section~\ref{sec:simulation}. We note that for both proportional-to-uncertainty labeling and spline-optimization-based labeling, results are nearly identical with and without power tuning. We use 500 training points from each group (phosphorylated and no phosphorylation) for training a logistic regression classifier, resulting in a test set of 9802 data points. For the purpose of evaluating coverage, we take the odds ratio among the test data as the ground truth $\theta^*$. We consider budgets ranging from 2\% to 20\% of data. 

\begin{figure}[h]
    \centering
    \includegraphics[width = \linewidth]{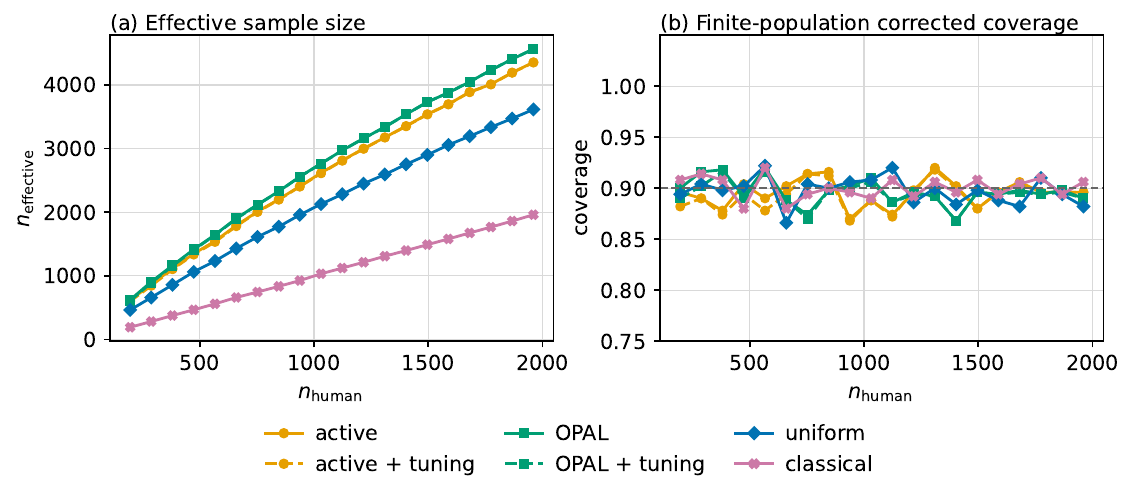}
    \caption{\textbf{Odds ratio estimation of intrinsic disorder using AlphaFold-derived predictors} (a) effective sample size of each method where solid line denotes baseline and dashed denotes with power tuning (for active proportional-to-uncertainty labeling and active spline-parametrized optimal labeling); (b) coverage of each method, with correction to adjust for finite-population effects. We perform 500 trials per method at each budget level (2-20\%), and average over these trials in the reported results.}
    \label{fig:alphafold}
\end{figure}

In Figure~\ref{fig:alphafold}, we compare our method to proportional-to-uncertainty, uniform, and classical labeling policies. The classical method does not always achieve 90\% coverage. For small budgets, the two active sampling methods sometimes undercover, though spline-parametrized less-so. Once the budget is at least 500 collected labels, then coverage is no longer a concern for any of the methods other than classical, and we see that spline-parametrized almost consistently has higher coverage than the others. We see that as in the previous examples, OPAL has the largest effective sample size, though closer to that of proportional-to-uncertainty compared to other applications. We note that in the AlphaFold example, the two groups have much more balanced representation compared to the previous two applications, which might explain part of this. Further, power tuning does not have much impact for either of the active sampling methods, a sign that the predictions are relatively reliable overall. In general, this experiment shows the power of using predicted labels in addition to collected ones, as all prediction-boosted methods have ESS at least twice that of the number of collected labels.

\section{Simulation study} \label{sec:simulation}
To isolate the effects of an unreliable predictor $f$ and the smooth parametrization of the labeling policy on estimator efficiency, we consider several simulation settings. We perform sensitivity analyses to show the robustness of \method~to the choice of number of splines and degrees of freedom. We also implement the end-to-end inference framework for estimation and uncertainty quantification of a Kendall's $\tau$ estimator to demonstrate the use on a non M-estimator quantity.

\subsection{Odds ratio estimation} \label{sec:sim_odds}

We present experiments assessing OPAL's capabilities in an odds ratio estimation setting. We consider a setting where our two groups have different underlying covariate distributions and the outcomes conditional on features have distinct distributions (both logistic relationships). In detail, we let $Z \sim \Bern(p_z)$; $X|Z = 0 \sim \frac{1}{3}\N(-2, 0.5) + \frac{1}{3}\N(0, 0.5) + \frac{1}{3}\N(2, 0.5)$ and $X|Z = 1 \sim \N(0,1)$; $\P(Y = 1 \mid X = x, Z = 1) = 1/(1 + \exp(-1.5 x))$ and $\P(Y = 1 \mid X = x, Z = 0) = \text{clip}(0.3 + 0.2/(1 + \exp(-x)), 0.05, 0.6)$. We consider an imbalanced setting where $p_z = 0.8$ and a balanced one where $p_z = 0.5$. 

To disentangle the effects of a good model (the predicted label/probability $f(X_i)$ being close to true $Y_i$) and the effect of smoothing labeling probabilities via spline parameterization, we consider two ways of specifying uncertainties: 

\begin{enumerate}
    \item Oracle $\pi$: correctly specified $\pi$ (true uncertainties are known) using the true conditional expectations.
    \item Misspecified $\pi$: uncertainties are estimated using the predicted probabilities from model $f$ (concretely, logistic regression with corresponding labeling probabilities estimated).
\end{enumerate}

For our experiments, we generate 10,000 data points total. Then, we randomly split the data into a training portion ($n_{\text{train}} = 200$) and the remaining for inference ($n_{\text{test}} = 9,800$) on the odds ratio. We train a predictive model using the training data $(X_i, Y_i)$, which we then treat as a fixed, black-box model $f$. Then, we proceed with defining uncertainty scores $u(x)$ and deriving corresponding labeling policies $\pi^{\text{policy}}$ where policy corresponds to uniform, active, and spline (as defined above, respectively). In this example, we fit $f$ using a logistic regression model. 

\begin{figure}[h]
    \centering
    \includegraphics[width=\linewidth]{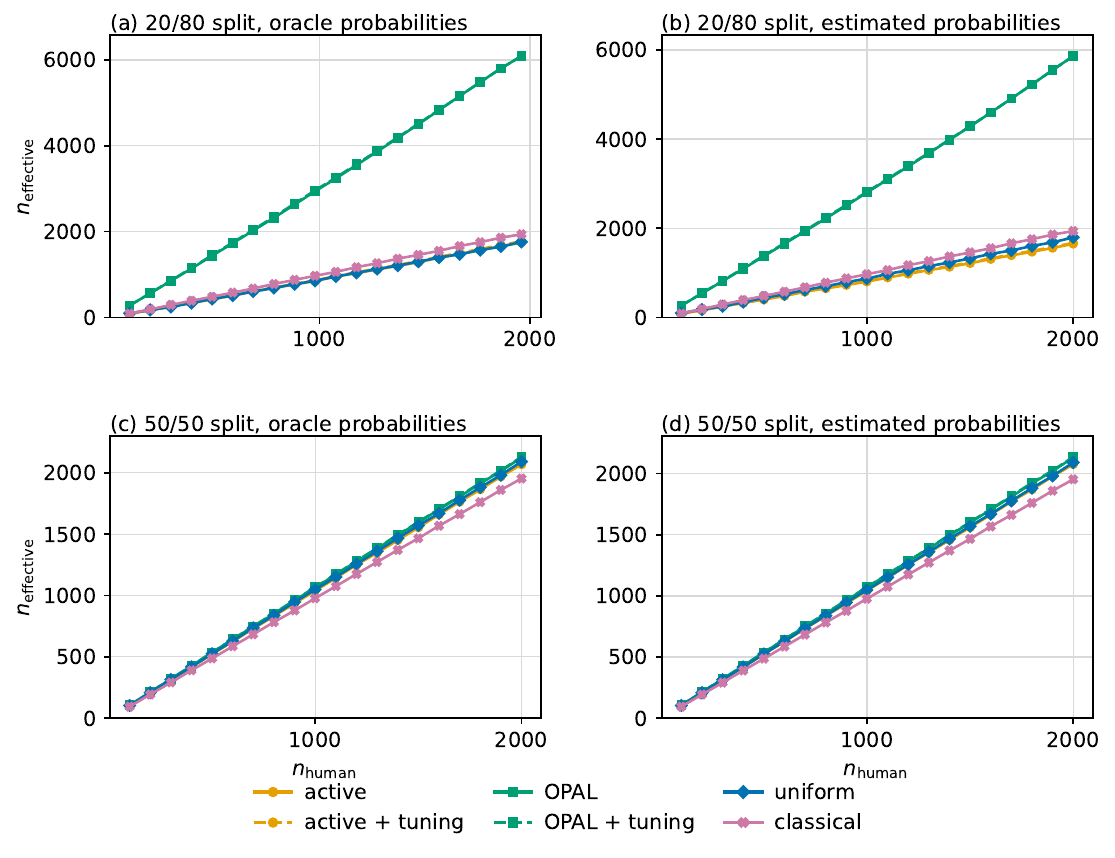}
    \caption{Effective sample size in the unbalanced group size setting with (a) oracle uncertainties, (b) estimated uncertainties; and the balanced group size setting with (c) oracle uncertainties, (d) estimated uncertainties. We perform 500 trials per method at each budget level (1-20\%), and average over these trials in the reported results.}
    \label{fig:sim_ESS}
\end{figure}

We consider labeling budgets ranging from 1\% to 20\% of the test data. The ESS plots for the imbalanced setting, where 20\% of data comes from group 0 (which is relatively ``harder'' due to noisy signal) and 80\% from group 1 (which is easier since it has a straightforward sigmoid relationship), are provided in Figure~\ref{fig:sim_ESS}(a)--(b). Results from the balanced setting are reported in Figure~\ref{fig:sim_ESS}(c)--(d). 

In both settings, OPAL matches or outperforms uniform and proportional-to-uncertainty labeling policies consistently. It also demonstrates regularization as OPAL's effective sample size is larger. In both, oracle probabilities result in larger ESS, though this is much more pronounced in the imbalanced group case: for budget 20\% (i.e $n_{\text{labeled}} = 2000$), the ESS for our method is 3500 when using oracle uncertainties to parametrize smooth labeling probabilities versus 2750 using the estimated uncertainties. 

These results also show that \method~is more robust to misspecified predictions and uncertainty score, particularly in the imbalanced setting of Figure~\ref{fig:sim_ESS}(b), where we see that it has large ESS relative to all other methods when using estimated probabilities. On the other hand, we see that active and uniform both produce wider intervals than classical, indicating that a poor prediction method has downstream implications that affect efficiency.

We further examine whether the adaptive policies benefit from explicit uniform exploration in Appendix~\ref{app:mixing}. We tune a mixture between each adaptive policy and uniform sampling. The selected mixing weights, based on minimizing a variance measure, show that active uncertainty sampling often requires substantial uniform mixing, whereas OPAL typically selects the unmixed or nearly unmixed policy (see Figure~\ref{fig:sim-lambdatune}). This provides another diagnostic that OPAL's variance-oriented smooth parameterization already regularizes the labeling probabilities in a way that improves stability for inference.

Another notable observation is OPAL's much larger ESS in the imbalanced case compared to when the two groups are approximately equal in size. This is due to the fact that ESS is a ratio metric. When $p_z$ is large (so the simpler group 1 is overrepresented), then the intervals generated with only labeled data tend to be wider and variance scales approximately by $5/n$ assuming the easier group can be learned perfectly (scaling factor determined by $100/20$ the fraction of the harder group), since the classical method labels $\budget$ points proportional to group size. As ESS is a ratio of classical variance over method variance, consequently the scale is inherently in favor of our method, since it adapts to the specific data and subgroup structure while the classical method does not and thus the variance scales as e.g.~$5/n$ vs.~$2/n$ for the two different settings. This is less dramatic in practice, since group 0 is non-trivial to learn. We see this phenomenon in Figure~\ref{fig:sim_0.95_ESS} where more extreme imbalance (95\% group 1, 5\% group 0) results in even more inflation of effective sample size. 
 
\subsection{Kendall's tau}

Now we present a simulation study on estimation of Kendall's tau, a non-parametric rank-based quantity. We generate features $X \sim \N(0, I) \in \R^d$ where $d = 12$ and binary outcomes $Y \in \{0,1\}$ according to the following data generating process:
\begin{enumerate}
    \item For a single $X$, define the latent scalar score $S^*(X) = 1.1 X_0 - 0.9 X_1 + 0.7 X_2 + 0.6 \sin(X_3) - 0.6 X_4X_5 + 0.41\{X_6 > 0\} - 0.4 X_71\{X_8 > 0\}$ (a non-linear function of $X$).
    \item Let $S_{\text{obs}}$ be the noisy, observed version of $S^*$: $S_{\text{obs}}(X) = S^*(X) + \varepsilon$ where $\varepsilon \sim \N(0, \sigma^2)$, $\sigma^2 = 0.5$.
    \item Generate binary $Y$ according to $\operatorname{logit} \P(Y = 1 \mid X) = -0.3 + S^*(X) + 0.4\sin(X_0) - 0.5 X_2X_3$.
\end{enumerate}
We generate $n = 20,000$ data points $(X_i, Y_i, S_i)$, where $S_i$ represents the noisy score. Taking a random split of 10\% of the data, we fit a logistic regression on $(X_i, Y_i)$ pairs to learn $\hat{\mu}(x) = \P(Y = 1 \mid X)$. The remaining data is reserved for inference. Recall the form of the estimator, $$\psi = \frac{\# \text{concordant pairs} - \# \text{discordant pairs}}{\binom{n}{2}} = \frac{2}{n(n-1)}\sum_{i < j}\text{sign}(S_i - S_j)\text{sign}(Y_i - Y_j),$$
where we replace $X_i$ from the definition in Section~\ref{sec:examples} Example 5 with the score $S_i$, since the features $X_i \in \R^d$ are multi-dimensional but Kendall's tau is a measure of correlation between scalars. From the example, we know the form of the EIF and relevant quantities. As such, the nuisance quantities which need to be estimated are 
$$A_i = \E_j[\text{sign}(S_i - S_j)], \; B_i = \E_j[\text{sign}(S_i - S_j)\mu_j].$$ Estimates are computed with a single stable sort of $S$:
$$\hat{A}_i = 2F_{\text{mid}}(S_i) - 1, \; \hat{B}_i = \frac{1}{n}\sum_j \text{sign}(S_i - S_j) \hat{\mu}_j,$$
where $F_{\text{mid}}(\cdot)$ is the mid-rank CDF (used to handle ties). The corresponding estimator is $$\hat{\tau} = \frac{1}{n}\sum_i \bigg[(\hat{\mu}_i A_i - B_i) + \frac{\xi_i}{\pi(X_i)}(Y_i - \hat{\mu}_i)A_i \bigg].$$ The influence values $\phi_i$ are the corresponding inner terms of the above average subtracting off $\hat{\tau}$, which are used to construct a sample variance and form a confidence interval. We use a variance proxy of $c_i = \hat{A}_i^2 \hat{\mu}_i(1 - \hat{\mu}_i)$ as a measure of uncertainty which is used for the non-uniform sampling methods. The true parameter value $\tau^*$ is computed from the test split of the data. While Kendall's $\tau$ is not directly covered by existing active inference methods \cite{active}, since it is not a mean/standard M-estimator, we implement a proportional-to-uncertainty baseline by using the specific influence-function variance contribution derived in Section~\ref{sec:examples} Example 5 as the uncertainty score.
We see in Figure~\ref{fig:kendall_ESS} that the effective sample size for our spline-based method outperforms both the uniform and direct proportional-to-uncertainty labeling policies.

\begin{figure}[h]
    \centering
    \includegraphics[width=0.5\linewidth]{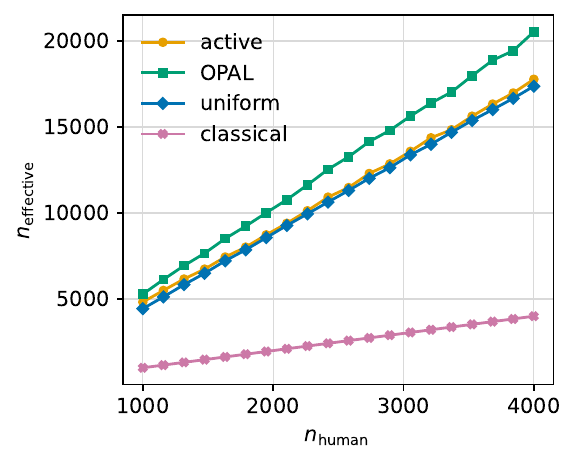}
    \caption{Effective sample size for Kendall's Tau simulation. We perform 500 trials per method at each budget level (1-20\%), and average over these trials in the reported results.}
    \label{fig:kendall_ESS}
\end{figure}

\section{Discussion} \label{sec:discussion}
We conclude this paper by discussing extensions and future directions, including joint inference for multiple target estimands, overlapping subgroups, and incorporating additional data sources.
 
\subsection{Estimating multiple quantities related to more than two subgroups}\label{sec:more_groups}

Suppose we have $k>2$ subgroups and wish to estimate multiple odds ratios jointly. For example, when $k=3$ we may target
$\theta=\mathrm{odds}_1/\mathrm{odds}_2$ and $\psi=\mathrm{odds}_1/\mathrm{odds}_3$, which depend on the shared mean
$\mu_1=\P(Y=1\mid X\in G_1)$ (and $\mu_2,\mu_3$). A naive approach would run \method\ separately for each odds ratio, but this duplicates effort by re-estimating shared quantities. A more efficient direction is to allocate the labeling budget jointly across groups while maintaining valid inference for all targets.

Using PPI-style estimators for $(\mu_1,\mu_2,\mu_3)$ with group-specific policies $(\pi^1,\pi^2,\pi^3)$, one can establish a joint CLT and hence
$$\sqrt{n}\Big( \begin{bmatrix}\hat\theta\\ \hat\psi\end{bmatrix}- \begin{bmatrix}\theta\\ \psi\end{bmatrix} \Big)\ \Rightarrow\ \N(0,\Sigma(\pi^1,\pi^2,\pi^3)).$$
This suggests optimizing policies using a scalar measure of joint uncertainty rather than treating each variance separately. Natural choices include minimizing a weighted trace (e.g., $\mathrm{tr}(W\Sigma)$) or the generalized variance $\det(\Sigma)$, which accounts for correlation between $\hat\theta$ and $\hat\psi$. Developing conditions under which such objectives remain tractable under our spline-parametrized policy class is a promising direction for future work.

\subsection{Overlapping subgroups}
We consider estimation of the log-odds ratio between two overlapping subgroups $G_1$ and $G_0$; for example,
$G_1=\{20\le \text{age}\le 35\}$ and $G_0=\{30\le \text{age}\le 45\}$.
Given a labeling policy $\pi:\mathcal{X}\to[0,1]$, define the group-specific mean estimators
$$\hat{\mu}_k=\frac{1}{n_k}\sum_{i=1}^n 1\{X_i\in G_k\}\Big[f(X_i)+\frac{\xi_i}{\pi(X_i)}\big(Y_i-f(X_i)\big)\Big], \qquad \xi_i\sim \Bern(\pi(X_i)),$$
where $n_k=\sum_{i=1}^n 1\{X_i\in G_k\}$. In the ``single-policy'' setting, we use a shared policy of the form
$$\pi(X_i)=\exp\Big\{-\sum_{b=1}^B \beta_b h_b(U_i)\Big\},$$
where $U_i:=u(X_i)$ denotes an uncertainty score and $\{h_b\}_{b=1}^B$ is a spline basis shared between the two groups.

For the log-odds ratio $\log\theta=h(\mu_1,\mu_0)=\log\frac{\mu_1}{1-\mu_1}-\log\frac{\mu_0}{1-\mu_0}$, we have the following empirical variance objective
$$\Var(\hat{\theta})=g^\top \Sigma(\pi) g =\frac{1}{n}\E\!\left[\frac{v(X)}{\pi(X)}\Big(g_1 1\{X\in G_1\}+g_0 1\{X\in G_0\}\Big)^2\right],$$
where $v(X)=\Var(Y\mid X)$ and
$$\mathbf{g}=\begin{bmatrix} g_1\\ g_0\end{bmatrix} =\begin{bmatrix} 1/\{\mu_1(1-\mu_1)\}\\[2pt] 1/\{\mu_0(1-\mu_0)\} \end{bmatrix},\qquad \Sigma(\pi)=\E\!\left[\frac{v(X)}{\pi(X)} \begin{bmatrix} 1\{X\in G_1\} & 1\{X\in G_1\cap G_0\}\\ 1\{X\in G_1\cap G_0\} & 1\{X\in G_0\} \end{bmatrix}\right].$$
In earlier sections we focused on disjoint groups, in which case $1\{X\in G_1\cap G_0\}=0$ and $\Sigma(\pi)$ is diagonal. With overlapping groups, however, $\Sigma(\pi)$ typically has nonzero off-diagonal terms, and the variance-minimizing policy is no longer the same as in the disjoint case. The corresponding empirical objective is
$$\widehat{\Var}(\hat{\theta}) =\frac{1}{n}\sum_{i=1}^n \frac{v(X_i)}{\pi(X_i)} \Big(g_1 1\{X_i\in G_1\}+g_0 1\{X_i\in G_0\}\Big)^2.$$
When $\pi$ is parameterized via spline coefficients, this objective is convex in $(\beta_1,\dots,\beta_B)$, so the optimization method in Section~\ref{sec:optim} applies without modification.

While convenient, a single shared policy may be restrictive. One possibility is to partition the covariate space into three disjoint regions,
$\tilde G_1=G_1\setminus G_0$, $\tilde G_0=G_0\setminus G_1$, and $\tilde G_2=G_1\cap G_0$,
and to fit separate policies on each region. The resulting variance objective decomposes across the partition and remains convex, but this approach effectively treats the intersection as a distinct group.

A more natural alternative is to retain two subgroup-specific policies $\pi_1$ and $\pi_0$, while ensuring that labels in the intersection are reusable across both estimators. Concretely, let
$\xi_i^1\sim\Bern(\pi_1(X_i))$ and $\xi_i^0\sim\Bern(\pi_0(X_i))$, and define $\xi_i=\min(\xi_i^1+\xi_i^0,1)$ so that an observation in $G_1\cap G_0$ is labeled if either policy selects it. Equivalently, $\xi_i\sim\Bern(\pi(X_i))$ with
$$\pi(x)=1\{x\in G_1\}\pi_1(x)+1\{x\in G_0\}\pi_0(x)-1\{x\in G_1\cap G_0\}\pi_1(x)\pi_0(x).$$
Using separate spline coefficients for $\pi_1$ and $\pi_0$ generally makes $\widehat{\Var}(\hat{\theta})$ non-convex in the combined parameter vector. However, imposing the constraint $\pi_1(x)=\pi_0(x)$ for all $x\in G_1\cap G_0$ (e.g., via equality constraints on logit-transformed probabilities) restores convexity. This extension appears to fit naturally within the optimization framework developed above, and we leave a full treatment to future work.

\subsection{Data outside of the two groups}
Finally, we consider inference for estimands defined on two target subgroups when additional labeled data are available from outside those groups. For example, let $G_1$ denote women under 25 and $G_0$ denote men under 25, with unlabeled features from $n_1$ and $n_0$ individuals in the two groups, respectively. Suppose we also observe labeled data $\{(X_i,Y_i)\}_{i=1}^K$ from a third group $G_3$, e.g., women ages 26--27. Although these observations do not belong to either target subgroup, it can be reasonable to leverage them—both for training (or fine-tuning) a predictive model $\hat f$, and potentially for improving estimation directly since group 3 may be covariate-similar to the targets. To incorporate group 3 labels into estimation, let $q_g(x)$ denote the density of $X\mid X \in G_g$ and define the density ratios
$r_g(x)=q_g(x)/q_3(x)$ for $g\in\{1,2\}$. Then one can augment the mean estimator $\hat\mu_g^\pi$ with an additional debiasing term that uses residuals $(Y_i-f(X_i))$ from the $G_3$ samples reweighted by an estimate $\hat r_g(X_i)$. Concretely, this amounts to adding a correction based on $\sum_{i:X_i \in G_3}\hat r_g(X_i)\{Y_i-f(X_i)\}$ (with appropriate normalization), which targets the covariate distribution of group $g$ while drawing signal from labeled data outside the target groups.

\section*{Acknowledgments}
The authors would like to thank John Cherian, Bradley Efron, Yunhe Gao, Joonhyuk Lee, Lihua Lei, Yash Nair, Asher Spector, Skyler Wu, James Yang, and Tijana Zrnic for helpful discussions. V.L.M. acknowledges support from the Stanford Data Science Scholarship. E.J.C. was supported by the Office of Naval Research grant N00014-24-1-2305. Some of the computing for this project was performed on the Sherlock cluster. We would like to thank Stanford University and the Stanford Research Computing Center for providing computational resources and support that contributed to these research results. 

\printbibliography

\newpage

\appendix

\addcontentsline{toc}{section}{Appendix}
\part{Appendix}
\mtcsettitle{parttoc}{}
\mtcsetrules{parttoc}{off}
\parttoc

\section{Related literature}\label{app:lit}

This appendix expands upon the discussion of related work in Section \ref{sec:motivation}. The introduction reviews the most closely related modern work on prediction-powered inference, active statistical inference, and ML-assisted inference. Here, we explore four additional perspectives that clarify the structure of our framework: Neyman allocation and variance-optimal sampling; two-phase, validation, and surrogate-assisted sampling; recent semi-supervised inference for general targets; and semiparametric efficiency theory through the efficient influence function used in Section~\ref{sec:general} \cite{vdv,bickel}. 

\subsection{Neyman Allocation}\label{app:neyman}

A classical point of reference for our variance-minimizing objective is Neyman Allocation \cite{NeymanAlloc, CochranSampling}. In its simplest form, Neyman Allocation prescribes assigning more samples to strata with larger outcome  variability, so that the total variance of a stratified estimator is minimized under a fixed budget. In the continuous-covariate analogue, where each point may have its own local variance contribution, the corresponding idealized allocation assigns sampling probabilities proportional to the square root of the local variance contribution.

This is the structure that appears in our setting. For estimators with influence-function decomposition $$\phi_\pi(X, \xi, Y) = h(X) - \psi + \frac{\xi}{\pi(X)}\zeta(X,Y),$$ the variance is of the form  $$V(\pi) = \E\bigg[\frac{c(X)}{\pi(X)}\bigg] + C,$$ where $C$ is constant with respect to the labeling policy $\pi$. If one ignores the smoothness restrictions imposed by the spline basis and instead optimizes over unrestricted probabilities $\{\pi(X_i)\}_{i=1}^n$, a standard Lagrange multiplier calculation yields $$\pi^*(x) \propto \sqrt{c(x)},$$ subject to budget and probability constraints. This is precisely the intuition behind the ``proportional-to-uncertainty'' policy proposed by Zrnic and Cand\`{e}s \cite{active}, where the uncertainty is a measure of variance contribution per data point.

\method~differs from the classical allocation setting in two ways. First, $c(X)$ is unknown and must be estimated from model predictions, uncertainty scores, annd nuisance estimates. Second, directly optimizing one probability per data point can be unstable when these estimates are noisy. We address this by restricting the policy to a smooth, tractable function class---in this case by spline parametrization---and thus regularizing the mapping from estimated uncertainty to labeling probability while preserving the same variance-minimizing principle. 

Shape constraints can also be incorporated: for example, we can consider monotone rules (i.e. formalizing the notion that the labeling probability should be larger for higher uncertainty data points) such as isotonic regression or piecewise constant functions via binning of uncertainties, both of which achieve other forms of regularization. This is discussed in further detail in Appendix~\ref{app:monotonicity}.

\subsection{Two-phase, validation, and surrogate-assisted sampling}

Another classical perspective comes from two-phase or validation sampling. In these designs, inexpensive surrogate variables are observed for an entire cohort, while expensive gold-standard measurements are collected only for a subset. The second-phase sample is then used, often through inverse-probability weighting, calibration, or augmented estimators, to obtain valid inference for the target parameter. This mirrors our setting, where features, model predictions, and uncertainty scores are available for all units, but true labels are obtained only for a selected subset of observations. Classical two-phase analyses and designs have been developed for semiparametric models, measurement-error settings, and validation-sample selection \cite{breslow, amorim}.

A particularly related line of work studies cost-efficient chart-review designs for EHR-based association studies, where an often noisy algorithm-derived phenotype is available for all patients but gold-standard outcomes are obtained only for a validation subset. Yin et al.~\cite{costeffective} use EHR-derived phenotypes to guide manual chart review and construct augmented estimators for risk-factor associations, with a downstream association model in a mind. More recently, surrogate-powered inference methods combine abundant surrogate labels with a smaller set of groundtruth labels: Chen et al.~\cite{surrogatePI} develop a toolbox that includes augmentation, regularization over multiple surrogates, and an adaptive multiwave validation procedure.

While these methods align with \method~by considering label acquisition as a design problem, \method~takes a broader
approach: it formulates the allocation objective through the estimand-specific variance contribution $c(X)$, derived
from either the Delta method or the efficient influence function. This makes the same optimization machinery
applicable across odds ratios, regression coefficients, U-statistics, and other EIF-based targets, rather than being
tailored to a particular validation-study model, measurement-error setting, or surrogate-label procedure.

\subsection{Semi-supervised and ML-assisted inference for general targets}

We also discuss related literature on semi-supervised and ML-assisted inference. Recent work has developed theory for general M-estimators in semi-supervised settings~\cite{song2024generalM}, regression coefficients under misspecification and covariate shift~\cite{tian2024semisupervised}, and doubly robust semi-supervised inference under selection-based or weak-overlap labeling~\cite{zhang2023doublerobustSS}. More recent semiparametric formulations allow missing-at-random labeling, distribution shift, and decaying overlap for broad classes of smooth targets including means, regression coefficients, quantiles, and causal effects~\cite{testa2025semiparametric}. A separate line of work develops post-prediction inference procedures that are task agnostic---they can be paired with existing analysis routines without deriving a new debiasing formula for each specific task~\cite{miao2024taskagnostic}. A line of work on predict-then-debias also extends ML-assisted inference to settings with imputed covariates and nonuniform complete-data sampling designs~\cite{kluger2025imputed}. Related to the adaptive design aspect of our setting, recent work on two-phase multiwave sampling develops valid M-estimation when expensive measurements are collected adaptively across waves~\cite{kluger2026multiwave}.

These methods are all complementary to \method: they study how to obtain valid and efficient inference under a given labeled/unlabeled sampling mechanism, while \method~uses the estimand-specific variance decomposition to determine a labeling scheme under a fixed budget. The unlabeled covariates and predictions are used not only to augment the estimator, but also to determine the labeling policy.

\subsection{Semiparametric efficiency and efficient influence functions}

Our general formulation is grounded in semiparametric efficiency theory~\cite{bickel,vdv}. For a target parameter
$\psi=\psi(P)$, the efficient influence function identifies the first-order sensitivity of the estimand to
perturbations of the data-generating distribution and determines the corresponding efficiency bound. In our setting,
this perspective is useful not only for analyzing estimators, but also for designing the labeling policy.

Suppose first that the full-data influence function admits the decomposition
$$\phi_{\mathrm{full}}(X,Y)=h_\psi(X)-\psi+\zeta(X,Y), \qquad \E[\zeta(X,Y)\mid X]=0,$$
where $h_\psi(X)$ is a covariate-only component satisfying $\E[h(X)]=\psi$. Under missing-at-random labeling with
$\xi\indep Y\mid X$ and labeling probability $\pi(X)$, the corresponding observed-data influence function is
$$\phi_\pi(X,\xi,Y) = h_\psi(X)-\psi+\frac{\xi}{\pi(X)}\zeta(X,Y).$$
This immediately yields the variance decomposition
$$\Var(\phi_\pi) = \Var(h_\psi(X))+\E\left[\frac{c(X)}{\pi(X)}\right], \quad \text{where } \; c(X):=\E[\zeta(X,Y)^2\mid X].$$
Only the second term depends on the labeling policy. Thus, the EIF identifies exactly the part of the efficiency bound
that governs optimal label allocation. The role of semiparametric efficiency theory here is therefore slightly different from its usual one: rather than using the EIF only to analyze an estimator after the sampling design has been fixed, we
use it to guide the design of the labeling policy itself.

This also clarifies the relation to M-estimators, which are the focus of some related PPI and active-inference procedures~\cite{ppi++,active}. If $\psi$ is defined by an estimating equation $\E[m(O;\psi)]=0$, with $O = (X,Y)$, then under standard smoothness and nonsingularity conditions the corresponding estimator is asymptotically linear with full-data influence function
$$\phi_{\mathrm{full}}(X,Y) = -A^{-1}m(X,Y;\psi), \quad \text{where } \; A:=\E\left[\frac{\partial}{\partial\psi^\top}m(O;\psi)\right].$$
Applying the preceding decomposition gives the observed-data influence function
$$\phi_{\mathrm{obs}}(X,\xi,\xi Y) = \E[\phi_{\mathrm{full}}(X,Y)\mid X] + \frac{\xi}{\pi(X)} \left\{\phi_{\mathrm{full}}(X,Y)-\E[\phi_{\mathrm{full}}(X,Y)\mid X] \right\},$$
which is of the form used in Section~\ref{sec:general}, after identifying the residual component
$$\zeta(X,Y) = \phi_{\mathrm{full}}(X,Y)-\E[\phi_{\mathrm{full}}(X,Y)\mid X].$$
Thus, regular M-estimators are contained as a major special case.

The same logic also covers estimands that are not naturally expressed as M-estimators. For functions of means, including
the log odds ratio, the Delta method identifies the relevant componentwise variance contributions. For regression
coefficients, the influence function follows from the estimating equation above. For U-statistics and rank-based
estimands, the first-order term is given by the H\'{a}jek projection~\cite{hajek1968asymptotic,serfling1980approximation,vdv}. Once the residual component $\zeta(X,Y)$ is identified, OPAL optimizes the corresponding policy-dependent variance
contribution $\E[c(X)/\pi(X)]$. This connection also explains how one-step estimators and TMLE can be used to construct
estimators whose asymptotic variance is governed by the same EIF-based objective~\cite{tmle,tmle2}.

\section{Additional set-up details}\label{app:setup}

\subsection{Data and model scenarios}\label{app:train_model}

While we make the initial simplifying assumption that predictor model $f$ is a black-box and is trained on data independent of our sample, this is not a necessary assumption. In the following, we detail potential scenarios covering cases where a pre-trained $f$ is not available a-priori and/or there is training data available. We define a potential training data set $\mathcal{D}_{\text{train}} = \{(X_i', Y_i')\}_{i = 1}^{n_{\text{train}}}$, which are independent and identically distributed, and are also independent of the data sample of interest.

\begin{itemize}
    \item No pre-trained $f$, no training data
    \begin{itemize}
        \item Batch: take a random train-test split and train a predictive model $f$ on the training split. Perform inference on the remaining test split.
        \item Sequential: take a burn-in sample to train an initial model $f$ and iteratively update $f$ as more labeled data pairs are collected.
    \end{itemize}
    \item No pre-trained $f$, given training/pilot study data $\mathcal{D}_{\text{train}}$
    \begin{itemize}
        \item Batch: train a predictive model $f$ with $\mathcal{D}_{\text{train}}$, optionally use $\mathcal{D}_{\text{train}}$ to get an initial tuning parameter $\lambda$
        \item Sequential: train a predictive model $f_0$ with $\mathcal{D}_{\text{train}}$, then optimize tuning parameter $\lambda$ with the same training data before beginning iterative updating of labeling policy $\pi_t$, tuning parameter $\lambda_t$, and model $f_t$ based on newly collected labels. No need to collect a burn-in sample.
    \end{itemize}
    \item Pre-trained $f$ and no training data: this is precisely the case discussed in the main text. 
    \item Pre-trained $f$ and training data $\mathcal{D}_{\text{train}}$:
    \begin{itemize}
        \item Batch: finetune $f$ with $\mathcal{D}_{\text{train}}$, optionally use $\mathcal{D}_{\text{train}}$ to get an initial tuning parameter $\lambda$
        \item Sequential: finetune $f$ with $\mathcal{D}_{\text{train}}$ and optionally get initial tuning parameter, no need to collect a burn-in sample. 
    \end{itemize}
\end{itemize}

\subsection{Full schematic} \label{app:full_diagram}
In this section, we provide an extended version of Figure~\ref{fig:method}, which includes the possibility of training data $D_{\text{train}} = \{(X_i, Y_i)\}_{i \in \text{train}}$. 

\begin{figure}[h!]
    \centering
    \includegraphics[width=\linewidth]{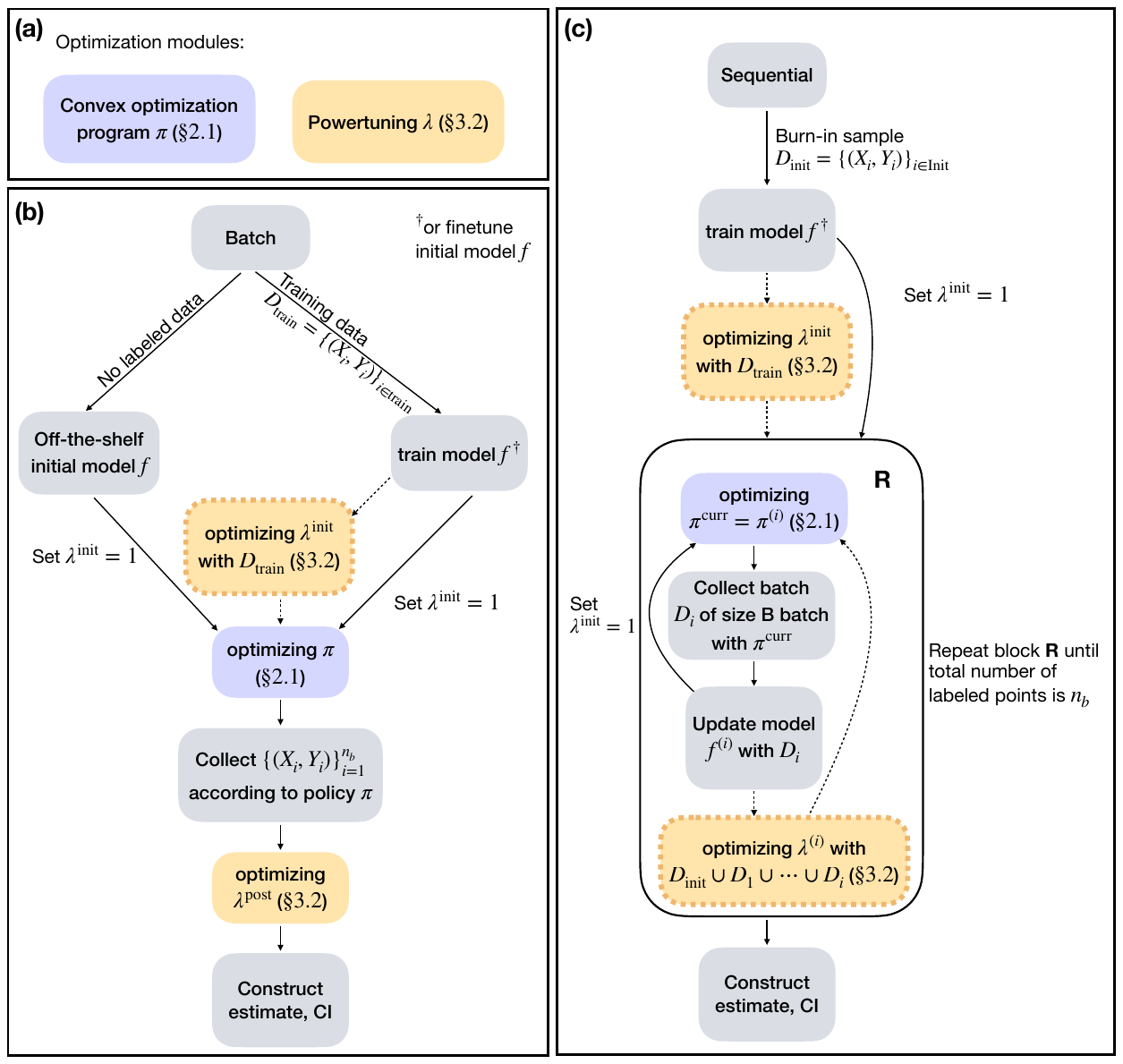}
    \caption{Full overview of OPAL incorporating (a) optimization modules for finding labeling policy $\pi$ and best tuning parameter $\lambda$ (Section~\ref{sec:tuning}). The workflow using these two modules is given for (b) the batch setting where all covariate data is available; and (c) the sequential setting where covariate data is revealed one-individual-at-a-time. Dashed lines around a module and dashed lines indicate optional modules.}
    \label{fig:full_method}
\end{figure}

\section{Variance derivations for the policy objective}

This appendix derives the variance expressions underlying the policy-learning objective presented in Section~\ref{sec:optim}. We first consider the function-of-means setting introduced in Section~\ref{sec:optim}, where the target parameter is a smooth functional of a vector of subgroup means, and show that the asymptotic variance of the resulting plug-in estimator decomposes into a policy-dependent term of the form $\E[c(X)/\pi(X)]$ plus a constant term independent of the labeling policy. We then provide a specific expression for the case of the log odds ratio.

\subsection{Function-of-means setting}\label{app:asyvar}
First, recall the setting of Section~\ref{sec:optim}. Let $\mu = (\mu_1, \dots, \mu_K)^\top$, where $$\mu_k = \E[s_k(X,Y)],$$ and let the target parameter be $\theta = g(\mu)$ where $g: \R^K \to \R$ is a differentiable map. For each component $k$, define the mean estimator with respect to labeling policy $\pi$: $$\hat{\mu}_k^\pi = \frac{1}{n}\sum_{i=1}^n \left[f_k(X_i) + \frac{\xi_i}{\pi(X_i)}(s_k(X_i, Y_i) - f_k(X_i))\right],$$ where $\xi_i \sim \Bern(\pi(X_i))$ and $\xi_i \indep Y_i \mid X_i$. Writing the residual $$r_k(X_i, Y_i) = s_k(X_i, Y_i) - f_k(X_i),$$ the $k^{th}$ summand can be expressed as $$T_{ik}^\pi = f_k(X_i) + \frac{\xi_i}{\pi(X_i)}r_k(X_i,Y_i).$$

By construction, since $\E[\xi_i \mid X_i] = \pi(X_i)$, $\E[T_{ik}^\pi \mid X_i] = \E[x_k(X_i, Y_i) \mid X_i]$, so $\hat{\mu}_k^\pi$ is unbiased for $\mu_k$ (the residual term augments the sample mean and serves as debiasing). Assuming asymptotic normality holds (see Appendix \ref{app:CLT}) with covariance matrix $\Sigma(\pi)$, denoting $\hat{\mu}^\pi = (\hat{\mu}_1^\pi, \dots, \hat{\mu}_K^\pi)^\top$, applying the Delta method to $\hat{\theta}^\pi = g(\hat{\mu}^\pi)$ yields $$\sqrt{n}(\hat{\theta}^\pi - \theta) \toD \N(0, (\nabla g(\mu))^\top\Sigma(\pi)\nabla g(\mu)).$$ Thus, the variance quantity corresponding to policy $\pi$ is the scalar quantity $$V(\pi) := (\nabla g(\mu))^\top\Sigma(\pi)\nabla g(\mu).$$ Let $a = \nabla g(\mu) \in \R^K$. The first-order fluctuation of the scalar target $\hat{\theta}^\pi$ is a linear combination of the component fluctuations of $\hat{\mu}^\pi$, $a^\top\hat{\mu}^\pi$. We define the residual term $\zeta(X,Y) = a^\top r = \sum_{k=1}^K a_k(s_k(X,Y) - f_k(X))$ and the componentwise conditional mean $$h(X) = \sum_{k=1}^K a_k\E[s_k(X,Y) \mid X].$$

The first-order influence contribution (from a linearization) is $$\phi_\pi(X,Y, \xi) = \frac{\xi}{\pi(X)}\zeta(X,Y) + h(X) - \theta,$$ with $\E[\zeta(X,Y) \mid X] = 0$. Thus, $$V(\pi) = \Var(\phi_\pi) = \E\left[\frac{c(X)}{\pi(X)}\right] + \Var(h(X)), \quad \text{where } c(X) = \E[\zeta(X,Y)^2 \mid X].$$
Since $\Var(h(X))$ is constant with respect to $\pi$, minimizing the asymptotic variance is equivalent to minimizing $\E[c(X) / \pi(X)]$, subject to the labeling budget and valid probability constraints. This population-level objective translates to the empirical objective introduced in Section \ref{sec:optim}.

As an aside, we note this is just the special case of Section~\ref{sec:general}, taking $\psi = \theta = g(\mu)$. Accordingly, the variance form matches.

\subsection{Log odds ratio}\label{app:asyvar_OR}

We now specify the above general expression to the log odds ratio variance given in the main text in Eq.~\eqref{eq:OddsRatioVar}. Let $\mu_0 = \E[Y \mid X \in G_0]$ and $\mu_1 = \E[Y \mid X \in G_1]$, and define $$\theta = g(\mu_1, \mu_0) = \log\left(\frac{\mu_1}{1 - \mu_1}\right) - \log \left(\frac{\mu_0}{1 - \mu_0}\right).$$ Then, $$\nabla g(\mu_1, \mu_0) = \left(\frac{1}{\mu_1(1 - \mu_1)}, \; -\frac{1}{\mu_0(1 - \mu_0)}\right)^\top.$$
Thus, following the format of the function-of-means setting in the previous part, we define $s_1(X,Y) = Y1\{X \in G_1\}$ and $s_0(X,Y) = Y1\{X \in G_0\}$. Taking $f_k(X) \equiv f(X)$ for $k\in \{0,1\}$, the residual term $\zeta(X,Y)$ becomes $$\zeta(X,Y) = \frac{1\{X \in G_1\}}{\mu_1(1 - \mu_1)}(Y - f(X)) - \frac{1\{X \in G_0\}}{\mu_0(1 - \mu_0)}(Y - f(X)).$$

The groups are disjoint, so the cross-term vanishes as $1\{X \in G_1\}1\{X \in G_0\} = 0$. Thus, $$c(X) = \E[\zeta(X,Y)^2 \mid X] = \E[(Y - f(X))^2 \mid X] \left[\frac{1\{X \in G_1\}}{p_1^2\mu_1^2(1 - \mu_1)^2} + \frac{1\{X \in G_0\}}{p_0^2\mu_0^2(1 - \mu_0)^2}\right],$$
matching the expression given in Eq.~\eqref{eq:OddsRatioVar}, if we take the plug-in estimates $\hat{p}_k = n_k/n$. This is the policy-relevant variance contribution for the log odds ratio. Replacing the population quantities with pilot-sample estimates yields the practical object for optimization. 

\section{Convex optimization}
This section of the appendix provides details on the spline-based parametrization of the labeling policy and the resulting convex optimization problem introduced in Section~\ref{sec:optim}, Eq.~\eqref{eq:optim_full}. We first briefly discuss the unrestricted pointwise formulation in which each labeling probability is treated as its own optimization variable, and explain why this problem is convex and can also be viewed as a geometric program. We then show that the empirical objective corresponding to the spline-based parametrization used in the paper is convex in the spline coefficients, and explain how the probability and budget constraints are enforced in practice.

\subsection{Pointwise optimization} \label{app:pointwise_opt}

We begin with the version of the empirical policy-learning problem in which the labeling probabilities are optimized directly without any smoothness restriction. For a fixed sample with features $X_1, \dots, X_n$, let $$p_i := \pi(X_i), \quad i = 1, \dots, n$$
and write $\widehat{c}_i = \widehat c(X_i)$. The empirical variance proxy can then be written as a function of the labeling probabilities $p_1, \dots, p_n$: $$\widehat V(p_1, \dots, p_n) = \sum_{i=1}^n \frac{\widehat{c}_i}{p_i},$$ so the unrestricted optimization problem is
\label{eq:optim_unrestricted}
\begin{align} 
    \min_{p_1, \dots, p_n} \quad & \sum_{i=1}^n \frac{\widehat c_i}{p_i}, \\
    \text{subject to} \quad &  \sum_{i=1}^n p_i \leq \budget, \\
    & 0 < p_i \leq 1, \quad i=1, \dots, n. 
\end{align}
This problem is convex: each term $\widehat c_i / p_i$ is convex on $(0, \infty)$ and the feasible set is defined by linear inequalities with positivity constraints.

As a brief aside, this problem can also be viewed as a geometric program. Each term $\widehat c_i/p_i$ is a monomial in the positive variable $p_i$, and thus the objective, which is a sum of these monomials, is a posynomial. Similarly, budget constraint $\sum_{i=1}^n p_i \leq \budget$ is a posynomial inequality. Algebraically, if we define $d := \prod_{j=1}^n p_j$ and $d_i := \prod_{j \neq i} p_j = d / p_i$, then $\frac{1}{p_i} = \frac{d_i}{d}$, and we can rewrite the optimization objective as $$\sum_{i=1}^n \frac{\widehat c_i}{a_i} = \frac{1}{d}\sum_{i=1}^n \widehat c_i d_i.$$ This is precisely the algebraic form of a geometric program, and can thus be solved in this way.

Returning to the initial pointwise optimization problem, if we ignore the upper-bound constraints $p_i \leq 1$, the problem introduced in Eq.~\eqref{eq:optim_unrestricted} admits a closed-form solution. The Lagrangian is $$L(a,\lambda) = \sum_{i=1}^n \frac{\widehat c_i}{p_i} + \lambda\left(\sum_{i=1}^n p_i - n_{\budget}\right),$$ and the first-order conditions give
$$-\frac{\widehat c_i}{p_i^2}+\lambda=0 \qquad\Longrightarrow\qquad p_i = \sqrt{\frac{\widehat c_i}{\lambda}}.$$
Imposing the budget constraint yields
$$\sqrt{\lambda} = \frac{\sum_{j=1}^n \sqrt{\widehat c_j}}{n_{\budget}},$$
and therefore
$$p_i^* = \frac{n_{\budget}\sqrt{\widehat c_i}}{\sum_{j=1}^n \sqrt{\widehat c_j}}.$$ Thus, the unrestricted interior solution is proportional to $\sqrt{\widehat c_i}$, which is the finite-sample analogue of Neyman allocation discussed in Appendix~\ref{app:neyman}. When the additional constraint $p_i \leq 1$ is factored in, the same square-root holds with truncation at 1 for those whose unconstrained optimum exceeds the upper bound.

The spline parametrization is therefore not necessary to create convexity, since the unrestricted pointwise problem is already convex and in fact explicitly solvable in the interior. Rather, the spline class regularizes the map from uncertainty to labeling probability by replacing $n$ free optimization variables with a low-dimensional smooth parametrization. This is especially useful when the empirical uncertainty quantities $\widehat c_i$ are noisy or misspecified, in which case the fully pointwise solution may be overly sensitive to local estimation error.

\subsection{Spline basis construction} \label{app:spline}

We use a B-spline basis\cite{bspline}, whose elements satisfy a recursive relationship. Let $\{t_i\}_{i=1}^{K + 2d + 2}$ be a knot sequence with $d + 1$ repeated boundary knots at each end. The basis functions are defined recursively by
\begin{align*}
    h_{j,0}(u) &= 1_{[t_{j}, t_{j + 1})}(u) \\
    h_{j,d}(u) &= \frac{u - t_j}{t_{j + d} - t_j}h_{j, d-1}(u) + \frac{t_{j + d + 1} - u}{t_{j + d + 1} - t_{j + 1}} h_{j + 1, d - 1}(u)
\end{align*}
For each uncertainty score $U_i := u(X_i)$, we write $$h(U_i) = \left(h_{1,d}(U_i), \dots, h_{B,d}(U_i)\right)^\top,$$
where the total number of basis functions is $B = K + d + 1$. This reflects the fact that the piecewise-polynomial segments are constrained to agree, together with their first $d-1$ derivatives, at each interior knot \cite{spline}. For each group $k$, we use the same basis construction and parametrize the log-inverse labeling probability as $$\log \frac{1}{\pi_{\beta^{(k)}}^{(k)}(X_i)} = h(U_i)^\top \beta^{(k)}, \qquad \text{equivalently} \qquad \pi_{\beta^{(k)}}^{(k)}(X_i) = \exp\{-h(U_i)^\top\beta^{(k)}\}.$$
Thus, the map from uncertainty to labeling probability is represented by a piecewise polynomial of degree $d$ with knots at $\{t_j\}$ and $C^{d-1}$ continuity at the interior knots. We construct the policy as a smooth function of the scalar uncertainty score $U_i$ rather than of the full covariate vector $X_i$, which avoids the need for higher-dimensional basis expansions while still allowing flexible nonlinear dependence of the labeling probability on uncertainty.

\subsection{Convexity of the spline-parametrized optimization problem} \label{app:convexity_spline}

We first recall the empirical objective $$\widehat V(\pi) = \frac{1}{n}\sum_{i=1}^n \frac{\widehat c(X_i)}{\pi(X_i)}.$$
Substituting in the spline-parametrization $\pi_\beta(X_i) = \exp\{-h(U_i)^\top\beta\}$, $$\widehat V(\beta) = \frac{1}{n}\sum_{i=1}^n c(X_i)\exp\{h(U_i)^\top\beta\}.$$
Each term in the sum is a nonnegative constant, $c(X_i)$, times the exponential of an affine function of $\beta$, $h(U_i)^\top\beta$, and is therefore convex in $\beta$. As a result, $\widehat V(\beta)$ is a sum of convex functions and is itself convex. Alternatively, we see its Hessian is positive semidefinite: $$\nabla^2 \widehat V(\beta) = \frac{1}{n}\sum_{i=1}^n \widehat c(X_i) \exp\{h(U_i)^\top\beta\}\, h(U_i)h(U_i)^\top \succeq 0.$$ Now, we verify convexity of the feasible set. The probability constraints $\pi_\beta(X_i) \leq 1$ are equivalent to $h(U_i)^\top\beta \geq 0$, which are linear inequalities and hence a convex set. The budget constraint $\sum_{i=1}^n \pi_\beta(X_i) = \sum_{i=1}^n \exp\{-h(U_i)^\top\beta\} \leq \budget$ is also convex: the left-hand side is a sum of exponentials of affine functions. Thus, the optimization problem consists of a convex objective minimized over a convex feasible set, and is thus a well-defined convex optimization problem.

\subsection{Constraint handling and convex formulation}

The spline parametrization makes it straightforward to enforce both the probability constraints and the budget constraint. Because
$$\pi_\beta(X_i)=\exp\{-h(U_i)^\top\beta\},$$
we have $\pi_\beta(X_i)\in(0,1]$ whenever
$$h(U_i)^\top\beta \ge 0.$$
Thus the valid-probability constraints become the linear inequalities 
$$h(U_i)^\top\beta \ge 0, \qquad i=1,\dots,n.$$
The empirical budget constraint requires
$$\sum_{i=1}^n \pi_\beta(X_i) = \sum_{i=1}^n \exp\{-h(U_i)^\top\beta\} \le n_{\budget}.$$
Therefore, the optimization problem can be written as
\begin{align}
\min_{\beta}\quad &
\sum_{i=1}^n \widehat c(X_i)\exp\{h(U_i)^\top\beta\}
\label{eq:spline_obj}
\\
\text{subject to}\quad &
h(U_i)^\top\beta \ge 0, \qquad i=1,\dots,n,
\label{eq:spline_nonneg}
\\
&
\sum_{i=1}^n \exp\{-h(U_i)^\top\beta\}\le n_{\budget}.
\label{eq:spline_budget}
\end{align}

The resulting optimization problem is convex: the objective in \eqref{eq:spline_obj} is a nonnegative weighted sum of exponentials of affine
functions of the spline coefficients, and the budget constraint in \eqref{eq:spline_budget} is a convex sublevel constraint of the same form, and the constraints in \eqref{eq:spline_nonneg} are linear. In our implementation, we specify the problem using \texttt{CVXPY}'s exponential atoms; \texttt{CVXPY} then canonicalizes the problem to a conic
form involving exponential cones and solves it with a conic solver such as \texttt{MOSEK}.

\subsection{Imposing additional constraints for monotonicity} \label{app:monotonicity}

As discussed in Section~\ref{sec:optim}, although we do not impose monotonicity constraints by default in the formulation of the convex optimization problem for finding labeling policies, they can be enforced with additional linear constraints or directly through the choice of basis.

We can add linear constraints for monotonicity by the following. Let $u_{(1)} < \cdots < u_{(m)}$ be a sorted grid, either of the sorted unique uncertainty values or a fixed grid. Then, we enforce the following constraint: $$\pi_\beta(u_{(j+1)}) \geq \pi_\beta(u_{(j)}) \iff h(u_{(j+1)})^\top\beta \leq h(u_{(j)})^\top\beta, \quad j = 1, \dots, m-1.$$ These are linear inequalities in $\beta$, so the overall optimization problem, written below, remains convex:
\begin{align*} 
    \text{minimize}  & \qquad \sum_{i=1}^n c(U_i) \exp\{\sum_{b=1}^B \beta_b h_b(U_i)\}, \\
    \text{subject to}  & \qquad \sum_{b=1}^B \beta_b h_b(U_i) \geq 0, \quad \forall i \in [n], \\
    & \qquad \sum_{b=1}^B \beta_b h_b(u_{(j+1)}) \leq \sum_{b=1}^B \beta_b h_b(u_{(j)}),\quad \forall j \in [m-1], \\
    & \qquad \sum_{i=1}^n \exp\{-\sum_{b=1}^B \beta_bh_b(U_i)\} \leq n_{\text{ budget}}. \\
\end{align*}
We note that by picking a potentially coarse grid of $u_i$ points for the monotonicity constraint, it maintains monotonicity on a moderately fine grid due to spline smoothness, but is not technically guaranteeing monotonicity between every point $i$. If desired, a grid with each $u_i$ as a grid point will ensure this, however, it is more restrictive, adds a large number of linear constraints, and may become overly sensitive to specific $u_i$ values.

If global monotonicity is desired, one can instead use a monotone spline parameterization, such as I-splines \cite{Ispline1, Ispline2}, or impose the equivalent coefficient-difference constraints on the B-spline expansion. These constraints enforce the sign of the derivative over the full range of uncertainty scores while preserving convexity. If we define $s(u) = 1/\log(\pi(u))$, then $s'(u) \leq 0$ for all $u$ ensures $\pi(u) = \exp\{-s(u)\}$ is non-decreasing for all $u$. A sufficient and convex condition for $s'(u) \leq 0$ for all $u$ is to enforce sign constraints on adjacent coefficient differences since $s(u) = h(u)^\top \bbeta$: $$\beta_b - \beta_{b-1} \leq 0, \quad j = 2, \dots, B \iff \beta_1 \geq \beta_2 \geq \cdots \geq \beta_B.$$
The corresponding optimization problem, which remains convex, is
\begin{align*} 
    \text{minimize}  & \qquad \sum_{i=1}^n c(U_i) \exp\{\sum_{b=1}^B \beta_b h_b(U_i)\}, \\
    \text{subject to}  & \qquad \sum_{b=1}^B \beta_b h_b(U_i) \geq 0, \quad \forall i \in [n], \\
    & \qquad \beta_1 \geq \beta_2 \geq \cdots \geq \beta_B, \\
    & \qquad \sum_{i=1}^n \exp\{-\sum_{b=1}^B \beta_bh_b(U_i)\} \leq n_{\text{ budget}}. \\
\end{align*}
Alternatively, we can enforce global monotonicity by replacing the B-spline basis with an I-spline \cite{Ispline1, Ispline2} basis and constraining coefficients to be nonnegative. I-splines are integrals of nonnegative M-splines, which are essentially a normalized B-spline: each basis function is scaled so that it integrates to 1 over its support. We can write $$s(u) = a - \sum_{b=1}^B \theta_b I_b(u), \quad \theta_b \geq 0,$$ where $I_b(\cdot)$ are spline basis functions. Then, since $I_b(u)$ are non-decreasing in $u$ and $\theta_b \geq 0$ for all $b$, the sum $\sum_b \theta_b I_b(\theta)$ is also non-decreasing and thus $s(u)$ is non-increasing.

\subsection{Practical implementation details} \label{app:implementation}

In practice, we first use pilot data, a training split, or a burn-in sample to obtain a predictive model $f$ and the corresponding estimated uncertainty quantities $\widehat c(X_i)$ or $u(X_i)$. Then, we construct the spline basis matrix $$H \in \R^{n \times B}, \qquad H_{ib} = h_b(U_i),$$ for the uncertainty scores of the unlabeled data. 

Conditional on the unlabeled features and the uncertainty estimates (pilot-derived), the optimization problem is deterministic. The resulting fitted coefficients $\widehat \beta$ determine the labeling probabilities $$\pi_{\beta}(X_i) = \exp\{-h(U_i)^\top\widehat\beta\},$$
which are then used to collect labels according to independent Bernoulli draws in the batch setting, or updated sequentially in the online setting. We formulate the problem in \texttt{cvxpy} and use MOSEK, with other solvers supporting exponential-cone constraints as fallbacks.

\section{Generalized asymptotically linear estimators}\label{general_details}

This appendix provides additional details for the broader asymptotically linear framework introduced in Section~\ref{sec:general}. We first summarize how TMLE may be implemented in our label querying setting, and then derive the efficient influence function representations, residual terms, and conditional second moments used in Table~\ref{tab:EIF_form} for the examples discussed in the main text (also in Section~\ref{sec:general}).

\subsection{TMLE details} \label{app:TMLE}
Here, we provide details on using Targeted Maximum Likelihood Estimation (TMLE), which updates an initial plug-in estimate along a low-dimensional fluctuation model chosen to target the parameter of interest. The update is constructed so that the empirical efficient score equation is solved (exactly or approximately), yielding a targeted plug-in estimator with favorable finite-sample behavior and the same first-order asymptotic efficiency under regularity conditions.

Let $O_i$ denote one observed data unit (e.g., $O_i=(X_i,\xi_i,\xi_iY_i)$ in our selective-label setting), and let $\phi(O;\eta)$ denote the efficient influence function (EIF) evaluated at nuisance parameters $\eta$. Starting from initial nuisance estimates (here, $\hat h$ and $\hat\zeta$), the TMLE procedure proceeds as follows.

\begin{enumerate}
    \item \textbf{Construction of the clever covariate.} First, we construct the \emph{clever covariate} $H(X,\xi)$, which is the EIF-based direction used to target the nuisance update for $h$. Equivalently, $H$ is chosen so that the fluctuation submodel score aligns with the relevant component of the efficient influence function. This direction determines the targeting update in Steps 2--4. For simple mean functionals, this takes the form $H(X,\xi)=\xi/\pi(X)$. In more complex settings, such as conditional quantiles where the EIF involves nonsmooth indicator terms, deriving $H$ analytically can be challenging. In such cases, one may use automatic differentiation and, when appropriate, smooth approximations (e.g., sigmoid relaxations of indicators) to construct a practical targeting update.
    \item \textbf{Definition of the fluctuation submodel.} We next define a low-dimensional parametric submodel $\{h_\varepsilon : \varepsilon \in \R\}$ that passes through the initial estimate $\hat{h}$ at $\varepsilon = 0$. The submodel is chosen so that its score at $\varepsilon = 0$ aligns with the relevant EIF component (equivalently, with the clever covariate direction from Step 1), ensuring that estimation of $\varepsilon$ targets the parameter of interest. For binary outcomes, a standard choice is the logistic fluctuation:
    $$\operatorname{logit} (h_\varepsilon(X)) = \operatorname{logit} (\hat{h}(X)) + \varepsilon H(X, \xi),$$
    which automatically preserves $h_\varepsilon(X) \in (0,1)$. More generally, one may consider $$h_\varepsilon(X) = g^{-1}\!\big(g(\hat h(X)) + \varepsilon d(X)\big),$$ where $g$ is a link function and $d$ is a direction chosen to align the submodel score with the EIF-based targeting direction.
    \item \textbf{Estimation of the fluctuation parameter.} We estimate the fluctuation parameter $\varepsilon$ by maximizing the likelihood over the chosen fluctuation submodel (equivalently, by solving the associated empirical score equation). Conceptually, this step determines how far to move from the initial estimate $\hat h$ along the EIF-targeted update direction to remove the first-order bias. 

    More concretely, let $H_i = H(X_i, \xi_i)$ denote the clever covariate and let $\eta_{0,i} = \mathrm{logit}(\hat h(X_i))$ denote the initial linear predictor. Under the logistic fluctuation submodel, $$\mathrm{logit}(h_\varepsilon(X_i)) = \eta_{0,i} + \varepsilon H_i,$$ the conditional mean is $$\mu_i(\varepsilon) = \sigma(\eta_{0,i} + \varepsilon H_i).$$ In the binary-outcome mean-estimation case ($H_i = \xi_i / \pi(X_i)$), the fluctuation parameter $\hat \varepsilon$ solves the empirical score equation $$0 = \frac{1}{n}\sum_{i=1}^n H_i\big(Y_i - \mu_i(\varepsilon)\big) = \frac{1}{n}\sum_{i=1}^n \frac{\xi_i}{\pi(X_i)}\bigg[Y_i - \sigma\bigg(\mathrm{logit}(\hat h(X_i) + \varepsilon\frac{\xi_i}{\pi(X_i)}\bigg)\bigg].$$
    This can be implemented efficiently using standard generalized linear model software by fitting an intercept-free logistic regression of $Y_i$ on $H_i$, with $\eta_{0.i} = \mathrm{logit}(\hat h(X_i))$ included as a fixed offset. The offset anchors the update at the initial estimate, while the fitted coefficient of $H_i$ yields the fluctuation parameter $\hat \varepsilon$.
    \item \textbf{Update and form the targeted estimator.} 
    Finally, we update the nuisance estimate by setting $$\hat{h}^*(X) := h_{\hat \varepsilon}(X),$$
    where $\hat\varepsilon$ is the fluctuation parameter estimated in Step 3. We then form the targeted plug-in estimator by replacing $\hat h$ with $\hat h^*$:
    $$\hat{\psi}^{\mathrm{TMLE}} = \Phi(\hat P^*),$$ where $\hat P^*$ denotes the targeted estimate of the relevant components of the data-generating distribution and is obtained by replacing $\hat h$ with $\hat h^*$ and updating any other nuisance components needed to evaluate the parameter. In the mean estimation case where we have missing labels so $\psi = \E[Y] = \E[h(X)]$, this reduces to $$\hat{\psi}^{\mathrm{TMLE}} = \frac{1}{n}\sum_{i=1}^n \hat h^*(X_i).$$
    By construction, this is a substitution estimator, so it preserves the parameter-space constraints induced by the model for $h$ (e.g. probability estimates remain in $[0,1]$ under a logistic fluctuation). 
    
    At the same time, if the targeting step solves the empirical EIF equation up to negligible error (satisfied under standard regularity conditions~\cite{tmle}), the resulting estimator has the same first-order asymptotic linear expansion as the corresponding one-step estimator, with asymptotic variance $\Var(\phi(X, \xi, \xi Y))/n$~\cite{vdv}. 
\end{enumerate}
As in the one-step case, inference is based on the estimated EIF in practice: if $\hat\varphi_i$ denotes the estimated influence function evaluated at the targeted nuisance estimates, then
$$\widehat{\Var}(\hat\psi^{\mathrm{TMLE}})\approx \frac{1}{n}\,\widehat{\Var}(\hat\varphi_i).$$
In more detail, $\displaystyle \hat{\varphi}_i^* = \hat h^*(X_i) - \hat \psi_{\mathrm{TMLE}} + \frac{\xi_i}{\pi(X_i)}\hat\zeta^*(X_i, Y_i),$ and we define
$$\widehat{\sigma}^2_{\mathrm{TMLE}} = \frac{1}{n}\sum_{i=1}^n \left(\hat\varphi_i - \overline{\hat\varphi^*}\right)^2, \quad \text{where} \; \overline{\hat\varphi^*} = \frac{1}{n}\sum_{i=1}^n \hat \varphi_i^*.$$
Then, we can construct the following Wald-style confidence interval: $$\hat\psi_{\mathrm{TMLE}} \pm z_{1 - \alpha/2} \sqrt{\widehat{\sigma}^2_{\mathrm{TMLE}} / n}.$$

\subsection{Deriving efficient influence function forms} \label{app:EIF}

Below, we provide detailed computations for the efficient influence function forms, residual forms, and conditional second moments reported in Table~\ref{tab:EIF_form}. These calculations are standard and included for completeness.

\paragraph{General setup and variance form.}
Let $Z=(X,Y)\sim P$ denote the full-data random vector and let the observed data be
$O=(X,\xi,\xi Y)$, where $\xi\in\{0,1\}$ indicates whether $Y$ is revealed. Assume the labeling mechanism is missing at random with known propensity $\pi(X)=\P(\xi=1\mid X)$ and $\xi \perp Y \mid X$.
If the full-data EIF admits a decomposition of the form
$$\phi_{\mathrm{full}}(Z)=\underbrace{\bar\phi(X)}_{\E[\phi_{\mathrm{full}}(Z)\mid X]} \;+\; \underbrace{\zeta(X,Y)}_{\phi_{\mathrm{full}}(Z)-\E[\phi_{\mathrm{full}}(Z)\mid X]}, \qquad \E[\zeta(X,Y)\mid X]=0,$$
then the efficient influence function in the observed-data model is
$$\varphi(O)=\bar\phi(X)\;+\;\frac{\xi}{\pi(X)}\,\zeta(X,Y),$$
which satisfies $\E[\varphi(O)]=0$ and yields $\mathrm{AsyVar}(\hat\psi)=\Var(\varphi(O))/n$.
We record $c(X):=\E[\zeta(X,Y)^2\mid X]$ since it governs the contribution of $X$ to the variance under inverse-probability weighting. Since $$\E\left[\frac{\xi}{\pi(X)}\zeta(X,Y) \mid X\right] = \frac{\E[\xi \mid X]}{\pi(X)}\E[\zeta(X,Y) \mid X] = 0,$$ the law of total variance gives 
\begin{align*}
    \Var(\phi((X,\xi, \xi Y))) &= \E\left[\Var\left(\frac{\xi}{\pi(X)}\zeta(X,Y) \mid X\right)\right] + \Var(\bar\phi(X)) \\
    &= \E\left[\frac{1}{\pi(X)} \E[\zeta(X,Y)^2 \mid X]\right] + \Var(\bar\phi(X))
\end{align*}
Thus, $$\Var(\varphi((X, \xi, \xi Y))) = \E\left[\frac{c(X)}{\pi(X)}\right] + \Var(\bar\phi(X)).$$
This is the variance form used in Section~\ref{sec:general}. In particular, only the first term depends on the labeling policy $\pi$, which is why the same policy-learning objective extends beyond the function-of-means setting.

\paragraph{Linear functional.}
Consider the linear functional $\psi = \E[h(X)Y],$
where $h:\mathcal X\to\R$ is known (e.g., $h(X)=\mathbf 1\{X\in G\} / \P(X \in G)$ for a subgroup mean). The full-data EIF is
$\phi_{\mathrm{full}}(X,Y)=h(X)Y-\psi.$
Let $\mu(X):=\E[Y\mid X]$. Then
$$\E[\phi_{\mathrm{full}}(X,Y)\mid X]=h(X)\mu(X)-\psi, \qquad \zeta(X,Y)=h(X)\{Y-\mu(X)\}.$$
Consequently,
$$c(X)=\E[\zeta(X,Y)^2\mid X]=h(X)^2\,\Var(Y\mid X).$$
Under our labeling scheme (which can be interpreted as missing at random), the observed-data EIF is
$$\varphi(O)=h(X)\mu(X)-\psi+\frac{\xi}{\pi(X)}\,h(X)\{Y-\mu(X)\}.$$

\paragraph{Regression coefficient (population least squares).}
Let $\gamma\in\R^d$ be the population least-squares coefficient defined by
$$\E\big[X(Y-X^\top\gamma)\big]=0,$$
and let $\theta=\gamma_j$ denote the $j$th coordinate. Write $\Sigma:=\E[XX^\top]$ and assume $\Sigma$ is invertible.
This is a smooth $Z$-estimation problem with score $\psi_\gamma(Z):=X(Y-X^\top\gamma)$ and Jacobian
$$A:=\frac{\partial}{\partial\gamma}\E[\psi_\gamma(Z)] = -\E[XX^\top]=-\Sigma.$$
The standard influence function for $Z$-estimators yields
$$\phi_{\mathrm{full}}^{(j)}(X,Y) = e_j^\top \Sigma^{-1} X \,(Y-X^\top\gamma) =: w_j(X)\,(Y-X^\top\gamma), \qquad w_j(X):=(\Sigma^{-1}X)_j.$$
To be compatible with the missing-label EIF form of Section~\ref{sec:general}, the residual term must be conditionally mean-zero given $X$. Let $m(X) := \E[Y \mid X]$. We decompose the full-data influence function into its conditional mean and conditionally mean-zero residual:
$$\bar\phi_j(X) := \E[\phi_{\mathrm{full}}^{(j)}(X,Y)\mid X]= w_j(X)[m(X)-X^\top\gamma],$$
and $$\zeta_j(X,Y) = \phi_{\mathrm{full}}^{(j)}(X,Y)-\bar\phi_j(X) = w_j(X)[Y-m(X)].$$
Thus, $\E[\zeta_j(X,Y) \mid X] = 0$. In the notation of Section~\ref{sec:general}, one may take $$h_j(X) = \gamma_j + w_j(X)[m(X) - X^\top\gamma], \qquad \zeta_j(X,Y) = w_j(X)[Y - m(X)].$$
The corresponding observed-data influence function is
$$\varphi^{(j)}(O) = h_j(X) - \gamma_j + \frac{\xi}{\pi(X)}\zeta_j(X,Y) = w_j(X)[m(X) - X^\top\gamma] + \frac{\xi}{\pi(X)}w_j(X)[Y - m(X)].$$
Thus, the policy-relevant conditional second moment is $$c_j(X) = \E[\zeta_j(X,Y)^2 \mid X] = w_j(X)^2\Var(Y \mid X).$$  If the linear model is correctly specified so that $m(X) = X^\top\gamma$, then $\bar\phi_j(X) = 0$, and the residual simplifies to $$\zeta_j(X,Y) = w_j(X)(Y - X^\top\gamma), \qquad c_j(X) = w_j(X)^2\Var(Y \mid X).$$

\paragraph{U-statistics.}
First, we distinguish between the exact EIF residual and the equivalent policy score ($c(X)$ is only needed up to a positive target-specific multiplicative constant since it doesn't change the optimizer). Let $\psi=\E[k(Z,Z')]$ be a (symmetric) U-statistic functional with $Z=(X,Y)$ and i.i.d.\ copy $Z'$.
Define the (scaled) Hájek projection
$$\tilde\psi(Z):=2\,\E\big[k(Z,Z')\mid Z\big] - \psi,$$
so that the full-data EIF is
$$\phi_{\mathrm{full}}(Z)=\tilde\psi(Z)-\psi = 2\{\E\big[k(Z,Z')\mid Z\big]-\psi\}.$$
(With this definition, we can absorb the conventional factor of $2$.) Writing $\tilde\psi(X,Y)$ for $\tilde\psi(Z)$,
$$\E[\phi_{\mathrm{full}}(Z)\mid X]=\E[\tilde\psi(X,Y)\mid X]-\psi, \qquad \zeta(X,Y)=\tilde\psi(X,Y)-\E[\tilde\psi(X,Y)\mid X].$$
Therefore,
$$c(X)=\E[\zeta(X,Y)^2\mid X]=\Var(\tilde\psi(X,Y)\mid X) = 4\Var(\E\big[k(Z,Z')\mid Z\big]).$$
For learning the labeling policy, we can use the score
$c(X):=\Var(q(X,Y)\mid X)$ equivalently, multiplying
the objective by a positive constant does not change the optimizer.

\paragraph{Kendall's $\tau$ (binary outcome case).}
As a special case of the U-statistic setup, let $Y\in\{0,1\}$ and consider Kendall's rank correlation
$$\tau = \E\big[\mathrm{sign}(Y-Y')\text{sign}(X-X')\big],$$
corresponding to the symmetric kernel $k((X,Y),(X',Y'))=\text{sign}(Y-Y')\text{sign}(X-X')$.
Let $p_1=\P(Y=1)$, $p_0=\P(Y=0)$, and define the class-conditional CDFs
$$F_y(t):=\P(X\le t\mid Y=y),\quad y\in\{0,1\}, \qquad p(X):=\P(Y=1\mid X).$$
A direct calculation of the Hájek projection gives (up to the same scaling convention as above) the two class-conditional projections
$$\tilde\psi(X,1)=p_0\{2F_0(X)-1\}, \qquad \tilde\psi(X,0)=p_1\{1-2F_1(X)\}.$$
Hence the centered residual term is of the same form as the general U-statistic row, $$\zeta(X,Y)=\tilde\psi(X,Y)-\E[\tilde\psi(X,Y)\mid X],$$
and because $Y\mid X$ is Bernoulli with success probability $p(X)$, its conditional variance is
$$c(X)=\Var(\tilde\psi(X,Y)\mid X) = p(X)\{1-p(X)\}\,\left(\tilde\psi(X,1)-\tilde\psi(X,0)\right)^2,$$
where we've excluded the multiplicative factor of 4 (so technically it's a proportional-to sign).
Substituting the expressions above yields
$$\tilde\psi(X,1)-\tilde\psi(X,0) = p_0(2F_0(X)-1) - p_1(1-2F_1(X)) = -\Big(1-2[p_0F_0(X)+p_1F_1(X)]\Big),$$
and therefore $$c(X)=p(X)\{1-p(X)\}\left(1 - 2[p_0F_0(X)+p_1F_1(X)]\right)^2.$$

\paragraph{Average treatment effect (ATE).}
Recall the average treatment effect $\psi = \E[Y(1)-Y(0)]$, where we adopt standard potential outcome framework notation: $A\in\{0,1\}$ denotes treatment, $Y(a)$ is the potential outcome under treatment level $a$, and $Y=Y(A)$ is the observed outcome. Let $e(X):=\P(A=1\mid X)$ and $m_a(X):=\E[Y\mid A=a,X]$ for $a\in\{0,1\}$ denote the propensity score and mean function, respectively. Under standard identification conditions, the full-data efficient influence function is
$$\phi_{\mathrm{ATE,full}}(X,A,Y) = \frac{A}{e(X)}\{Y-m_1(X)\} - \frac{1-A}{1-e(X)}\{Y-m_0(X)\} + m_1(X)-m_0(X)-\psi.$$

Letting $W=(X,A)$ denote the always observed variables, we take conditional expectation given $(X,A)$:
$$\bar\phi_{\mathrm{ATE}}(X,A) := \E[\phi_{\mathrm{ATE,full}}(X,A,Y)\mid X,A] = m_1(X)-m_0(X)-\psi,$$
since $\E[Y-m_1(X)\mid X,A=1]=0$ and $\E[Y-m_0(X)\mid X,A=0]=0.$ Hence the residual term is
\begin{align*}
    \zeta(X,A,Y) &= \phi_{\mathrm{ATE,full}}(X,A,Y)-\bar\phi_{\mathrm{ATE}}(X,A) \\
    &= \frac{A}{e(X)}\{Y-m_1(X)\} - \frac{1-A}{1-e(X)}\{Y-m_0(X)\}.
\end{align*}

In our missing-label model, we allow the label querying rule to depend on both features and treatment, with
$$\pi(X,A):=\P(\xi=1\mid X,A), \qquad \xi \perp Y \mid (X,A).$$
The corresponding observed-data EIF is therefore
$$\varphi_{\mathrm{ATE}}(X,A,\xi,\xi Y) = \frac{\xi}{\pi(X,A)} \left[\frac{A}{e(X)}\{Y-m_1(X)\} - \frac{1-A}{1-e(X)}\{Y-m_0(X)\}\right] + m_1(X) - m_0(X) - \psi.$$

The policy-relevant conditional second moment is $c(X,A)=\E[\zeta(X,A,Y)^2\mid X,A]$. Since only one residual term is ``active'' for a given treatment value (and $A \in \{0,1\}$),
$$c(X,A) = \frac{A}{e(X)^2}\Var(Y\mid X,A=1) + \frac{1-A}{(1-e(X))^2}\Var(Y\mid X,A=0).$$
Equivalently, writing $\sigma_a^2(X):=\Var(Y\mid X,A=a)$,
$$c(X,A) = \frac{A\,\sigma_1^2(X)}{e(X)^2} + \frac{(1-A)\,\sigma_0^2(X)}{(1-e(X))^2}.$$
Thus the policy-relevant variance contribution depends on both the treatment propensity and the conditional outcome variability within each treatment arm.

\section{Additional algorithmic details}\label{app:algorithms}

We provide additional implementation details for the procedures described in Sections~\ref{sec:crossfitting} and \ref{sec:sequential}. 
The algorithms below are specific to the odds-ratio setting for concreteness, though the same algorithm templates apply more generally to other mean-type functions and the broader EIF-based framework of Section~\ref{sec:general}. The policy-learning workflow and sampling steps remain the same across estimands; what changes is the form of the final estimator and its variance estimate, which are determined by the corresponding efficient influence function, affecting the optimization objective. For non-mean estimands, the subgroup mean estimators appearing below would be replaced by the corresponding one-step or TMLE estimator, and inference would be based on the empirical variance of the estimated influence-function contributions.

\paragraph{Cross-fitting}
We provide detailed pseudocode for the crossfitting procedure described in Section~\ref{sec:crossfitting}. In the odds-ratio setting, this takes the form of constructing fold-specific subgroup mean estimators and combining at the end, with a global label policy initially learned. More generally, in the EIF-based framework introduced in Section~\ref{sec:general}, the same template applies: one fold is used to train or update the nuisance quantities needed to construct the estimator, while the other fold is used to evaluate the correspponding one-step or TMLE estimator. The final estimator is obtained by aggregating the fold-specific estimates, and the variance estimate is based onthe empirical varinace of the corresponding estimated influence-function contributions. For simplicity, we present the algorithm as two-fold cross-fitting, but the same construction extends immediately to fixed $K$-fold cross-fitting by training on all folds except the held-out one.

Two-fold cross-fitting proceeds by training the predictor on one fold and evaluating the debiased estimator on the held-out fold, then reversing the roles of the folds. Then, the two estimators are aggregated to construct the cross-fit estimator. Equivalently, each observation is assigned an out-of-fold debiased pseudo-outcome $T_i^{\cf}$, and the final estimator averages these pseudo-outcomes within each group. A theoretical justification for the validity of the cross-fitting procedure is provided in Appendix~\ref{app:crossfitting_OR}.

\begin{algorithm}[H]
\caption{Algorithm for two-fold cross-fitted batch inference}
\label{alg:crossfitting}
\begin{algorithmic}[1]
\State \textbf{Input:} Unlabeled covariates $(X_1,\ldots,X_n)$ with group indicators $Z_i\in\{0,1\}$, initial predictor $f$, labeling budget $n_{\budget}$.
\State \textbf{Global policy.} Learn a global labeling policy $\pi:\mathcal X\to[0,1]$ using only the unlabeled covariates and the initial predictor $f$; see Section~\ref{sec:optim}.
\State \textbf{Labeling.} Sample $\xi_i\sim\Bern(\pi(X_i))$ independently and observe $Y_i$ for units with $\xi_i=1$.
\State \textbf{Splitting.} Randomly split indices $\{1,\ldots,n\}$ into two folds $\mathcal I_1$ and $\mathcal I_2$.
\State \textbf{Out-of-fold prediction.} Fine-tune two predictors:
$$\hat f^{(-1)} \text{ using labeled observations in } \mathcal I_2, \qquad \hat f^{(-2)} \text{ using labeled observations in } \mathcal I_1 .$$
\State \textbf{Cross-fitted pseudo-outcomes.} For each observation, define
$$T_i^{\cf} = \begin{cases} \displaystyle \hat f^{(-1)}(X_i) + \frac{\xi_i}{\pi(X_i)} \{Y_i-\hat f^{(-1)}(X_i)\}, & i\in\mathcal I_1,\\[1.25em] \displaystyle \hat f^{(-2)}(X_i) + \frac{\xi_i}{\pi(X_i)} \{Y_i-\hat f^{(-2)}(X_i)\}, & i\in\mathcal I_2. \end{cases}$$
The residual term is evaluated only for labeled units.
\State \textbf{Group mean estimation.} For $k\in\{0,1\}$, compute
$$\hat\mu_k^{\cf} = \frac{1}{n_k}\sum_{i:Z_i=k}T_i^{\cf}, \qquad n_k:=\sum_{i=1}^n \mathbf 1\{Z_i=k\}.$$
\State \textbf{Odds-ratio estimation.} Compute
$$\hat\theta^{\cf} = \log\frac{\hat\mu_1^{\cf}}{1-\hat\mu_1^{\cf}} - \log\frac{\hat\mu_0^{\cf}}{1-\hat\mu_0^{\cf}}.$$
\State \textbf{Variance estimation.} Let $\hat p_k=n_k/n$. Define
$$\hat\phi_{\theta,i}^{\cf} = \frac{\mathbf 1\{Z_i=1\}}{\hat p_1} \frac{T_i^{\cf}-\hat\mu_1^{\cf}} {\hat\mu_1^{\cf}(1-\hat\mu_1^{\cf})} - \frac{\mathbf 1\{Z_i=0\}}{\hat p_0} \frac{T_i^{\cf}-\hat\mu_0^{\cf}} {\hat\mu_0^{\cf}(1-\hat\mu_0^{\cf})}.$$
Set
$$\hat V_\theta^{\cf} = \frac{1}{n}\sum_{i=1}^n \left(\hat\phi_{\theta,i}^{\cf}\right)^2.$$
\State \textbf{Return} the confidence interval
$$\hat\theta^{\cf} \pm z_{1-\alpha/2}\sqrt{\frac{\hat V_\theta^{\cf}}{n}}.$$
\end{algorithmic}
\end{algorithm}

\paragraph{Full sequential pseudocode (deferred from Section~\ref{sec:sequential}).}
For readability, the main text emphasizes the conceptual differences between the batch and sequential settings and refers to Figure~\ref{fig:method} for a high-level overview. We provide here the complete step-by-step pseudocode for the sequential procedure, including burn-in, periodic model/policy updates, and budget-feasible probability clipping. Following Section~\ref{sec:sequential}, we enforce budget feasibility by clipping the proposed labeling probability using the cumulative-budget rule. Define the assigned cumulative budget at step $t$ by $$\budget^{(t)} = \frac{t\, \budget}{n},$$ and let $n_{\text{lab}}^{(t-1)}$ denote the number of labels collected up to and including step $t-1$. Then, the remaining step-$t$ allowance is defined as $$n_\Delta^{(t)} := \budget^{(t)} - n_{\text{lab}}^{(t-1}).$$
Given a proposed labeling probability $\pi_t(X_t)$ based on the optimized policy at time step $t$, we use the budget feasible probability (clipped to the feasible interval $[0,1]$) $$\tilde \pi_t(X_t) := \left[\min\left\{\pi_t(X_t), \, n_{\Delta}^{(t)}\right\}\right]_{[0,1]} = \min\left\{1, \max\left\{0, \min\left\{\pi_t(X_t), \, n_{\Delta}^{(t)}\right\}\right\}\right\}.$$

\begin{algorithm}
\caption{Sequential active inference with optimized labeling}
\label{alg:seq}
\begin{algorithmic}[1]
\Require Unlabeled features $(X_1,\dots,X_n)$ with group indicators $Z_i \in \{0,1\}$; total labeling budget $\budget$; update batch size $B$; optional initial predictor $f_0$
\State $\mathcal{D}^{\mathrm{lab}} \leftarrow \emptyset$ \Comment{newly labeled samples for periodic fine-tuning}
\State $b_1 \leftarrow \budget$ \Comment{remaining labeling budget}
\State $\mathcal{H} \leftarrow \emptyset$ 
\State $n_{\mathrm{burn}} \leftarrow 0$
\If{no initial predictor $f_0$ is provided}
    \State Choose burn-in size $n_{\mathrm{burn}}$ (with $0 \le n_{\mathrm{burn}} \le \min\{n,\budget\}$)
    \For{$i=1,\dots,n_{\mathrm{burn}}$}
        \State Collect label $Y_i$ and add $(X_i, Y_i)$ to $\mathcal{H}$ \Comment{burn-in labels used to initialize predictor}
    \EndFor
    \State Fit initial predictor $f_{n_{\text{burn}}}$ on $\{(X_i,Y_i)\}_{i=1}^{n_{\mathrm{burn}}}$
    \State $b_{n_{\text{burn} + 1}} \leftarrow \budget - n_{\text{burn}}$.
    \Else
    \State Set $n_{\mathrm{burn}} = 0$ and $f_1 \leftarrow f_0$.
\EndIf

\If{$n_{\text{burn}} > 0$ then}
\State{Using $f_{n_{\text{burn} + 1}}$, compute uncertainty scores for the remaining features and learn policies $\pi_{n_{\text{burn} + 1}}^{(1)}(\cdot)$ and $\pi_{n_{\text{burn} + 1}}^{(0)}(\cdot)$ via Section~\ref{sec:optim}}
\Else
\State Using $f_1$, compute uncertainty scores for all features and learn policies $\pi_1^{(1)}(\cdot)$ and $\pi_1^{(0)}(\cdot)$ via Section~\ref{sec:optim}.
\EndIf

\For{$t=n_{\text{burnin}}+1,\dots,n$}
    \State $r_t \leftarrow n-t+1$ \Comment{remaining features including $X_t$}
    \State $p_t(X_t) \leftarrow \pi_t^{(Z_t)}(X_t)$ 
    \Comment{group-based policy}
    \State $\tilde{p}_t(X_t) \leftarrow \texttt{clip}(p_t(X_t); b_t)$    
    \Comment{Budget-feasible clipping}

    \State Sample labeling decision $\xi_t \sim \Bern(\tilde p_t(X_t))$
    \If{$\xi_t=1$}
        \State Collect label $Y_t$
        \State $\mathcal{D}^{\mathrm{lab}} \leftarrow \mathcal{D}^{\mathrm{lab}} \cup \{(X_t,Y_t)\}$
    \EndIf

    \State $b_{t+1} \leftarrow b_t-\xi_t$ \Comment{update remaining budget}
    \State $f_{t+1} \leftarrow f_t$
    \State $\pi_{t+1}^{(1)}(\cdot) \leftarrow \pi_t^{(1)}(\cdot)$, \quad $\pi_{t+1}^{(0)}(\cdot) \leftarrow \pi_t^{(0)}(\cdot)$

    \If{$|\mathcal{D}^{\mathrm{lab}}| = B$ \textbf{and} $t<n$}
        \State Fine-tune predictor on collected labels:
        $$f_{t+1} \leftarrow \texttt{finetune}(f_t,\mathcal{D}^{\mathrm{lab}})$$
        \State Recompute uncertainty scores for remaining features $\{X_{t+1},\dots,X_n\}$ using $f_{t+1}$.
        \State Recompute labeling policies $\pi_{t+1}^{(1)}(\cdot), \pi_{t+1}^{(0)}(\cdot)$ via Section~\ref{sec:optim} using $f_{t+1}$.
        \State Reset $\mathcal{D}^{\mathrm{lab}} \leftarrow \emptyset$.
    \EndIf
\EndFor

\State Compute subgroup mean estimates $\hat{\mu}_1^\pi,\hat{\mu}_0^\pi$ using the realized clipped probabilities $\tilde{\pi}_t$
\State Form the sequential log odds-ratio estimator $$\hat\theta^\pi = \log\left(\frac{\hat{\mu}_1^\pi}{1 - \hat{\mu}_1^\pi}\right) - \log\left(\frac{\hat{\mu}_0^\pi}{1 - \hat{\mu}_0^\pi}\right).$$ 
\State Estimate the subgroup variances using the corresponding sequential plug-in estimators: $V_1, V_0$.
\State \textbf{Return} confidence interval based on the sequential asymptotic normality result.
\end{algorithmic}
\end{algorithm}

\section{Asymptotic theory} \label{app:CLT}

This appendix contains the main asymptotic arguments underlying Section~\ref{sec:batch}. We begin by collecting the regularity conditions and assumptions used throughout this section. Then, we present the intermediate results used in the proof of the batch CLT: a multivariate CLT for the subgroup mean estimators under the oracle policy, and the consistency and local asymptotic behavior of the estimated spline coefficients. These ingredients are then combined in Appendix~\ref{app:equicontinuity} to prove the batch CLT under the learned policy. We subsequently record the corresponding odds-ratio corollary, the extension to the broader EIF-based framework of Section~2.2, and the sequential asymptotic normality result. 

\subsection{Regularity conditions}
\label{app:regularity}

The regularity conditions used throughout Appendix~\ref{app:CLT} are all presented here. These assumptions are stated at a high level to avoid repeating conditions across the oracle-policy CLT, the learned-policy CLT, the generalized EIF result, and the sequential results in subsequent sections.

\begin{assumption}[Basic sampling, positivity, and moments]
\label{ass:basic_sampling}
The full data $(X_i,Y_i)_{i=1}^n$ are i.i.d. from a distribution $P$. In the two-group setting, $G_0$ and $G_1$ are disjoint, $\P(X\in G_k)>0$, and $\mu_k:=\E[Y\mid X\in G_k]$ is finite for $k\in\{0,1\}$. For log odds-ratio inference, there exists $\eta>0$ such that
$$\eta \leq \mu_k \leq 1-\eta,\qquad k\in\{0,1\}.$$
The relevant policies are uniformly positive: there exists $\underline\pi>0$ such that
$$\inf_{x,\beta\in\N(\beta^*)}\pi_\beta(x)\geq \underline\pi .$$
The predictor $f$ used in the estimating equations is fixed, trained on independent pilot data, or cross-fitted so that the usual first-order CLT arguments apply. The relevant second moments are finite; in particular,
$$\E\!\left[\left\{f(X)+\frac{\xi}{\pi_{\beta^*}(X)}(Y-f(X))\right\}^2\right]<\infty .$$
\end{assumption}

\begin{assumption}[Policy-learning regularity]
\label{ass:policy_learning}

Let $U=u(X)$ denote the uncertainty score and let $h(U)\in\mathbb R^B$
denote the fixed-dimensional spline basis. For $\beta\in\mathcal B$, define
$$s_\beta(X):=h(X)^\top\beta, \qquad \pi_\beta(X):=\exp\{-s_\beta(X)\}.$$
Let
$$m_\beta(X) := c(X)\exp\{s_\beta(X)\}$$
denote the objective contribution, and define (in empirical process notation)
$$Q(\beta):=P m_\beta = \E[c(X)\exp\{h(U)^\top\beta\}], \qquad G(\beta):=P \pi_\beta = \E[\exp\{-h(U)^\top\beta\}].$$
The population feasible set is
$$\mathcal \B_0 := \left\{ \beta\in\mathcal B: G(\beta) - \rho\le 0,\; s_\beta(U)\ge 0\ \text{a.s.} \right\},$$
and let $\beta^* := \arg\min_{\beta\in\mathcal \B_0} Q(\beta).$ We assume the following conditions.

\begin{enumerate}
\item[(B1)] \textbf{Compactness and identification.}
The parameter space $\mathcal B\subset\mathbb R^B$ is compact, the population
feasible set $\mathcal F$ is nonempty, and $\beta^*$ is the unique minimizer of
$Q(\beta)$ over $\mathcal F$.

\item[(B2)] \textbf{Bounded finite-dimensional basis.}
The basis $h$ is fixed-dimensional and uniformly bounded:
$$\sup_u \|h(u)\|_2 < \infty.$$
In addition, the variance weight is integrable:
$$\mathbb E|c(X)|<\infty.$$

\item[(B3)] \textbf{Inner feasible approximation.}
For every $\varepsilon>0$, there exists $\beta_\varepsilon\in\mathcal B$ such that
$$s_{\beta_\varepsilon}(U)\ge 0\; \text{a.s.}, \quad G(\beta_\varepsilon)=P g_{\beta_\varepsilon}<\rho, \quad \text{and} \quad Q(\beta_\varepsilon)\le Q(\beta^*)+\varepsilon.$$

\item[(B4)] \textbf{Estimated variance weights and budget convergence.}
The empirical budget fraction satisfies
$$\rho_n\to \rho.$$
If the empirical objective uses an estimated weight $\hat c$, define
$$\hat m_{\beta,n}(W,X) := \hat c(W)\exp\{s_\beta(U)\}.$$
We assume the plug-in objective is uniformly stable:
$$\sup_{\beta\in\mathcal B} \left| P_n\hat m_{\beta,n}-P_n m_\beta \right| =o_p(1).$$
Equivalently,
$$\sup_{\beta\in\mathcal B} \left| \frac1n\sum_{i=1}^n \{\hat c(W_i)-c(W_i)\} \exp\{h(U_i)^\top\beta\} \right| =o_p(1).$$
When $c$ is known or treated as fixed, this condition is omitted.
\end{enumerate}

\begin{remark}[Uniform convergence]\label{rem:ULLN}
Under (B1; compactness) and (B2; bounded basis, integrability), the classes
$$\{m_\beta:\beta\in\mathcal B\} \qquad\text{and}\qquad \{\pi_\beta:\beta\in\mathcal B\}$$
are Glivenko--Cantelli. Hence, for fixed $c$,
$$\sup_{\beta\in\mathcal B} |P_n m_\beta-Pm_\beta|=o_p(1), \qquad \sup_{\beta\in\mathcal B} |P_n \pi_\beta-P\pi_\beta|=o_p(1).$$
If $c$ is replaced by $\hat c$, assumption (B4) implies the corresponding uniform convergence of the plug-in objective:
$$\sup_{\beta\in\mathcal B} |P_n\hat m_{\beta,n}-Pm_\beta|=o_p(1).$$
The empirical pointwise constraints $s_\beta(X_i)=h(X_i)^\top\beta\ge 0,
\quad i\in [n],$
are controlled by the VC property of halfspaces. Indeed,
$$\left\{ x:h(x)^\top\beta<0,\ \beta\in\mathcal B \right\}$$
is a VC class in the transformed feature space $h(u)$. Therefore,
$$\sup_{\beta\in\mathcal B} \left| P_n\mathbf 1\{s_\beta(X)<0\} - P\mathbf 1\{s_\beta(X)<0\} \right| =o_p(1).$$
\end{remark}
\end{assumption}

\begin{assumption}[Empirical-process regularity for learned policies]
\label{ass:emp_process}
For each estimating function indexed by the policy parameter, the corresponding policy-indexed class is $P$-Donsker and is $L_2(P)$-continuous at $\beta^*$. In particular, if $\hat\beta\toP\beta^*$, then
$$\mathbb G_n\{m_{\hat\beta}-m_{\beta^*}\}=o_p(1), \qquad \mathbb G_n:=\sqrt{n}(\mathbb P_n-P).$$
\end{assumption}
\begin{remark}
    This assumption is verified for the fixed-dimensional spline policy class used in OPAL by the Donsker and stochastic equicontinuity arguments in Lemma~\ref{lem:donsker_equicont}. 
\end{remark}

\begin{assumption}[General EIF estimator regularity]
\label{ass:general_eif}
For generalized estimands in Section~\ref{sec:general}, the target admits an observed-data EIF of the form
$$\phi_\pi(X,\xi,Y) = h(X)-\psi+\frac{\xi}{\pi(X)}\zeta(X,Y), \qquad \E[\zeta(X,Y)\mid X]=0,\qquad \E[h(X)]=\psi .$$
The relevant second moments are finite:
$$\E[h(X)^2]<\infty, \qquad \E\!\left[\frac{\E[\zeta(X,Y)^2\mid X]}{\pi_{\beta^*}(X)}\right]<\infty .$$
The corresponding one-step or TMLE estimator satisfies the standard asymptotic linearity remainder condition under the learned policy:
$$\hat\psi^{\hat\pi}-\psi = \frac1n\sum_{i=1}^n \phi_{\pi_{\hat\beta}}(X_i,\xi_i,Y_i) +o_p(n^{-1/2}).$$
\end{assumption}

\begin{assumption}[Sequential martingale conditions]
\label{ass:sequential}
For the sequential results, the realized clipped probabilities $\tilde\pi_t(X_t)$ are predictable, i.e. measurable with respect to $(\mathcal F_{t-1},X_t)$, and satisfy the required positivity and moment conditions. The centered increments used in the sequential CLTs satisfy the variance-process convergence and Lindeberg conditions stated in Appendix~\ref{app:seq_general}.
\end{assumption}

\begin{remark}[Interpretation of the assumptions]
The conditions above are standard high-level regularity assumptions for combining constrained finite-dimensional
M-estimation with semiparametric one-step/TMLE theory. Assumptions~\ref{ass:basic_sampling} and
\ref{ass:policy_learning} ensure that the learned spline policy is well behaved and that inverse-propensity weights are
stable. Assumption~\ref{ass:general_eif}
is the usual asymptotic linearity condition for one-step and TMLE estimators; it abstracts away nuisance-estimation
details that depend on the target parameter. Finally, Assumption~\ref{ass:sequential} is the martingale analogue needed
for the sequential procedure. These assumptions are primarily used to isolate the effect of learning the labeling policy;
they can be weakened in special cases, for example when the policy is fixed or when the target is a simple mean.
\end{remark}

\subsection{Intermediate results for the batch CLT (deferred from Section~\ref{sec:batch})}
\label{app:batch_clt_proofs}

In Section~\ref{sec:batch}, we stated only the main batch CLT and its odds-ratio corollary for the sake of readability. We include here the two intermediate results used in the proof of Theorem~\ref{thm:batch_mean}: (i) a multivariate CLT for the subgroup mean estimators under the oracle policy $\pi_{\beta^*}$, and (ii) consistency and local asymptotic behavior of the estimated spline coefficients $\hat{\beta}$. These are then combined in Appendix~\ref{app:equicontinuity} via a stochastic decomposition argument to prove Theorem~\ref{thm:batch_mean}.

\subsubsection{Asymptotic normality under the oracle policy} \label{app:CLT_oracle}

We first consider the oracle setting in which the spline coefficients are fixed at their population-optimal values, so that the labeling policy is non-random. In this case, the observed-data summands are i.i.d., and the asymptotic normality of the subgroup mean estimators follows from a standard multivariate CLT.

\begin{proposition}[CLT using true optimization parameters]\label{prop:oracleCLT}
Under Assumption~\ref{ass:basic_sampling}, let
$$\hat{\bmu}^{\bbeta^*} = \begin{pmatrix} \hat{\mu}_1^{\bbeta_1^*} \\ \hat{\mu}_0^{\bbeta_0^*} \end{pmatrix}$$
denote the vector of subgroup mean estimators constructed under the oracle policies
$\pi_{\bbeta_1^*}^{(1)}$ and $\pi_{\bbeta_0^*}^{(0)}$. Then
$$\sqrt{n} \left( \begin{pmatrix} \hat{\mu}_1^{\bbeta_1^*} \\ \hat{\mu}_0^{\bbeta_0^*} \end{pmatrix} - \begin{pmatrix} \mu_1 \\ \mu_0 \end{pmatrix} \right) \toD \N\!\left( \mathbf 0, \begin{pmatrix} V_1 & 0 \\ 0 & V_0 \end{pmatrix} \right),$$
where
$$V_k = \Var\!\left( \frac{1\{X\in G_k\}}{\P(X\in G_k)} \left[ f(X)+\frac{\xi}{\pi_{\bbeta_k^*}^{(k)}(X)}(Y-f(X))-\mu_k \right] \right), \qquad k\in\{0,1\}.$$
\end{proposition}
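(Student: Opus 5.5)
The plan is to reduce the statement to the classical multivariate CLT for i.i.d.\ vectors, with a Slutsky step to handle the random group sizes $n_k$.

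\textbf{Step 1 (i.i.d.\ summands).} Under the oracle policy, $\bbeta_k^*$ is a fixed population quantity, so the global policy
\[
\pi^*(x)=1\{x\in G_1\}\pi^{(1)}_{\bbeta_1^*}(x)+1\{x\in G_0\}\pi^{(0)}_{\bbeta_0^*}(x)
\]
is a deterministic function. Since $\xi_i\sim\Bern(\pi^*(X_i))$ independently given $X_i$, and $\xi_i\indep Y_i\mid X_i$, the observations $O_i=(X_i,\xi_i,\xi_iY_i)$ are i.i.d. Define
\[
T_i = f(X_i)+\frac{\xi_i}{\pi^*(X_i)}\bigl(Y_i-f(X_i)\bigr),\qquad p_k=\P(X\in G_k).
\]
The subgroup estimator from Eq.~\eqref{eq:subgpmean} is
\[
\hat\mu_k=\frac{1}{n_k}\sum_i 1\{X_i\in G_k\}T_i .
\]

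\textbf{Step 2 (linearization).} I will write
\[
\hat\mu_k-\mu_k=\frac{n}{n_k}\cdot\frac1n\sum_i W_{ik},\qquad W_{ik}:=1\{X_i\in G_k\}(T_i-\mu_k).
\]
Conditioning on $X$ and using $\E[\xi\mid X]=\pi^*(X)$ gives $\E[T\mid X]=\E[Y\mid X]$. Hence
\[
\E[W_{ik}]=\E\bigl[1\{X\in G_k\}(Y-\mu_k)\bigr]=0
\]
by the definition of $\mu_k$. Uniform positivity $\pi^*\ge\underline\pi$ together with the second-moment condition in Assumption~\ref{ass:basic_sampling} gives $\E[W_{ik}^2]<\infty$.

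\textbf{Step 3 (joint CLT and Slutsky).} The i.i.d.\ vectors $(W_{i1},W_{i0})$ have mean zero and finite second moments. The cross term vanishes because $1\{X\in G_1\}1\{X\in G_0\}=0$ by disjointness, so their covariance matrix is diagonal. The multivariate Lindeberg--L\'evy CLT then gives
\[
n^{-1/2}\sum_i (W_{i1},W_{i0})\toD\N\bigl(0,\operatorname{diag}(\E W_1^2,\E W_0^2)\bigr).
\]
By the law of large numbers, $n/n_k\toP 1/p_k$. Slutsky's lemma then yields the stated limit with variance $\E[W_k^2]/p_k^2$.

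\textbf{Step 4 (matching the variance form).} Since $\E[W_k]=0$, we have $\E[W_k^2]/p_k^2=\Var\bigl(1\{X\in G_k\}(T-\mu_k)/p_k\bigr)$, which is exactly the $V_k$ in the statement.

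The only real subtlety is the random denominator $n_k$, which is handled by Slutsky's lemma in Step 3; the rest is routine. One should also note that the oracle policy being deterministic is what makes the summands i.i.d. That is precisely why the learned-policy case (Theorem~\ref{thm:batch_mean}) needs the separate equicontinuity argument.
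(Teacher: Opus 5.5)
Your proposal is correct and matches the paper's proof essentially step for step. Both use the centered summands $1\{X_i\in G_k\}(T_i-\mu_k)$, shown mean-zero by conditioning on $X$ under missing-at-random labeling, then the multivariate CLT with diagonal covariance by disjointness, with $n_k/n\toP p_k$ handling the random denominator; your exact identity $\hat\mu_k-\mu_k=(n/n_k)\,n^{-1}\sum_i W_{ik}$ plus Slutsky is just a slightly cleaner rendering of the paper's $o_p(1)$ linearization.
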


\begin{proof}
For each $k\in\{0,1\}$, write the estimator in the form
$$\hat{\mu}_k^{\bbeta_k^*} = \frac{1}{n_k}\sum_{i=1}^n 1\{X_i\in G_k\} \left[ f(X_i)+\frac{\xi_i}{\pi_{\bbeta_k^*}^{(k)}(X_i)}(Y_i-f(X_i)) \right].$$
Since $n_k/n \toP \P(X\in G_k)$ by the law of large numbers,
$$\sqrt{n}\big(\hat{\mu}_k^{\bbeta_k^*}-\mu_k\big) = \frac{1}{\P(X\in G_k)}\frac{1}{\sqrt{n}}\sum_{i=1}^n \varphi_{k,i} +o_p(1),$$
where
$$\varphi_{k,i} := 1\{X_i\in G_k\} \left[ f(X_i)+\frac{\xi_i}{\pi_{\bbeta_k^*}^{(k)}(X_i)}(Y_i-f(X_i))-\mu_k \right].$$

We first verify that $\E[\varphi_{k,i}]=0$. Conditioning on $X_i$ and using the missing-at-random assumption $\xi_i \indep Y_i \mid X_i$,
\begin{align*}
\E\!\left[
f(X_i)+\frac{\xi_i}{\pi_{\bbeta_k^*}^{(k)}(X_i)}(Y_i-f(X_i))
\,\middle|\, X_i
\right]
&=
f(X_i)
+
\frac{\E[\xi_i\mid X_i]}{\pi_{\bbeta_k^*}^{(k)}(X_i)}
\bigl(\E[Y_i\mid X_i]-f(X_i)\bigr) \\
&=
f(X_i)+\E[Y_i\mid X_i]-f(X_i) \\
&=
\E[Y_i\mid X_i].
\end{align*}
Hence
$$\E[\varphi_{k,i}] = \E\!\left[ 1\{X_i\in G_k\}\bigl(\E[Y_i\mid X_i]-\mu_k\bigr) \right] = 0,$$
since $\mu_k=\E[Y\mid X\in G_k]$.

By Assumption 1, the random vectors $(\varphi_{1,i},\varphi_{0,i})^\top$ are i.i.d.\ with finite second moments, so the multivariate CLT implies
$$\frac{1}{\sqrt{n}}\sum_{i=1}^n \begin{pmatrix} \varphi_{1,i}/\P(X\in G_1) \\ \varphi_{0,i}/\P(X\in G_0) \end{pmatrix} \toD \N\!\left( 0,\Sigma^* \right).$$
It remains to identify the covariance matrix. The diagonal entries are exactly $V_1$ and $V_0$. For the off-diagonal term, note that $G_1\cap G_0=\emptyset$, so
$$1\{X_i\in G_1\}1\{X_i\in G_0\}=0 \qquad \text{a.s.}$$
Therefore
$$\Cov(\varphi_{1,i},\varphi_{0,i})=0.$$
Thus $\Sigma^*$ is diagonal, completing the proof.
\end{proof}

\subsubsection{Consistency and local asymptotic behavior of the learned policy parameters}
\label{app:beta_asymptotics}

This subsection studies the behavior of the spline coefficients obtained from the sample-level convex program in Section~\ref{sec:optim}. The result is stated in a form that applies both to the odds-ratio example and to the broader EIF-based setting of Section~\ref{sec:general}: the estimand affects the optimization only through the weight function $c(\cdot)$ (or its sample proxy $\hat c(\cdot)$), while the spline policy class and constraints are unchanged.

\paragraph{General policy-learning program.}
For $\beta\in\R^B$, recall the defined policy class (cf.\ Eq.~\eqref{eq:splineprobs})
$$\pi_\beta(X_i) := \exp\{-h(U_i)^\top \beta\}.$$
Let $\hat c(X_i)$ denote the empirical variance weight used in the objective (e.g.\ the plug-in proxy for $c(X)=\E[\zeta^2\mid X]$) and recall the empirical program (as presented in \ref{app:regularity})
\begin{align}
\min_{\beta\in\mathcal B}\quad & Q_n(\beta):=\frac{1}{n}\sum_{i=1}^n \hat c(X_i)\exp\{h(U_i)^\top \beta\} \label{eq:Qn_def}\\
\text{s.t.}\quad &
h(U_i)^\top \beta \ge 0,\quad i=1,\dots,n, \label{eq:nonneg_constraint}\\
& \frac{1}{n}\sum_{i=1}^n \exp\{-h(U_i)^\top \beta\}\le \rho_n, 
\qquad \rho_n:=\frac{n_{\budget}}{n}. \label{eq:budget_constraint_avg}
\end{align}
$\hat\beta_n$ denotes a (measurable) minimizer of \eqref{eq:Qn_def}--\eqref{eq:budget_constraint_avg}.

The corresponding population problem is
\begin{align}
\min_{\beta\in\mathcal B}\quad & Q_0(\beta):=\E\!\left[c(X)\exp\{h(u(X))^\top \beta\}\right] \label{eq:Q0_def}\\
\text{s.t.}\quad &
\E[\exp\{-h(u(X))^\top \beta\}] \le \rho, \label{eq:budget_constraint_pop} \\
& h(u(X))^\top\beta \geq 0 \quad a.s.
\end{align}

\paragraph{Consistency.}
\begin{lemma}[Consistency of the learned policy parameters]\label{lem:beta_consistency_general}
Under Assumption~\ref{ass:policy_learning}(B1)--(B4), $\hat\beta\toP \beta^*$.
\end{lemma}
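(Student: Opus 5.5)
This is a consistency argument for a constrained M-estimator with estimated constraints, run in the standard argmin-consistency style (van der Vaart, Thm.~5.7). The plan has two ingredients. First, uniform convergence of the empirical objective and budget function, which Remark~\ref{rem:ULLN} and (B4) already supply. Second, a two-sided comparison of the empirical and population feasible sets: no feasibility is lost in the limit, and population-optimal points can be approximated from inside.

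Uniform convergence comes first. By Remark~\ref{rem:ULLN} together with (B4),
\[
\sup_{\beta\in\mathcal B}|Q_n(\beta)-Q_0(\beta)|=o_p(1),\qquad \sup_{\beta\in\mathcal B}|P_n\pi_\beta-G(\beta)|=o_p(1),
\]
and $\rho_n\to\rho$. Since $\mathcal B$ is compact and $h$ is bounded, $Q_0$ and $G$ are continuous (dominated convergence with $\E|c|<\infty$). Identification (B1) then gives a well-separated minimum: for every $\delta>0$,
\[
\inf\{Q_0(\beta):\beta\in\mathcal B_0,\ \|\beta-\beta^*\|\ge\delta\}>Q_0(\beta^*).
\]
The reason is that a continuous function on a compact set attains its infimum, and that infimum cannot equal $Q_0(\beta^*)$ by uniqueness of the minimizer.

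The second step is the upper bound $Q_n(\hat\beta_n)\le Q_0(\beta^*)+o_p(1)$. Fix $\varepsilon>0$ and take $\beta_\varepsilon$ from (B3). It satisfies $s_{\beta_\varepsilon}\ge0$ a.s., so $h(U_i)^\top\beta_\varepsilon\ge0$ for all $i$ almost surely. It also has strict slack $G(\beta_\varepsilon)<\rho$, so $P_n\pi_{\beta_\varepsilon}\le\rho_n$ with probability tending to one. Hence $\beta_\varepsilon$ is feasible for the empirical program w.p.$\to1$, and therefore
\[
Q_n(\hat\beta_n)\le Q_n(\beta_\varepsilon)=Q_0(\beta_\varepsilon)+o_p(1)\le Q_0(\beta^*)+\varepsilon+o_p(1).
\]
Uniform convergence then transfers this to $Q_0(\hat\beta_n)\le Q_0(\beta^*)+\varepsilon+o_p(1)$.

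The main obstacle is the third step: showing that $\hat\beta_n$ is asymptotically population-feasible, i.e.\ $\mathrm{dist}(\hat\beta_n,\mathcal B_0)\toP0$. The budget part is easy, since $G(\hat\beta_n)\le P_n\pi_{\hat\beta_n}+o_p(1)\le\rho+o_p(1)$. The a.s.\ nonnegativity constraint is harder, because the empirical program only enforces it at sample points. I would handle it through the set
\[
\mathcal B^{\mathrm{emp}}_0=\{\beta:\ P(h(U)^\top\beta<0)=0\}.
\]
Note that $\P(h(U)^\top\beta<-\eta)$ is lower semicontinuous in $\beta$, and for $\hat\beta_n$ its empirical analogue is exactly zero. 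The VC uniform law in Remark~\ref{rem:ULLN} then gives $P(h(U)^\top\hat\beta_n<0)=o_p(1)$. Next I would argue by compactness: any subsequential limit $\bar\beta$ of $\hat\beta_n$ satisfies $P(h(U)^\top\bar\beta<-\eta)=0$ for every $\eta>0$, hence $s_{\bar\beta}\ge0$ a.s. By continuity it also satisfies $G(\bar\beta)\le\rho$, so $\bar\beta\in\mathcal B_0$. If this route gets delicate, I would instead use boundedness of $h$ to relax the constraint set to $\mathcal B_0^\eta=\{\beta: G(\beta)\le\rho+\eta,\ P(s_\beta<-\eta)\le\eta\}$. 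These sets decrease to $\mathcal B_0$ as $\eta\downarrow0$, and the empirical solution lies in $\mathcal B_0^\eta$ w.p.$\to1$.

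To conclude, combine the steps by contradiction. Suppose $\P(\|\hat\beta_n-\beta^*\|\ge\delta)\not\to0$. Passing to a subsequence and using compactness of $\mathcal B$ and the a.s.\ representation, some limit point $\bar\beta\in\mathcal B_0$ has $\|\bar\beta-\beta^*\|\ge\delta$ and, by continuity of $Q_0$ and Step 2 with $\varepsilon\downarrow0$, $Q_0(\bar\beta)\le Q_0(\beta^*)$. This contradicts the well-separated minimum from the first step. Hence $\hat\beta_n\toP\beta^*$.
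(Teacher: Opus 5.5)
Your proposal is correct and follows essentially the same route as the paper's proof: (B3) supplies a sample-feasible $\beta_\varepsilon$ that bounds $\limsup Q_0(\hat\beta_n)$ by $Q_0(\beta^*)$, the Glivenko--Cantelli and VC uniform laws of Remark~\ref{rem:ULLN} make every limit point of $\hat\beta_n$ population-feasible, and compactness plus uniqueness of $\beta^*$ finish the argument. You are also more explicit than the paper on two points it glosses: that $\beta\mapsto\P(s_\beta(X)<0)$ is only lower semicontinuous (the direction actually needed, where the paper says ``by continuity''), and the final passage from convergence of objective values to $\hat\beta_n\toP\beta^*$.
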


\begin{proof}
    Recall empirical problem is $$\hat{\beta}_n \in \arg\min_{\beta \in \B} P_n [m_\beta(X)]$$ subject to $$P_n [\pi_\beta(X)] < \rho_n, \quad s_\beta(X_i) \geq 0, \; i \in [n]$$ where we've defined $s_\beta(X) := h(X)^\top\beta$, $\pi_\beta(X) = \exp\{-s_\beta(U)\}$, and $m_\beta(X) = c(X)\exp\{s_\beta(X)\}$. The population problem is $$\beta^* \in \arg\min_{\beta \in \B_0} P[m_\beta(X)],$$ where the population feasible set is defined as $$\B_0 = \{\beta \in \B: P \pi_\beta \leq \rho, \; s_\beta(X) \geq 0 \; a.s.\} = \{\beta \in \B: P \pi_\beta \leq \rho, \; \P(s_\beta(X) < 0) = 0\}.$$ First, we show $\lim\sup_{n \to \infty} P[m_{\hat\beta_n}(X)] \leq P[m_{\beta^*}(X)]$. Fix $\varepsilon > 0$. By feasibility [Assumption~\ref{ass:policy_learning}(B3)], there exists $\beta_\varepsilon \in \B$ such that $P[\pi_{\beta_\varepsilon}] < \rho$, $\P(s_{\beta_{\varepsilon}} < 0) = 0$, and 
    \begin{equation} \label{eq:feasibility}
        P[m_{\beta_\varepsilon}] \leq P[m_{\beta^*}] + \varepsilon.
    \end{equation} Since $P[\pi_{\beta_\varepsilon}] < \rho$, then by LLN and $\rho_n \to \rho$, $\P(P_n[\pi_{\beta_\varepsilon}] < \rho_n) \to 1$ and $\P(s_{\beta_{\varepsilon}}(X) < 0) = 0 \implies \P(s_{\beta_{\varepsilon}}(X_i) \geq 0 \; \forall i \in [n]) = 1$. Thus, $\beta_\varepsilon$ is sample feasible with high probability. By definition of optimality of $\hat{\beta}_n$ as an empirical problem minimizer, \begin{equation}\label{eq:optimizer}
        P_n[m_{\hat{\beta}_n}(X)] \leq P_n[m_{\beta_\varepsilon}(X)].
    \end{equation} Then, 
    \begin{align*}
        P[m_{\hat{\beta}_n}] &= P_n[m_{\hat{\beta}_n}] + (P - P_n)[m_{\hat{\beta}_n}] \\
        &\leq P_n[m_{\hat{\beta}_n}] + \sup_{\beta \in \B}|P_n m_\beta - Pm_\beta| \\
        &\leq P_n[m_{\beta_\varepsilon}] + o_p(1) & \text{by \eqref{eq:optimizer} and Remark~\ref{rem:ULLN}} \\
        &\leq P[m_{\beta_\varepsilon}] + o_p(1) &  \text{by LLN } P_n [m_{\beta_\varepsilon}] \to P[m_{\beta_\varepsilon}] \\
        &\leq P[m_{\beta^*}] + \varepsilon + o_p(1) & \text{by \eqref{eq:feasibility}}
    \end{align*}
    Thus, $\lim\sup_n P[m_{\hat{\beta}_n}] \leq P[m_{\beta^*}] + \varepsilon$, and taking $\varepsilon \to 0$, $\lim\sup_{n \to \infty} P[m_{\beta}(X)] \leq P[m_{\beta^*}(X)]$ as desired.

    Now, we show $\lim\inf_{n \to \infty} P[m_{\hat\beta_n}(X)] \geq P[m_{\beta^*}(X)]$. Take the sequence $\hat{\beta}_n$. By compactness [Assumption~\ref{ass:policy_learning}(B1)], there exists a subsequence $\hat{\beta}_{n_k} \to \bar{\beta}$. Since $\hat{\beta}_n$ is sample feasible, $P_n[\pi_{\hat{\beta}_n}] \leq \rho_n$. Thus, by uniform convergence (Remark~\ref{rem:ULLN}) and $\rho_n \to \rho$, $P[\pi_{\hat{\beta}_n}] \leq \rho + o_p(1)$. By continuity of $\pi_\beta$, this implies $P[\pi_{\bar{\beta}}] \leq \rho$. Sample feasibility also implies $s_{\hat{\beta}_n}(X_i) \geq 0$ for all $i \in [n]$, or equivalently, $1\{s_{\hat{\beta}_n}(X_i) < 0\} = 0$ for all $i \in [n]$ and hence $P_n 1\{s_{\hat{\beta}_n}(X_i) < 0\} = 0$.  Since the class $\{1\{s_{\beta}(x) < 0\}: \beta \in \B\}$ is VC (Remark~\ref{rem:ULLN}),
    \begin{align*}
        P[1\{s_{\hat{\beta}_n}(X) < 0\}] &= |0 - P[1\{s_{\hat{\beta}_n}(X) < 0\}]| \\
        &= |P_n[1\{s_{\hat{\beta}_n}(X) < 0\}] - P[1\{s_{\hat{\beta}_n}(X) < 0\}]| \\
        &<\sup_{\beta}|P_n[1\{s_{{\beta}_n}(X) < 0\}] - P[1\{s_{{\beta}_n}(X) < 0\}]| \to 0
    \end{align*}
    Thus, along any subsequence $\hat{\beta}_{n_k} \to \bar{\beta}$, $\P(s_{\bar{\beta}}(X) < 0) = 0$ by continuity, and $\bar{\beta} \in \B_0$ is population feasible. By definition of optimality of $\beta^*$ as the unique minimizer of the population problem, $P[m_{\bar{\beta}}] \geq P[m_{\beta^*}]$. Applying continuity and compactness: $\lim\inf_{n \to \infty} P[m_{\beta}(X)] \geq P[m_{\beta^*}(X)]$.
\end{proof}

The consistency result above is sufficient for the learned-policy CLT, since the empirical-process argument only
requires $\hat\beta\toP\beta^*$. The following proposition is a stronger characterization of the learned policy
parameters themselves. It is not needed for validity of the final OPAL estimator, but records conditions under which the
empirical optimizer also admits a standard $\sqrt n$-asymptotic normal expansion.

\paragraph{Application to the odds-ratio case.}
In the odds-ratio setting, we use a block-structured parameter $\beta=(\beta^{(1)},\beta^{(0)})$ and group-specific policies $\pi^{(k)}_{\beta^{(k)}}(X)=\exp\{-h(u(X))^\top\beta^{(k)}\}$. The objective and constraint in \eqref{eq:Qn_def}--\eqref{eq:budget_constraint_avg} become the corresponding sums over $i\in G_1$ and $i\in G_0$ (cf.\ Eq.~\eqref{eq:optim_full}). The preceding result applies componentwise to $(\hat\beta^{(1)},\hat\beta^{(0)})$.

\paragraph{$\sqrt{n}$-rate and asymptotic normality via KKT.}
We now impose additional local regularity conditions for characterizing the
learned policy parameter itself. These assumptions are stronger than those
needed for consistency.

\begin{assumption}[Local KKT regularity]\label{ass:kkt-regularity}
Let $s_\beta(U)=h(U)^\top\beta$. There exists a neighborhood
$N(\beta^*)\subset \operatorname{int}(\mathcal B)$ and constants
$0<\delta<M<\infty$ such that, for all $\beta\in N(\beta^*)$,
$$\delta \le s_\beta(U)\le M \qquad \text{a.s.}$$
The population budget constraint binds,
$$S(\beta^*)=\rho,$$
and the associated multiplier satisfies $\lambda^*>0$. The population moment
map
$$\Psi(\theta) = \begin{pmatrix} \nabla_\beta Q(\beta)+\lambda\nabla_\beta S(\beta)\\ S(\beta)-\rho \end{pmatrix}, \qquad \theta=(\beta,\lambda),$$
is continuously differentiable in a neighborhood of
$\theta^*=(\beta^*,\lambda^*)$, and its Jacobian
$$A:=\nabla_\theta\Psi(\theta^*)$$
is nonsingular. The empirical Jacobian converges locally uniformly to $A$, and
$$\sqrt n\,\Psi_n(\theta^*) \rightsquigarrow N(0,\Sigma).$$
Finally, $\sqrt n(\rho_n-\rho)=o(1)$.
\end{assumption}

Under this local regularity condition, the pointwise constraints
$$s_\beta(U_i)=h(U_i)^\top\beta\ge 0,\qquad i=1,\dots,n,$$
are inactive in a neighborhood of $\beta^*$. Therefore, by consistency of
$\hat\beta$, with probability tending to one, the only active constraint at
$\hat\beta$ is the budget constraint. Define the empirical Lagrangian
$$\mathcal L_n(\beta,\lambda) := Q_n(\beta)+\lambda\{S_n(\beta)-\rho_n\}, \qquad \lambda\ge 0,$$
where
$$S_n(\beta):=\frac1n\sum_{i=1}^n \exp\{-h(U_i)^\top\beta\}.$$
Then, with probability tending to one, there exists a multiplier
$\hat\lambda\ge 0$ such that $(\hat\beta,\hat\lambda)$ satisfies the local
KKT system
\begin{align}
\nabla_\beta Q_n(\hat\beta)
+
\hat\lambda\,\nabla_\beta S_n(\hat\beta)
&=0,
\label{eq:kkt_stationarity_beta}\\
S_n(\hat\beta)-\rho_n
&=0.
\label{eq:kkt_budget_active}
\end{align}
Equivalently, writing $\theta:=(\beta,\lambda)\in\mathbb R^{B+1}$, define
$$\Psi_n(\theta) := \begin{pmatrix} \nabla_\beta Q_n(\beta)+\lambda\,\nabla_\beta S_n(\beta)\\[3pt] S_n(\beta)-\rho_n \end{pmatrix}.$$
Then \eqref{eq:kkt_stationarity_beta}--\eqref{eq:kkt_budget_active} is
equivalent to
$$\Psi_n(\hat\theta)=0, \qquad \hat\theta:=(\hat\beta,\hat\lambda).$$

\begin{proposition}[$\sqrt{n}$-behavior of the learned policy parameters]\label{prop:beta_asymptotics_general}
Under Assumption~\ref{ass:policy_learning}(B1)--(B4) and Assumption~\ref{ass:kkt-regularity},
$$\sqrt{n}\big((\hat\beta,\hat\lambda)-(\beta^*,\lambda^*)\big)\ \toD\ \N\!\big(0,\ A^{-1}\Sigma(A^{-1})^\top\big),$$
where $A:=\nabla_\theta \Psi(\theta^*)$ is the Jacobian of the population moment map $\Psi(\theta):=\E[\Psi_n(\theta)]$ at $\theta^*=(\beta^*,\lambda^*)$, and $\Sigma:=\Var(\psi_{\theta^*}(W))$ is the covariance of the per-observation influence vector associated with $\Psi_n(\theta^*)$.
In particular, $\sqrt{n}(\hat\beta-\beta^*)=O_p(1)$.
\end{proposition}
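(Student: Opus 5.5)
This is a standard Z-estimator argument applied to the local KKT system $\Psi_n(\hat\theta)=0$.

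\textbf{Step 1: reduce to an estimating equation.} By Lemma~\ref{lem:beta_consistency_general}, $\hat\beta\toP\beta^*$. Under Assumption~\ref{ass:kkt-regularity}, $s_\beta(U)\ge\delta>0$ uniformly on $N(\beta^*)$, so with probability tending to one the pointwise constraints $h(U_i)^\top\hat\beta\ge 0$ are slack. Only the budget constraint can be active.

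\textbf{Step 2: make the budget constraint bind.} Since $\lambda^*>0$ and the constraint binds at the population level, I will argue the budget also binds in the sample with probability tending to one. Suppose instead that $\hat\beta$ were an unconstrained local minimizer of $Q_n$, so that $\nabla Q_n(\hat\beta)=0$. Then $\nabla Q(\beta^*)=0$ in the limit by uniform convergence of the gradient classes, which are Glivenko--Cantelli because $h$ is bounded and $\mathcal B$ is compact. This contradicts $\nabla Q(\beta^*)=-\lambda^*\nabla S(\beta^*)\neq 0$. So the KKT system $\Psi_n(\hat\theta)=0$ holds with $\hat\lambda\ge0$.

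\textbf{Step 3: consistency of the multiplier.} I will obtain $\hat\lambda\toP\lambda^*$ from the stationarity equation. Take the inner product of $\nabla Q_n(\hat\beta)+\hat\lambda\nabla S_n(\hat\beta)=0$ with $\nabla S_n(\hat\beta)$, which converges to $\nabla S(\beta^*)\ne0$. This gives $\hat\lambda$ as a continuous function of uniformly convergent quantities.

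\textbf{Step 4: Taylor expansion.} Expand componentwise using the mean value theorem:
\[0=\Psi_n(\hat\theta)=\Psi_n(\theta^*)+\nabla_\theta\Psi_n(\tilde\theta)(\hat\theta-\theta^*),\]
with $\tilde\theta$ on the segment between $\hat\theta$ and $\theta^*$. By the assumed local uniform convergence of the empirical Jacobian and consistency of $\hat\theta$, $\nabla_\theta\Psi_n(\tilde\theta)\toP A$. Since $A$ is nonsingular, $\nabla_\theta\Psi_n(\tilde\theta)$ is invertible with probability tending to one. Hence
\[\sqrt n(\hat\theta-\theta^*)=-A^{-1}\sqrt n\,\Psi_n(\theta^*)+o_p(1).\]
Here $\Psi_n(\theta^*)$ includes the term $\rho-\rho_n$, whose contribution is $o(1)$ after scaling by $\sqrt n$ because $\sqrt n(\rho_n-\rho)\to0$.

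\textbf{Step 5: conclude.} Assumption~\ref{ass:kkt-regularity} gives $\sqrt n\Psi_n(\theta^*)\rightsquigarrow N(0,\Sigma)$. If $\hat c$ is estimated, this is where its effect is absorbed. Slutsky's lemma then yields the limit $N(0,A^{-1}\Sigma A^{-\top})$. Tightness of $\sqrt n(\hat\beta-\beta^*)$ follows immediately.

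\textbf{Main obstacle.} The delicate step is Step 2 together with the multiplier consistency in Step 3, namely justifying that the random active set matches the population one. I would rely on strict complementarity ($\lambda^*>0$) and slackness of the pointwise constraints, as above.
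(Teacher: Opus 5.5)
Your proposal is correct and follows the same route as the paper. Both arguments linearize the local KKT system $\Psi_n(\hat\theta)=0$ around $\theta^*$, invert the nonsingular Jacobian $A$, and combine the CLT for $\sqrt n\,\Psi_n(\theta^*)$ with Slutsky's lemma. Your Steps 2--3 usefully justify two points the paper only asserts: that the sample budget constraint binds with probability tending to one, and that $\hat\lambda\toP\lambda^*$ (the paper cites Lemma~\ref{lem:beta_consistency_general}, which gives consistency of $\hat\beta$ only).
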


\begin{proof}
By Lemma~\ref{lem:beta_consistency_general}, $\hat\theta$ lies in a shrinking neighborhood of $\theta^*$ with probability tending to one. By~\ref{ass:kkt-regularity}, a first-order Taylor expansion yields
$$0=\Psi_n(\hat\theta)=\Psi_n(\theta^*) + A(\hat\theta-\theta^*) + r_n,$$
where $r_n=o_p(\|\hat\theta-\theta^*\|)+o_p(n^{-1/2})$. Rearranging,
$$\sqrt{n}(\hat\theta-\theta^*) = -A^{-1}\sqrt{n}\,\Psi_n(\theta^*) + o_p(1).$$
Finally, conditional on any auxiliary data used to construct $\hat c(\cdot)$ (e.g.\ an independent pilot/burn-in sample, or cross-fitting so that $\hat c$ is treated as fixed when evaluating $Q_n$), $\Psi_n(\theta^*)$ is an empirical average of i.i.d.\ per-observation contributions. Hence the multivariate CLT yields
$$\sqrt{n}\,\Psi_n(\theta^*)\toD \N(0,\Sigma).$$
The result follows by Slutsky's theorem.
\end{proof}

\subsection{Proof of the batch CLT (Theorem~\ref{thm:batch_mean})}
\label{app:equicontinuity}

We prove Theorem~\ref{thm:batch_mean} by coupling all policies on a single probability space and applying an empirical process decomposition (c.f. the stochastic decomposition following Example~1.5 in \cite{epnotes}, and standard Donsker/equicontinuity results in \cite{vdv2}). 

\paragraph{Step 1: Coupling and notation.}
For each $i$, augment the observation by an independent uniform random variable:
$$O_i=(X_i,Y_i,V_i),\qquad V_i\sim\Unif(0,1) \indep (X_i,Y_i).$$
Fix a subgroup $k\in\{0,1\}$ and write
$$g_k(O):=\mathbf 1\{X\in G_k\}, \qquad P_k:=P(g_k)=\P(X\in G_k)>0.$$
For any policy parameter $\beta\in\mathcal B$, define the coupled labeling indicator
$$\xi_i(\beta):=\mathbf 1\{V_i\le \pi^{(k)}_\beta(X_i)\}, \qquad \pi^{(k)}_\beta(x)=\exp\{-h(u(x))^\top\beta\},$$
so that $\xi_i(\beta)\mid X_i\sim\Bern(\pi^{(k)}_\beta(X_i))$ for every $\beta$, and all policies are coupled through the same $\{V_i\}_{i=1}^n$.

Define the single-observation contribution
$$m_{k,\beta}(O) := g_k(O)\left[ f(X)+\frac{\xi(\beta)}{\pi^{(k)}_\beta(X)}\{Y-f(X)\} \right], \qquad \xi(\beta):=\mathbf 1\{V\le \pi^{(k)}_\beta(X)\}.$$
Let $\P_n$ and $P$ denote empirical and population measures on $O=(X,Y,V)$, and define the empirical process
$$\G_n:=\sqrt{n}(\P_n-P).$$
The subgroup estimator under policy parameter $\beta$ can be written as the ratio
$$\hat\mu_k(\beta)=\frac{\P_n m_{k,\beta}}{\P_n g_k}, \qquad \hat\mu_k^\pi=\hat\mu_k(\hat\beta_k), \qquad \hat\mu_k^{\pi^*}=\hat\mu_k(\beta_k^*).$$

\paragraph{Connection to Eq.~\eqref{eq:expansion}.}
Eq.~\eqref{eq:ratio_lin_batch} reduces the subgroup CLT to controlling the empirical process
$\G_n(m_{k,\hat\beta_k}-\mu_k g_k)$. Decomposing,
$$\G_n(m_{k,\hat\beta_k}-\mu_k g_k) = \underbrace{\G_n(m_{k,\beta_k^*}-\mu_k g_k)}_{\text{oracle fluctuation}} + \underbrace{\G_n(m_{k,\hat\beta_k}-m_{k,\beta_k^*})}_{\text{remainder}} + \underbrace{\sqrt{n}\{P m_{k,\hat\beta_k}-P m_{k,\beta_k^*}\}}_{\text{parameter-estimation cost}}.$$
The last term corresponds to $\sqrt{n}\{\psi(\hat\beta_k)-\psi(\beta_k^*)\}$ in Eq.~\eqref{eq:expansion}
with the identification $\psi(\beta):=P m_{k,\beta}$. By Lemma~\ref{lem:policy_invariance} (stated and proved below),
$P m_{k,\beta}$ is constant in $\beta$, hence this parameter-estimation term is identically zero.
Thus, the only additional work beyond the oracle CLT is to show the empirical-process remainder
$\G_n(m_{k,\hat\beta_k}-m_{k,\beta_k^*})$ is $o_p(1)$ via asymptotic equicontinuity and $\hat\beta_k\toP\beta_k^*$.

\paragraph{Step 2: Population invariance (parameter-estimation cost vanishes).}
\begin{lemma}[Policy-invariance of the population moment]\label{lem:policy_invariance}
Assume $\pi^{(k)}_\beta(X)>0$ almost surely. Then for every $\beta\in\mathcal B$,
$$P m_{k,\beta} = \E[m_{k,\beta}(O)] = \E[g_k(X)Y] = \mu_k\,P_k,$$
where $\mu_k=\E[Y\mid X\in G_k]$. In particular, $P m_{k,\hat\beta_k}-P m_{k,\beta_k^*}=0$ almost surely.
\end{lemma}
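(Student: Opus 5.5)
The plan is a direct computation by iterated conditioning, first on $(X,Y)$ to integrate out the auxiliary uniform $V$, and then on $X$. Fix $\beta\in\mathcal B$ and abbreviate $\pi(x):=\pi^{(k)}_\beta(x)$. By linearity of expectation,
$$P m_{k,\beta}=\E[g_k(X)f(X)]+\E\!\left[g_k(X)\frac{\xi(\beta)}{\pi(X)}\{Y-f(X)\}\right],$$
so it suffices to handle the second term. Both terms are integrable: the first because $f$ is fixed with finite second moments (Assumption~\ref{ass:basic_sampling}), the second because of the moment and positivity conditions there.

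\textbf{Step 1 (integrating out $V$).} Since $V\sim\Unif(0,1)$ is independent of $(X,Y)$ and $\pi(X)\in(0,1]$ almost surely, we have
$$\E[\xi(\beta)\mid X,Y]=\P(V\le \pi(X)\mid X,Y)=\pi(X).$$
Everything else in the second term is $(X,Y)$-measurable. Because $\pi(X)>0$ almost surely, the ratio is well defined. Hence the second term equals $\E[g_k(X)\{Y-f(X)\}]$. This is where the coupling construction of Step 1 does the work: it gives $\xi(\beta)\indep Y\mid X$ automatically, for every $\beta$ simultaneously.

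\textbf{Step 2 (cancellation).} Adding back the first term, the $f$-contributions cancel, giving $P m_{k,\beta}=\E[g_k(X)Y]$. This equals $\P(X\in G_k)\,\E[Y\mid X\in G_k]=\mu_k P_k$ by the definition of conditional expectation on an event of positive probability, using $P_k>0$.

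\textbf{Step 3 (the claim about $\hat\beta_k$).} The map $\beta\mapsto Pm_{k,\beta}$ is constant on $\mathcal B$. Here $Pm_{k,\beta}$ is the deterministic function of $\beta$, evaluated at $\hat\beta_k$ afterwards, and not a conditional expectation given the data. Therefore $Pm_{k,\hat\beta_k}-Pm_{k,\beta_k^*}=0$ for every realization, since both $\hat\beta_k,\beta_k^*\in\mathcal B$.

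The only potential obstacle is interpretive rather than technical. One must make clear that $Pm_{k,\hat\beta_k}$ is defined by plug-in, so the dependence of $\hat\beta_k$ on the sample is irrelevant. One must also note integrability of $\xi(\beta)\{Y-f(X)\}/\pi(X)$, which holds since $\xi/\pi\le 1/\pi$ and the positivity and moment assumptions apply. If uniform positivity holds only near $\beta^*$, I would still argue pointwise in $\beta$, using $\pi_\beta>0$ a.s. from the exponential form together with $\E|Y-f(X)|<\infty$ and Tonelli.
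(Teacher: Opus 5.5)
Your proposal is correct and follows the paper's proof essentially verbatim: integrate out $V$ via $\E[\xi(\beta)\mid X,Y]=\pi^{(k)}_\beta(X)$, cancel the $f$-terms, and identify $\E[g_k(X)Y]=\mu_k P_k$. Your additional remarks are refinements of the same argument, not a different route. These are integrability via Tonelli, the plug-in reading of $Pm_{k,\hat\beta_k}$, and making explicit that $\pi^{(k)}_\beta(X)\le 1$ is needed for $\P(V\le \pi^{(k)}_\beta(X)\mid X)=\pi^{(k)}_\beta(X)$, which the paper uses only implicitly.
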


\begin{proof}
Since $V\perp (X,Y)$ and $\xi(\beta)=\mathbf 1\{V\le \pi^{(k)}_\beta(X)\}$,
$$\E[\xi(\beta)\mid X,Y]=\P(V\le \pi^{(k)}_\beta(X)\mid X)=\pi^{(k)}_\beta(X).$$
Hence
$$\E\!\left[\frac{\xi(\beta)}{\pi^{(k)}_\beta(X)}\{Y-f(X)\}\,\middle|\,X,Y\right]=Y-f(X).$$
Substituting into the definition of $m_{k,\beta}$ and applying iterated expectation yields
$$P m_{k,\beta} = \E[g_k(X)f(X)]+\E[g_k(X)\{Y-f(X)\}] =\E[g_k(X)Y] =\mu_k P_k.$$
\end{proof}

\begin{remark}[Relation to grid-based tuning arguments]\label{rem:grid_relation}
Our coupling via $\xi_i(\beta)=\mathbf 1\{V_i\le \pi_\beta(W_i)\}$ is a continuous-parameter analogue of coupling arguments used for grid-based tuning in \cite{active}. In the grid-based analysis the assumption is made that there exists $\eta^*\in\mathcal H$ (a finite set) such that $\P(\hat\eta\neq \eta^*)\to 0$, and concludes learned- and oracle-rule estimators coincide asymptotically because their difference is supported on the vanishing-probability event $\{\hat\eta\neq \eta^*\}$. In our setting, $\beta$ is continuous and learned by convex optimization, so exact selection consistency is not available. Instead, the coupling places all policies on one probability space and the learned-policy effect is controlled by asymptotic equicontinuity together with $\hat\beta_k\toP\beta_k^*$.
\end{remark}

\paragraph{Step 3: Ratio linearization.}
Since $P g_k=P_k>0$, a standard ratio expansion yields
\begin{equation}\label{eq:ratio_lin_batch}
\sqrt{n}\{\hat\mu_k(\hat\beta_k)-\mu_k\}
=
\frac{1}{P_k}\,\G_n\!\left(m_{k,\hat\beta_k}-\mu_k g_k\right)+o_p(1).
\end{equation}
Thus it suffices to study $\G_n(m_{k,\hat\beta_k}-\mu_k g_k)$.

\paragraph{Step 4: Oracle term.}
Decompose
$$\G_n\!\left(m_{k,\hat\beta_k}-\mu_k g_k\right) = \underbrace{\G_n\!\left(m_{k,\beta_k^*}-\mu_k g_k\right)}_{A_{n,k}} + \underbrace{\G_n\!\left(m_{k,\hat\beta_k}-m_{k,\beta_k^*}\right)}_{B_{n,k}}.$$
Because $\beta_k^*$ is non-random, the oracle term satisfies an ordinary empirical-process CLT:
$$A_{n,k} = \G_n\!\left(m_{k,\beta_k^*}-\mu_k g_k\right) \ \toD\ \N\!\left(0,\ \Var\!\left(m_{k,\beta_k^*}(O)-\mu_k g_k(O)\right)\right),$$
under $\E[(Y-f(X))^2]<\infty$ and uniform positivity of $\pi^{(k)}_{\beta_k^*}$.

\paragraph{Step 5: Remainder via asymptotic equicontinuity.}
We show $B_{n,k}=\G_n(m_{k,\hat\beta_k}-m_{k,\beta_k^*})=o_p(1)$.
Let $\mathcal M_k := \{m_{k,\beta}: \beta \in \mathcal B\}$. By Lemma~\ref{lem:beta_consistency_general}, 
$\hat\beta_k \toP \beta_k^*$. Therefore, to apply the stochastic equicontinuity result in 
Lemma~\ref{lem:donsker_equicont}, it remains to verify that $\beta\mapsto m_{k,\beta}$ is $L_2(P)$-continuous at $\beta_k^*$.

A convenient bound uses the coupling through $V$. Write $\pi_\beta=\pi^{(k)}_\beta$ and note that
$$\xi(\beta)-\xi(\beta^*)\neq 0 \ \Rightarrow\ V\in(\min\{\pi_\beta(X),\pi_{\beta^*}(X)\},\max\{\pi_\beta(X),\pi_{\beta^*}(X)\}],$$
so conditional on $(X,Y)$ the probability that the indicators differ is $|\pi_\beta(X)-\pi_{\beta^*}(X)|$.
Under bounded basis functions and compact $\mathcal B$, $\beta\mapsto \pi_\beta(x)$ is Lipschitz uniformly in $x$ in a neighborhood of $\beta^*$, hence
$\sup_x|\pi_\beta(x)-\pi_{\beta^*}(x)|\lesssim \|\beta-\beta^*\|$.
Moreover, uniform positivity implies $\inf_x \pi_\beta(x)\ge \underline\pi>0$ near $\beta^*$.
Using these facts and $\E[(Y-f(X))^2]<\infty$, one obtains
$$\|m_{k,\beta}-m_{k,\beta^*}\|_{L_2(P)}^2 \ \le\ C\,\E\!\left[g_k(X)(Y-f(X))^2\right]\cdot \sup_x|\pi_\beta(x)-\pi_{\beta^*}(x)| \ \to\ 0 \quad \text{as }\beta\to\beta^*,$$
for a constant $C<\infty$ depending only on $\underline\pi$.
Combining this $L_2(P)$-continuity with Lemma~\ref{lem:donsker_equicont} and the consistency
$\hat\beta_k\toP\beta_k^*$ from Lemma~\ref{lem:beta_consistency_general}, we obtain
$$B_{n,k} = \mathbb G_n(m_{k,\hat\beta_k}-m_{k,\beta_k^*}) = o_p(1).$$

\paragraph{Step 6: Conclude the subgroup CLT and joint normality.}
Combining Steps~3--5,
$$\sqrt{n}\{\hat\mu_k(\hat\beta_k)-\mu_k\} = \frac{1}{P_k}\,A_{n,k}+o_p(1) \toD \N\!\left(0,\ \frac{1}{P_k^2}\Var\!\left(m_{k,\beta_k^*}(O)-\mu_k g_k(O)\right)\right).$$
Applying the same argument to $k=0$ and $k=1$ yields joint asymptotic normality of
$(\hat\mu_1^\pi,\hat\mu_0^\pi)$. The asymptotic covariance is diagonal because the linear influence terms are supported on disjoint groups:
$g_1(O)g_0(O)=0$ almost surely. 

\begin{lemma}[Donsker and equicontinuity of the policy-indexed class]\label{lem:donsker_equicont}
Assume:
(i) $\mathcal B\subset\R^B$ is compact and $\sup_u\|h(u)\|_2<\infty$;
(ii) there exists $\underline\pi>0$ such that $\inf_{x,\beta\in\mathcal B}\pi^{(k)}_\beta(x)\ge \underline\pi$;
and (iii) $\E[(Y-f(X))^2]<\infty$.  
Then the class $\mathcal M_k=\{m_{k,\beta}:\beta\in\mathcal B\}$ is $P$-Donsker, and the empirical process $\G_n=\sqrt{n}(\P_n-P)$ is asymptotically equicontinuous on $\mathcal M_k$ with respect to the $L_2(P)$ metric. In particular, if $\hat\beta_k\toP\beta_k^*$, then
$$\G_n\!\left(m_{k,\hat\beta_k}-m_{k,\beta_k^*}\right)=o_p(1).$$
\end{lemma}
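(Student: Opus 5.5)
The plan is to prove the Donsker property by writing $\mathcal M_k$ as a product and sum of simple, well-understood classes, and then to read off equicontinuity from standard theory. For $O=(X,Y,V)$, I write
$$m_{k,\beta}(O)=g_k(X)f(X)+g_k(X)\{Y-f(X)\}\cdot \frac{1}{\pi_\beta(X)}\cdot \mathbf 1\{V\le \pi_\beta(X)\}.$$
The first summand does not depend on $\beta$; it is a single square-integrable function, hence trivially Donsker. It therefore suffices to treat the class $\{R\cdot a_\beta\cdot b_\beta\}$, where $R:=g_k(X)\{Y-f(X)\}$ is a fixed function and
$$a_\beta:=\exp\{h(u(X))^\top\beta\},\qquad b_\beta:=\mathbf 1\{V\le \exp(-h(u(X))^\top\beta)\}.$$

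\textbf{Step 1: the class $\{a_\beta\}$.} This class is Lipschitz in the parameter: $|a_\beta-a_{\beta'}|\le \sup a\cdot H\|\beta-\beta'\|$, where $H=\sup_u\|h(u)\|$. The constant is uniform because assumption (ii) together with the bounded basis and compact $\mathcal B$ gives $1\le a_\beta\le 1/\underline\pi$. A parametric Lipschitz class over a compact subset of $\R^B$ has bracketing numbers of polynomial order $(1/\varepsilon)^B$.

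\textbf{Step 2: the class $\{b_\beta\}$.} Write $b_\beta=\mathbf 1\{\log V + h(u(X))^\top\beta\le 0\}$. This is the indicator of a halfspace in the transformed variables $(\log V, h(u(X)))\in\R^{B+1}$, so it is a VC class of index at most $B+2$. Its envelope is $1$.

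\textbf{Step 3: combining.} The products $a_\beta b_\beta$ form a uniformly bounded class with envelope $1/\underline\pi$. I will control its uniform entropy using the standard preservation results for products of bounded VC-type classes, namely Theorem 2.10.20 of \cite{vdv2}. That result gives polynomial uniform covering numbers for $\{a_\beta b_\beta\}$, so the uniform entropy integral is finite. Multiplying by the fixed function $R$ gives the envelope $F=|R|/\underline\pi$. This envelope satisfies $PF^2<\infty$ by (iii), and uniform covering numbers relative to $F$ are preserved under multiplication by a fixed function. The class is also suitably measurable, being indexed by a separable parameter set with functions that are right-continuous in $\beta$ pointwise. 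The uniform-entropy Donsker theorem then yields that $\mathcal M_k$ is $P$-Donsker.

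\textbf{Step 4: equicontinuity and the conclusion.} Since $\mathcal M_k$ is Donsker, $\G_n$ is asymptotically uniformly $L_2(P)$-equicontinuous on $\mathcal M_k$. To deduce $\G_n(m_{\hat\beta}-m_{\beta^*})=o_p(1)$, I need $\|m_{\hat\beta}-m_{\beta^*}\|_{L_2(P)}\toP 0$. This follows from the continuity bound established in Step 5 of the preceding proof. I will rederive it using $\E[(b_\beta-b_{\beta^*})^2\mid X,Y]=|\pi_\beta-\pi_{\beta^*}|$ together with the Lipschitz property of $a_\beta$. Combined with $\hat\beta\toP\beta^*$, the standard argument applies. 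Given $\varepsilon,\eta$, choose $\delta$ from equicontinuity. Then on the event $\{\|\hat\beta-\beta^*\|<\delta'\}$, which has probability tending to one, $|\G_n(m_{\hat\beta}-m_{\beta^*})|$ is bounded by the supremum over $\delta$-close pairs, and that supremum is small in probability.

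The main obstacle is Step 3, namely the entropy control for the product of the discontinuous indicator with the $\beta$-dependent weight when $R$ is unbounded. Bracketing arguments for the indicator are delicate, because its bracket size depends on $|\pi_\beta-\pi_{\beta'}|$, which only controls $L_1$ and not $L_2$. For this reason I will rely on uniform (VC-type) entropy with the envelope $F$, rather than on bracketing, to handle the product.
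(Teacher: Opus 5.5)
Your proposal is correct and follows the paper's proof sketch. It uses the same split into the $\beta$-free term plus $R\cdot\mathbf 1\{V\le\pi_\beta(X)\}/\pi_\beta(X)$, the same halfspace/VC representation via $\log v+h(u(x))^\top\beta\le 0$, the same envelope $|R|/\underline\pi$, and the same $L_2(P)$-continuity through $\P(\text{indicators differ}\mid X)=|\pi_\beta-\pi_{\beta^*}|$. The one difference is that you make the paper's appeal to ``standard preservation results'' explicit via uniform entropy, treating $1/\pi_\beta$ as a Lipschitz parametric class and then multiplying by the fixed $R$; this is a sound way to justify that step.

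One small correction: only the upper bound $a_\beta\le 1/\underline\pi$ holds (and is needed) on all of $\mathcal B$. The lower bound $a_\beta\ge 1$ holds only where $h(u)^\top\beta\ge 0$, i.e., on the feasible set.
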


\begin{proof}[Proof sketch]
Write $O=(X,Y,V)$ with $V\sim\Unif(0,1)$ independent of $(X,Y)$. Note that
$$m_{k,\beta}(O) = g_k(X)f(X) + g_k(X)\,(Y-f(X))\cdot \frac{\mathbf 1\{V\le \pi^{(k)}_\beta(X)\}}{\pi^{(k)}_\beta(X)}.$$
The first term does not depend on $\beta$. For the second term, consider the indicator class
$$\mathcal I = \left\{ (x,v)\mapsto \mathbf 1\{v\le \pi^{(k)}_\beta(x)\}:\beta\in\mathcal B \right\}.$$
Since $\pi^{(k)}_\beta(x)=\exp\{-h(u(x))^\top\beta\}$, we can rewrite the threshold event, for $v > 0$, as
$$\mathbf 1\{v\le \pi^{(k)}_\beta(x)\} = \mathbf 1\{\log v + h(u(x))^\top\beta \le 0\}.$$
Because $h(u(x))^\top\beta$ ranges over a finite-dimensional vector space of functions indexed by $\beta\in\mathcal B$, the class of subgraphs
$\{(x,v)\mapsto \log v + h(u(x))^\top\beta:\beta\in\mathcal B\}$
is a VC-subgraph class, hence $\mathcal I$ is VC and therefore $P$-Donsker; see Chapter~2.5 of van der Vaart and Wellner \cite{vdv2}.

By Assumption (ii) in the statement, $1/\pi_\beta^{(k)}(X)$ is uniformly bounded over $\beta\in\mathcal B$. Hence the policy-indexed factor
$$\frac{\mathbf 1\{V\le \pi_\beta^{(k)}(X)\}}{\pi_\beta^{(k)}(X)}$$
is uniformly bounded by $1/\underline\pi$. Assumption (iii) then gives an $L_2(P)$ envelope for the product class, since
$$\left| g_k(X)(Y-f(X)) \frac{\mathbf 1\{V\le \pi_\beta^{(k)}(X)\}}{\pi_\beta^{(k)}(X)} \right| \le \frac{|Y-f(X)|}{\underline\pi}, \qquad \E[(Y-f(X))^2]<\infty.$$
Together with the VC-subgraph/Donsker property of the threshold class and standard preservation results for bounded transformations and square-integrable multipliers, this implies that the second term in $\mathcal M_k$ is $P$-Donsker.

It remains to note that $\beta\mapsto m_{k,\beta}$ is $L_2(P)$-continuous
at $\beta_k^*$. Indeed, the denominator is uniformly bounded away from zero,
$\pi_\beta^{(k)}$ is uniformly continuous in $\beta$, and
$$\P\!\left( \mathbf 1\{V\le \pi_\beta^{(k)}(X)\} \neq \mathbf 1\{V\le \pi_{\beta_k^*}^{(k)}(X)\} \,\middle|\, X \right) = |\pi_\beta^{(k)}(X)-\pi_{\beta_k^*}^{(k)}(X)|.$$
Together with Assumption (iii) which states $\E[(Y-f(X))^2]<\infty$, dominated convergence gives
$$\|m_{k,\beta}-m_{k,\beta_k^*}\|_{L_2(P)}\to 0 \qquad\text{as }\beta\to\beta_k^*.$$
Since $P$-Donsker classes are asymptotically equicontinuous in $L_2(P)$,
$\hat\beta_k\toP\beta_k^*$ implies
$$\mathbb G_n(m_{k,\hat\beta_k}-m_{k,\beta_k^*})=o_p(1).$$
\end{proof}

\subsection{Batch CLT for generalized asymptotically linear estimands}
\label{app:generalCLT}

This subsection extends the batch CLT argument in Appendix~\ref{app:equicontinuity} from subgroup means
to the broader EIF-based framework of Section~\ref{sec:general}. The key point is that, after coupling,
the learned-policy effect again enters only through an empirical-process remainder indexed by $\beta$,
and is controlled by the same Donsker/equicontinuity logic as in the mean case.

\paragraph{Setup.}
Let $\psi=\psi(P)$ admit an observed-data EIF of the form
$$\phi_{\pi}(X,\xi,\xi Y) = h(X)-\psi+\frac{\xi}{\pi(X)}\zeta(X,Y), \qquad \E[\zeta(X,Y)\mid X]=0,$$
under missing-at-random labeling $\xi\perp Y\mid X$ with $\xi\mid X\sim\Bern(\pi(X))$.
Write $c(X):=\E[\zeta(X,Y)^2\mid X]$.

Let $\{\pi_\beta:\beta\in\mathcal B\}$ be the spline policy class from Section~\ref{sec:optim}, and let
$\beta^*$ denote the unique population minimizer of $\E[c(X)/\pi_\beta(X)]$ over the feasible set.
Let $\hat\beta$ be the sample-level optimizer (Appendix~\ref{app:beta_asymptotics}) and define
$\pi^*:=\pi_{\beta^*}$ and $\hat\pi:=\pi_{\hat\beta}$.

Consider a one-step/TMLE-style estimator satisfying the usual asymptotic linearity remainder condition:
\begin{equation}\label{eq:general_AL_expansion}
\hat\psi^{\hat\pi}-\psi
=
\frac{1}{n}\sum_{i=1}^n
\Bigl\{
h(X_i)-\psi+\frac{\xi_i}{\hat\pi(X_i)}\zeta(X_i,Y_i)
\Bigr\}
+o_p(n^{-1/2}).
\end{equation}
(For instance, this holds under standard TMLE/one-step conditions with cross-fitting; see, e.g., \cite{vdv}.)

\paragraph{Coupling.}
Augment each observation with $V_i\sim\Unif(0,1)$ independent of $(X_i,Y_i)$ and define the coupled
label indicator $\xi_i(\beta):=\mathbf 1\{V_i\le \pi_\beta(X_i)\}$.
Define the policy-indexed single-observation map
$$m_{\beta}(O) := h(X)-\psi+\frac{\xi(\beta)}{\pi_\beta(X)}\zeta(X,Y), \qquad O=(X,Y,V).$$
Let $\P_n$ and $P$ denote empirical and population measures on $O$, and $\G_n:=\sqrt{n}(\P_n-P)$.

\paragraph{Oracle CLT and learned-policy remainder.}
Using \eqref{eq:general_AL_expansion} and the identity
$\frac{1}{\sqrt{n}}\sum_{i=1}^n m_{\hat\beta}(O_i)=\G_n m_{\hat\beta}$, we have
$$\sqrt{n}(\hat\psi^{\hat\pi}-\psi) = \G_n m_{\hat\beta}+o_p(1) = \underbrace{\G_n m_{\beta^*}}_{\text{oracle fluctuation}} + \underbrace{\G_n(m_{\hat\beta}-m_{\beta^*})}_{\text{learned-policy remainder}} +o_p(1).$$
Since $\beta^*$ is non-random, $\G_n m_{\beta^*}\toD \N(0,\Var(m_{\beta^*}(O)))$ by the classical CLT
under $\E[m_{\beta^*}(O)^2]<\infty$. It remains to show $\G_n(m_{\hat\beta}-m_{\beta^*})=o_p(1)$.

\paragraph{Equicontinuity.}
Assume (as in Appendix~\ref{app:equicontinuity}) that the policy-indexed class
$\mathcal M:=\{m_\beta:\beta\in\mathcal B\}$ is $P$-Donsker and $\G_n$ is asymptotically equicontinuous
on $\mathcal M$ in the $L_2(P)$ metric (this is implied by the same VC-subgraph argument as
Lemma~\ref{lem:donsker_equicont}, with the square-integrable multiplier $(Y-f(X))$ replaced by
$\zeta(X,Y)$ and the moment condition $\E[\zeta(X,Y)^2]<\infty$).
Then $\hat\beta\toP\beta^*$ (Lemma~\ref{lem:beta_consistency_general}) implies
$\G_n(m_{\hat\beta}-m_{\beta^*})=o_p(1)$.

\paragraph{Limiting variance.}
Finally, we compute $\Var(m_{\beta^*}(O))$.
By $\E[\zeta\mid X]=0$ and $\E[\xi(\beta^*)\mid X]=\pi^*(X)$,
$$\E\!\left[\frac{\xi(\beta^*)}{\pi^*(X)}\zeta(X,Y)\,\middle|\,X\right]=0,$$
so $\Cov(h(X)-\psi,\frac{\xi(\beta^*)}{\pi^*(X)}\zeta(X,Y))=0$. Moreover,
$$\Var\!\left(\frac{\xi(\beta^*)}{\pi^*(X)}\zeta(X,Y)\,\middle|\,X\right) = \E\!\left[\frac{\xi(\beta^*)}{\pi^*(X)^2}\zeta(X,Y)^2\,\middle|\,X\right] = \frac{c(X)}{\pi^*(X)}.$$
Therefore,
$$\Var(m_{\beta^*}(O)) = \Var(h(X))+\E\!\left[\frac{c(X)}{\pi^*(X)}\right].$$

\begin{theorem}[Batch CLT under the learned policy for generalized estimands]\label{thm:generalCLT}
Under the policy-learning conditions (B1)--(B5), the asymptotic linearity remainder
\eqref{eq:general_AL_expansion}, and the Donsker/equicontinuity conditions described above,
$$\sqrt{n}(\hat\psi^{\hat\pi}-\psi)\ \toD\ \N\!\left(0,\ \sigma_*^2\right), \qquad \sigma_*^2:=\Var(h(X))+\E\!\left[\frac{c(X)}{\pi^*(X)}\right].$$
\end{theorem}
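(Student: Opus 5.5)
The plan follows the same three-part decomposition used for Theorem~\ref{thm:batch_mean}: an oracle CLT, a vanishing parameter-estimation cost, and an equicontinuity remainder. Everything is placed on the coupled probability space of Appendix~\ref{app:equicontinuity}, so the realized labels are $\xi_i=\xi_i(\hat\beta)=\mathbf 1\{V_i\le \pi_{\hat\beta}(X_i)\}$. With this coupling, the asymptotic linearity condition \eqref{eq:general_AL_expansion} (Assumption~\ref{ass:general_eif}) gives
\[
\sqrt{n}(\hat\psi^{\hat\pi}-\psi)=\sqrt{n}\,\P_n m_{\hat\beta}+o_p(1).
\]

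The first step shows that $\sqrt{n}\,\P_n m_{\hat\beta}=\G_n m_{\hat\beta}$, i.e.\ that $P m_\beta=0$ for every $\beta\in\mathcal B$. This is the analogue of Lemma~\ref{lem:policy_invariance}. We have $\E[\xi(\beta)\mid X,Y]=\pi_\beta(X)$, which is positive by uniform positivity. Hence $\E[m_\beta(O)\mid X]=h(X)-\psi+\E[\zeta(X,Y)\mid X]=h(X)-\psi$, and this has mean zero because $\E[h(X)]=\psi$. It follows that $\sqrt n\{Pm_{\hat\beta}-Pm_{\beta^*}\}\equiv 0$, so no delta-method term in $\hat\beta$ appears. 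This is why only consistency of $\hat\beta$ is needed, and not Proposition~\ref{prop:beta_asymptotics_general}.

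The second step splits $\G_n m_{\hat\beta}=\G_n m_{\beta^*}+\G_n(m_{\hat\beta}-m_{\beta^*})$. The oracle term is an i.i.d.\ sum with a nonrandom $\beta^*$. Its variance is finite because $\E[h(X)^2]<\infty$ and $\E[c(X)/\pi^*(X)]<\infty$, so the classical CLT gives $\N(0,\Var(m_{\beta^*}(O)))$. The variance computation already displayed before the theorem identifies $\Var(m_{\beta^*}(O))=\sigma_*^2$: the cross-covariance vanishes since $\E[\xi\zeta/\pi^*\mid X]=0$, and the conditional second moment equals $c(X)/\pi^*(X)$.

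The third step, which I expect to be the main obstacle, is showing $\G_n(m_{\hat\beta}-m_{\beta^*})=o_p(1)$. I would redo Lemma~\ref{lem:donsker_equicont} with the multiplier $Y-f(X)$ replaced by $\zeta(X,Y)$. The part $h(X)-\psi$ does not depend on $\beta$. The threshold class $\{\mathbf 1\{\log v+h(u(x))^\top\beta\le 0\}\}$ is VC. The factor $1/\pi_\beta$ is bounded by $1/\underline\pi$ and is Lipschitz in $\beta$, uniformly in $x$, because the basis is bounded and $\mathcal B$ is compact. The square-integrable envelope is $|\zeta|/\underline\pi$. Donsker preservation for products of a uniformly bounded VC-type class with a fixed $L_2$ multiplier then yields the Donsker property.

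For $L_2(P)$-continuity at $\beta^*$, I would bound $\|m_\beta-m_{\beta^*}\|_{L_2}^2$ by $C\,\E[c(X)\{|\pi_\beta-\pi_{\beta^*}|(X)+\|\beta-\beta^*\|^2\}]$. This uses the fact that the coupled indicators disagree with conditional probability $|\pi_\beta(X)-\pi_{\beta^*}(X)|$, together with $\E c(X)<\infty$ from (B2). Lemma~\ref{lem:beta_consistency_general} supplies $\hat\beta\toP\beta^*$ under (B1)--(B4), and asymptotic equicontinuity then sends the remainder to zero. Slutsky's theorem combines the three steps to give $\sqrt n(\hat\psi^{\hat\pi}-\psi)\toD\N(0,\sigma_*^2)$.

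One subtlety should be flagged. If $\zeta$ and $h$ involve estimated nuisances, the expansion \eqref{eq:general_AL_expansion} must hold with the true $\zeta,h$ at the random policy $\hat\pi$. I would take this as part of the assumed remainder condition, justified by cross-fitting, rather than prove it here.
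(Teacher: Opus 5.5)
Your proposal is correct and follows essentially the same route as the paper. Both use the coupling via $V_i$, the asymptotic linearity expansion, an oracle CLT for $\G_n m_{\beta^*}$, an equicontinuity remainder controlled by the Donsker argument of Lemma~\ref{lem:donsker_equicont} (with $Y-f(X)$ replaced by $\zeta(X,Y)$) together with consistency of $\hat\beta$, and the same variance computation. Your explicit check that $Pm_\beta\equiv 0$ for all $\beta$ usefully justifies the identity $n^{-1/2}\sum_i m_{\hat\beta}(O_i)=\G_n m_{\hat\beta}$, which the paper uses without comment.
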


\begin{remark}[The same convex program applies]
Because $\Var(h(X))$ does not depend on $\pi$, minimizing the asymptotic variance within the policy class
reduces to minimizing $\E[c(X)/\pi_\beta(X)]$, which is exactly the objective targeted by the convex program
in Section~\ref{sec:optim}.
\end{remark}

\subsection{Sequential CLT}\label{app:seq_general}
In the online setting of Section~\ref{sec:sequential} and Algorithm~\ref{alg:seq}, both the predictor $f_t$ and the
(unclipped) policy proposal $\pi_t$ may be updated based on the observed history, and the realized labeling
probabilities are clipped to enforce the budget constraint. As a result, the estimator is no longer an i.i.d.\ average. We can analyze the sequential estimator via a martingale CLT: we express the
estimation error as a sum of (centered) martingale increments and assume a convergent variance process and a Lindeberg condition. This is precisely the argument in  Zrnic and Cand\`es~\cite{active}. The stochastic equicontinuity decomposition is no longer needed here, since we are conditioning on the past history, which is used to construct the labeling policy.

\section{Power tuning and cross-fitting theory}\label{app:add-on}
Throughout this section, we work under the assumptions from
Section~\ref{sec:setup} / Appendix~\ref{app:CLT}: the label indicators are independent of labels conditional on $X_i$, positivity of the labeling policy and the finite-moment conditions needed for the relevant covariance and variance
quantities to be well-defined (Assumption~\ref{ass:basic_sampling}). When nuisance functions are estimated, we also assume the
same pilot-sample conditions used in the main asymptotic results
(Assumption~\ref{ass:general_eif}; with a modified version for cross-fitting introduced in one of the subsections below), so that plug-in estimates of the covariance quantities below are consistent.

Power tuning follows the control-variate idea from PPI++~\cite{ppi++}. For a fixed feasible labeling policy $\pi$, the
parameter $\lambda$ interpolates between a label-only Horvitz--Thompson estimator (see \cite{horvitz1952generalization, cochran1977sampling}) and a prediction-assisted estimator.
Because the estimator remains unbiased for any fixed $\lambda$, we choose $\lambda$ to minimize variance.

\subsection{Power tuning for mean estimation}\label{app:powertuning_mean}

We start with the simplest case of mean estimation. Let $\psi=\E[Y]$ and fix a policy $\pi(X)=\P(\xi=1\mid X)$. Given a predictor $f$, define
\begin{equation}\label{eq:power_mean_app}
\hat\mu^{\pi,\lambda} := \frac{1}{n}\sum_{i=1}^n \left[ \lambda f(X_i) + \frac{\xi_i}{\pi(X_i)}\{Y_i-\lambda f(X_i)\}\right].
\end{equation}
Since $\xi_i \indep Y_i \mid X_i$, this estimator is unbiased for $\mu=\E[Y]$ for every fixed $\lambda$:
$$\E\!\left[\lambda f(X)+\frac{\xi}{\pi(X)}\{Y-\lambda f(X)\} \,\middle|\, X,Y \right] = Y.$$
Equivalently, \eqref{eq:power_mean_app} can be written as
$$\hat\mu^{\pi,\lambda} = \hat\mu^{\mathrm{HT}} + \lambda\{\bar f-\hat f^{\mathrm{HT}}\},$$ where
$$\hat\mu^{\mathrm{HT}}:=\P_n\!\left[\frac{\xi}{\pi}Y\right], \quad \hat f^{\mathrm{HT}}:=\P_n\!\left[\frac{\xi}{\pi}f\right], \quad \text{and} \quad \bar f:=\P_n[f].$$
The term $\bar f-\hat f^{\mathrm{HT}}$ is mean-zero and therefore acts as a control variate.

Let $A:=\frac{\xi}{\pi(X)}Y$ and $C:=f(X)-\frac{\xi}{\pi(X)}f(X)$.
Then $\hat\mu^{\pi,\lambda}=\P_n(A+\lambda C)$, and the variance-minimizing value is
\begin{equation}\label{eq:lambda_control_app}
\lambda^* = -\frac{\Cov(A,C)}{\Var(C)}.
\end{equation}
Under Assumption~\ref{ass:basic_sampling}, the denominator is finite and nonzero unless the control
variate is degenerate. In practice, we estimate $\lambda^*$ by replacing the population covariance and variance in
\eqref{eq:lambda_control_app} with their sample analogues, using pilot data, burn-in data, or cross-fitted estimates.

\subsection{Groupwise means and log odds ratios}\label{app:powertuning_odds}

For the log odds ratio, we apply the same construction separately within each group. Let $G_k$ denote the subgroup indices, $n_k=|G_k|$, and $\pi^{(k)}$ the group-specific labeling policy. For $k\in\{0,1\}$, define
\begin{equation}\label{eq:power_group_app}
\hat\mu_k^{\pi,\lambda_k} = \frac{1}{n_k}\sum_{i\in G_k}
\left[\lambda_k f(X_i) + \frac{\xi_i}{\pi^{(k)}(X_i)}\{Y_i-\lambda_k f(X_i)\}\right].
\end{equation}
Again, for each fixed $\lambda_k$, the estimator is unbiased for the empirical subgroup mean by independence of $\xi_i$ with $Y_i$ conditional on $X_i$: 
$$\E\!\left[\hat\mu_k^{\pi,\lambda_k}\mid X_{1:n},Y_{1:n}\right] = \frac{1}{n_k}\sum_{i:Z_i=k}Y_i.$$ This converges to $\mu_k = \E[Y \mid Z = k]$ by a standard law of large numbers argument.

The optimal groupwise tuning parameter is the corresponding control-variate coefficient. Writing
$$A_k:=\mathbf 1\{X\in G_k\}\frac{\xi}{\pi^{(k)}(X)}Y, \qquad C_k:=\mathbf 1\{X\in G_k\} \left[f(X)-\frac{\xi}{\pi^{(k)}(X)}f(X)\right],$$
the variance-minimizing coefficient is
\begin{equation}\label{eq:lambda_group_control_app}
\lambda_k^* = -\frac{\Cov(A_k,C_k)}{\Var(C_k)}.
\end{equation}
Equivalently, this may be written in the weighted covariance form used in PPI++\cite{ppi++}:
\begin{equation}\label{eq:lambda_group_ppi_app}
\lambda_k^* = \frac{\Cov(\tilde f_k,\tilde Y_k)}{(1+\rho)\Var(\tilde f_k)}, \qquad \rho:=\frac{n_{\budget}}{n},
\end{equation}
where $\tilde f_k$ and $\tilde Y_k$ denote the corresponding importance-weighted prediction and label variables. The precise form of the finite-sample factor depends on the normalization used for the full prediction average and labeled correction; in our implementation we use the PPI++ plug-in estimator matching the estimator in \eqref{eq:power_group_app}.

The power-tuned log odds ratio estimator is then
$$\hat\theta^{\pi,\lambda} = \log\frac{\hat\mu_1^{\pi,\lambda_1}}{1-\hat\mu_1^{\pi,\lambda_1}} - \log\frac{\hat\mu_0^{\pi,\lambda_0}}{1-\hat\mu_0^{\pi,\lambda_0}}.$$
Under the same assumptions used for Corollary~\ref{corr:batch_odds}, its asymptotic distribution follows from the joint CLT for $(\hat\mu_1^{\pi,\lambda_1},\hat\mu_0^{\pi,\lambda_0})$ and the Delta method.

\subsection{General EIF-based estimands}\label{app:powertuning_eif}

The same control-variate construction extends to the general EIF setting of Section~\ref{sec:general}. Suppose the
full-data EIF admits the decomposition
$$\phi_{\mathrm{full}}(X,Y)=h(X)-\psi+\zeta(X,Y), \qquad \E[\zeta(X,Y)\mid X]=0,$$
so that the pseudo-outcome $W:=h(X)+\zeta(X,Y)$ satisfies $\E[W] = \phi$. Under Assumption~\ref{ass:basic_sampling}, we define the estimator
\begin{equation}\label{eq:power_eif_app}
\hat\psi^{\pi,\lambda} = \frac{1}{n}\sum_{i=1}^n\left[\lambda h(X_i)
+ \frac{\xi_i}{\pi(X_i)}\{W_i-\lambda h(X_i)\}\right], \qquad W_i:=h(X_i)+\zeta(X_i,Y_i).
\end{equation}
For any fixed $\lambda$, this estimator is unbiased for $\psi$, since $\E[W] = \psi$ and
$$\E\!\left[\lambda h(X)+\frac{\xi}{\pi(X)}\{W-\lambda h(X)\} \,\middle|\,X,Y\right] = W.$$ 
Setting $\lambda = 1$ recovers the usual one-step / EIF estimator,
$$\hat\psi^{\pi,1} = \frac{1}{n}\sum_{i=1}^n \left[h(X_i) + \frac{\xi_i}{\pi(X_i)}\zeta(X_i,Y_i)\right],$$
while $\lambda = 0$ corresponds to the label-only Horvitz--Thompson estimator of the pseudo-outcome $W$.

Then, with $A = \frac{\xi}{\pi(X)}W$ and $C = h(X) - \frac{\xi}{\pi(X)}h(X)$, we have $\hat\psi^{\pi, \lambda} = \P_n(A + \lambda C)$, and the variance minimizing coefficient is
\begin{equation}\label{eq:lambda_eif_app}
\lambda^* = -\frac{\Cov(A,C)}{\Var(C)}.
\end{equation}
Under Assumption~\ref{ass:basic_sampling}, this coefficient is well-defined whenever the control variate
is nondegenerate. In practice, $h$, $\zeta$, and hence $W$ are replaced by their nuisance estimates, and $\lambda^*$ is estimated by plug-in covariance and variance estimates under the same nuisance-estimation conditions used for the one-step/TMLE results (Assumption~\ref{ass:general_eif}).

\subsection{Cross-fitting theory for odds ratio estimation} \label{app:crossfitting_OR}
We now justify the cross-fitting implementation introduced in Algorithm~\ref{alg:crossfitting}. As is standard in cross-fitting procedures, the key idea is the out-of-fold construction: the predictor used to evaluate observation $i$ is trained without using the label $Y_i$, with potential finetuning done on independent, out-of-fold data points. Thus, the correction term remains conditionally mean-zero.

For simplicity, we state the result for the two-fold algorithm. The same argument applies to any fixed number of folds. Let $\mathcal I_1,\mathcal I_2$ be a random split independent of the data, with
$|\mathcal I_\ell|/n\to 1/2$. Let $\hat f^{(-1)}$ be trained using labeled observations in $\mathcal I_2$ and evaluated on $\mathcal I_1$, and let $\hat f^{(-2)}$ be trained using labeled observations in $\mathcal I_1$ and evaluated on $\mathcal I_2$. Then, define the out-of-fold predictor $$\hat f_i^{\cf} := \hat f^{(-1)}(X_i)\mathbf 1\{i\in\mathcal I_1\} + \hat f^{(-2)}(X_i)\mathbf 1\{i\in\mathcal I_2\}.$$
The cross-fitted pseudo-outcome is 
$$T_i^{\cf} := \hat f_i^{\cf} + \frac{\xi_i}{\pi(X_i)}\{Y_i-\hat f_i^{\cf}\},$$ where $\pi$ is the global policy learned before the labels are used for cross-fitting. For $k \in \{0,1\}$, let $g_k(X)=\mathbf 1\{X\in G_k\}$, $p_k=\P(X\in G_k)$, and
$\hat\mu_k^{\cf} = \frac{1}{n_k} \sum_{i:Z_i=k}T_i^{\cf}$, $n_k=\sum_{i=1}^n g_k(X_i).$
\begin{assumption}
    [Cross-fitted nuisance stability]\label{ass:crossfit_stability}
In addition to the standing batch assumptions for the learned policy $\hat\pi$, assume that the out-of-fold
predictors are stable in $L_2(P_X)$; that is, there exists a square-integrable function $f_\infty$ such that
$$\max_{\ell=1,2}\|\hat f^{(-\ell)} - f_\infty\|_{L_2(P_X)}\toP 0.$$
Assume also the same positivity and moment conditions as in the batch CLT, so that $\hat\pi(X)$ is bounded away from zero with probability tending to one and the displayed variances are finite.
\end{assumption}

\begin{remark}
Assumption~\ref{ass:crossfit_stability} is a convenient sufficient condition, but not a necessary one. The proof only requires the out-of-fold nuisance remainder to be $o_p(1)$ after $\sqrt{n}$-scaling (conditional on the training folds used to construct the out-of-fold predictors) and the cross-fitted influence-function array satisfies a Lindeberg--Feller CLT. This is the standard
sample-splitting/cross-fitting argument in which the nuisance estimates are treated as fixed after
conditioning on the auxiliary sample; see, e.g., Chernozhukov et al.~\cite[Thm.~3.1, proof in App.]{dml}. For the underlying triangular-array CLT, see van der Vaart~\cite[Prop.~2.27]{vdv} or Billingsley~\cite[Thm.~27.2]{billingsley}.
\end{remark}

\noindent Let $\pi_k^*(X)$ denote the population-limit policy for subgroup $k$, $\pi_k^*(x) := \pi^{(k)}_{\beta_k^*}(x)$ denote the population-optimal oracle policy in the proof of Theorem~\ref{thm:batch_mean}. Define
$$T_k^*(X,\xi,Y) := f_\infty(X)+\frac{\xi}{\pi_k^*(X)}\{Y-f_\infty(X)\},$$
where $\xi\mid X,Z=k\sim\Bern(\pi_k^*(X))$, and set
$V_{k,\cf} := \Var\!\left(g_k(X)\{A_k^*(X,\xi,Y)-\mu_k\} \right).$

\begin{theorem}[Batch CLT with two-fold cross-fitting]\label{thm:batch_crossfit_mean}
Under the standard batch assumptions, the learned-policy conditions of Theorem~\ref{thm:batch_mean}, and
Assumption~\ref{ass:crossfit_stability},
$$\sqrt{n} \begin{pmatrix} \hat\mu_1^{\cf}-\mu_1\\ \hat\mu_0^{\cf}-\mu_0 \end{pmatrix} \toD \N\left(0, \begin{pmatrix} V_{1,\cf}/p_1^2 & 0\\ 0 & V_{0,\cf}/p_0^2 \end{pmatrix}\right).$$
Consequently, for $\mu_0,\mu_1\in(0,1)$, the cross-fitted log odds ratio estimator
$$\hat\theta^{\cf} = \log\frac{\hat\mu_1^{\cf}}{1-\hat\mu_1^{\cf}} - \log\frac{\hat\mu_0^{\cf}}{1-\hat\mu_0^{\cf}}$$
satisfies $$\sqrt{n}(\hat\theta^{\cf}-\theta) \toD \N(0,V_{\theta,\cf}), \quad \text{where } \quad V_{\theta,\cf} = \frac{V_{1,\cf}}{p_1^2\mu_1^2(1-\mu_1)^2} + \frac{V_{0,\cf}}{p_0^2\mu_0^2(1-\mu_0)^2}.$$
Moreover, the empirical influence-function variance estimator in Algorithm~\ref{alg:crossfitting} is consistent for $V_{\theta,\cf}$.
\end{theorem}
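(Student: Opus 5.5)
The plan is to compare the cross-fitted estimator with the oracle-policy, limiting-predictor estimator built from $T_k^*$. The argument reuses the learned-policy machinery of Theorem~\ref{thm:batch_mean} (Appendix~\ref{app:equicontinuity}) for the policy. It handles the out-of-fold predictor by conditioning on the training folds.

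\textbf{Step 1: ratio linearization.} For each $k$, apply the ratio expansion as in Eq.~\eqref{eq:ratio_lin_batch}:
$$\sqrt n(\hat\mu_k^{\cf}-\mu_k)=p_k^{-1}\,n^{-1/2}\sum_i g_k(X_i)\{T_i^{\cf}-\mu_k\}+o_p(1).$$
I then decompose $T_i^{\cf}-\mu_k$ as
$$\bigl(T_k^*(O_i)-\mu_k\bigr)+\bigl(T_i^{\cf}-T_i^{\hat\pi,f_\infty}\bigr)+\bigl(T_i^{\hat\pi,f_\infty}-T_k^*(O_i)\bigr),$$
where $T^{\hat\pi,f_\infty}$ uses the learned policy $\hat\pi$ together with the fixed predictor $f_\infty$.

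\textbf{Step 2: the three pieces.} The first piece is an i.i.d. sum. The classical CLT gives $\N(0,V_{k,\cf})$, and the two groups have zero covariance because $g_1g_0=0$.

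The third piece is exactly the learned-policy remainder $\G_n(m_{k,\hat\beta_k}-m_{k,\beta_k^*})$ with $f$ replaced by $f_\infty$. Its population mean is zero by Lemma~\ref{lem:policy_invariance}. It is $o_p(1)$ by Lemma~\ref{lem:donsker_equicont} combined with Lemma~\ref{lem:beta_consistency_general}.

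The second piece equals
$$n^{-1/2}\sum_{\ell}\sum_{i\in\mathcal I_\ell} g_k(X_i)\Bigl(1-\tfrac{\xi_i}{\hat\pi(X_i)}\Bigr)\Delta_\ell(X_i),\qquad \Delta_\ell:=\hat f^{(-\ell)}-f_\infty.$$

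\textbf{Step 3: the main obstacle.} Controlling this cross-fitting remainder is the delicate step. The difficulty is that $\hat\pi$ and the labels $\xi_i$ were produced before splitting, and $\hat f^{(-\ell)}$ depends on the labels in the other fold. The approach I would try first proceeds in three moves.
\begin{enumerate}
\item Replace $\hat\pi$ by $\pi_k^*$ inside this term. The incurred error is bounded by $\sup_x|\hat\pi-\pi^*|\,\|\Delta_\ell\|$ times $\sqrt n$-scale terms. It is handled through the same coupling $\xi_i=\mathbf 1\{V_i\le\pi(X_i)\}$, where products of two vanishing quantities become negligible via equicontinuity over the product class $\{(1-\xi(\beta)/\pi_\beta)\,d\}$.
\item Condition on the data in $\mathcal I_{-\ell}$, which includes $(X_j,V_j,Y_j)$ for $j\notin\mathcal I_\ell$. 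Given this, $\Delta_\ell$ is fixed, and the summands for $i\in\mathcal I_\ell$ are i.i.d. with conditional mean zero, because $\E[1-\xi/\pi^*\mid X]=0$.
\item Bound the conditional second moment by $\underline\pi^{-1}\|\Delta_\ell\|_{L_2}^2$. This tends to zero in probability by Assumption~\ref{ass:crossfit_stability}, so conditional Chebyshev gives $o_p(1)$.
\end{enumerate}
The residual dependence of $\hat\beta$ on all covariates is handled by the fact that $\hat\beta$ uses only unlabeled covariates. If this fails to give an exact conditional mean of zero, I fall back on the sup-over-$\beta$ equicontinuity bound in move 1.

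\textbf{Step 4: delta method and variance consistency.} Combining the three pieces with Slutsky yields the stated joint normality for $(\hat\mu_1^{\cf},\hat\mu_0^{\cf})$. The delta method with gradient $\bigl(1/\{\mu_1(1-\mu_1)\},\,-1/\{\mu_0(1-\mu_0)\}\bigr)$ gives $V_{\theta,\cf}$, as in Corollary~\ref{corr:batch_odds}.

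For consistency of $\hat V_\theta^{\cf}$, I would show $n^{-1}\sum_i g_k(X_i)(T_i^{\cf}-\hat\mu_k^{\cf})^2\toP V_{k,\cf}$. The plan is to write $T_i^{\cf}=T_k^*(O_i)+\text{error}_i$ and show $n^{-1}\sum_i\text{error}_i^2\toP 0$. This follows from the same bounds: $\|\Delta_\ell\|_{L_2}\to0$ and $\sup|\hat\pi-\pi^*|\to0$, together with positivity and the moment assumption. The law of large numbers applied to $T_k^*$ and continuous mapping of $\hat p_k$ and $\hat\mu_k^{\cf}$ then complete the argument.
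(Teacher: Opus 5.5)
This is essentially the paper's proof: ratio linearization, comparison with the oracle contribution $T^*$ built from $f_\infty$ and $\pi_k^*$, conditional Chebyshev on the out-of-fold nuisance term, and the coupling argument (Lemmas~\ref{lem:policy_invariance}, \ref{lem:donsker_equicont}, \ref{lem:beta_consistency_general}) for the learned-policy term with $f_\infty$; the delta method and an $L_2$-approximation plus LLN then give $\hat V_\theta^{\cf}$. Your three-piece split is exact, so it also covers the interaction term $\bigl(\xi(\beta^*)/\pi^*-\xi(\hat\beta)/\hat\pi\bigr)\Delta_\ell$ that the paper's two-step argument does not treat explicitly, but your first move is unnecessary (and its product-class equicontinuity would need a Donsker condition on the predictors, which cross-fitting is meant to avoid): conditional on all covariates and the other fold, both $\hat\beta$ and $\Delta_\ell$ are fixed, so the $\hat\pi$-weighted remainder is already conditionally mean zero with conditional variance at most $\underline\pi^{-1}n^{-1}\sum_{i\in\mathcal I_\ell}\Delta_\ell(X_i)^2$.
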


\begin{proof}
We prove the result for a fixed subgroup $k$. Let
$$\hat T_i^{\cf} := \hat f_i^{\cf} + \frac{\xi_i}{\hat\pi(X_i)}\{Y_i-\hat f_i^{\cf}\}.$$
Since $n_k/n\toP p_k$,
$$\sqrt{n}(\hat\mu_k^{\cf}-\mu_k) = \frac{1}{p_k}\sqrt{n}\,P_n\left[g_k(X)\{\hat A^{\cf}-\mu_k\}\right]+o_p(1).$$
We compare $\hat T_i^{\cf}$ to the oracle contribution based on $f_\infty$ and the population-limit policy $\pi_k^*$:
$$T_i^* = f_\infty(X_i) + \frac{\xi_i}{\pi_k^*(X_i)} \{Y_i-f_\infty(X_i)\}.$$
It is enough to show
$$\sqrt{n}\,P_n\left[g_k(X)\{\hat T^{\cf}-T^*\}\right] = o_p(1).$$
First, we consider the error from estimating $f$, temporarily keeping the policy fixed. On fold $\ell$, write $\delta_\ell(x)=\hat f^{(-\ell)}(x)-f_\infty(x)$. The corresponding term is
$$R_{n,k,\ell} = \frac{1}{\sqrt{n}}\sum_{i\in\mathcal I_\ell}g_k(X_i)\left(1-\frac{\xi_i}{\pi_k^*(X_i)}\right)\delta_\ell(X_i).$$
Conditional on the training data used to construct $\hat f^{(-\ell)}$ and on the covariates in the evaluation fold, the summands have mean zero because $\E[\xi_i\mid X_i]=\pi_k^*(X_i)$. Their conditional variance is bounded by
$$C\cdot\frac{1}{n}\sum_{i\in\mathcal I_\ell} \delta_\ell(X_i)^2$$
for a constant $C<\infty$ depending only on the positivity bound. Since $\hat f^{(-\ell)}$ is trained without labels from $\mathcal I_\ell$, the evaluation-fold covariates are independent of the fitted function, and the last display is $o_p(1)$ by Assumption~\ref{ass:crossfit_stability}. Hence $R_{n,k,\ell}=o_p(1)$ for each fold, and since the number of folds is fixed, the full cross-fitting error from replacing $f_\infty$ by $\hat f_i^{\cf}$ is $o_p(1)$.

Second, replacing the learned policy $\hat\pi$ by its population limit $\pi^*$ contributes $o_p(1)$ by the same learned-policy stochastic equicontinuity argument used in the proof of Theorem~\ref{thm:batch_mean} (see Appendix~\ref{app:equicontinuity}), now applied with the fixed limiting predictor $f_\infty$. Therefore
$$\sqrt{n}(\hat\mu_k^{\cf}-\mu_k) = \frac{1}{p_k}\frac{1}{\sqrt{n}} \sum_{i=1}^n g_k(X_i)\{T_i^*-\mu_k\}+o_p(1).$$
The summands in the leading term are i.i.d. mean-zero variables with variance $V_{k,\cf}$, so the univariate CLT gives the desired marginal limit. Joint normality for $k=0,1$ follows by the multivariate CLT, and the covariance is zero because $g_1(X)g_0(X)=0$ almost surely. The log odds ratio result follows by the Delta method.

Finally, the variance estimator in Algorithm~\ref{alg:crossfitting} is the empirical second moment of the plug-in influence-function contributions. The preceding $L_2$ approximation, consistency of $\hat\mu_k^{\cf}$, and $\hat p_k\toP p_k$ imply that these plug-in contributions converge in $L_2$ to the limiting influence function for $\theta$. A law of large numbers argument then gives consistency of $\hat V_{\theta}^{\cf}$.
\end{proof}

\subsection{Cross-fitting theory for general EIF estimands}\label{app:crossfitting_EIF}

The same argument extends to the general EIF setting of Section~\ref{sec:general}. Let $\mathcal I_1,\mathcal I_2$ be the same two-fold split, and let
$(\hat h^{(-\ell)},\hat\zeta^{(-\ell)})$ denote nuisance estimates trained without using labels from fold $\mathcal I_\ell$. Define the out-of-fold nuisance estimates $$\hat h_i^{\cf} = \hat h^{(-1)}(X_i)\mathbf 1\{i\in\mathcal I_1\} + \hat h^{(-2)}(X_i)\mathbf 1\{i\in\mathcal I_2\},$$
and similarly $\hat\zeta_i^{\cf}=\hat\zeta^{(-\ell(i))}(X_i,Y_i)$. The cross-fitted one-step estimator is $$\hat\psi^{\cf} = \frac{1}{n}\sum_{i=1}^n\left[ \hat h_i^{\cf} + \frac{\xi_i}{\hat\pi(X_i)}\hat\zeta_i^{\cf} \right].$$

\begin{assumption}[Cross-fitted EIF expansion]\label{ass:crossfit_eif}
The cross-fitted nuisance estimates and learned policy satisfy $$\hat\psi^{\cf}-\psi = \frac{1}{n}\sum_{i=1}^n \left[h(X_i)-\psi + \frac{\xi_i}{\pi^*(X_i)}\zeta(X_i,Y_i) \right] + o_p(n^{-1/2}),$$
where $\pi^*=\pi_{\beta^*}$ is the population-limit policy.
\end{assumption}

\begin{theorem}[Cross-fitted CLT for general EIF estimands]\label{thm:crossfit_general}
Under the standing EIF assumptions and Assumption~\ref{ass:crossfit_eif}, $$\sqrt{n}(\hat\psi^{\cf}-\psi) \toD \N\left(0,\sigma_{\cf}^2\right),$$
where $$\sigma_{\cf}^2 = \Var(h(X)) + \E\left[\frac{c(X)}{\pi^*(X)}\right], \qquad c(X)=\E[\zeta(X,Y)^2\mid X].$$
\end{theorem}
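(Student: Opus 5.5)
The plan is to reduce Theorem~\ref{thm:crossfit_general} to an ordinary i.i.d.\ central limit theorem by means of Assumption~\ref{ass:crossfit_eif}. That assumption states that the cross-fitted one-step estimator is asymptotically linear with respect to the fixed, non-random influence function
$$m_{\beta^*}(X,\xi,Y) := h(X)-\psi + \frac{\xi}{\pi^*(X)}\zeta(X,Y), \qquad \pi^* = \pi_{\beta^*}.$$
Multiplying the expansion by $\sqrt{n}$ gives
$$\sqrt{n}(\hat\psi^{\cf}-\psi) = \frac{1}{\sqrt n}\sum_{i=1}^n m_{\beta^*}(O_i) + o_p(1).$$
By Slutsky's theorem, it then suffices to show that the leading sum converges to $\N(0,\sigma_{\cf}^2)$.

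To apply the CLT, I will work on the coupled probability space of Appendix~\ref{app:equicontinuity}, with $\xi_i = \mathbf 1\{V_i\le \pi^*(X_i)\}$. In that construction the triples $O_i=(X_i,Y_i,V_i)$ are i.i.d., and $m_{\beta^*}$ is a fixed measurable function of $O_i$, so the summands are i.i.d.

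The mean is zero. The missing-at-random structure gives $\E[\xi\mid X,Y]=\pi^*(X)$, so
$$\E\left[\frac{\xi}{\pi^*(X)}\zeta(X,Y)\,\middle|\, X\right] = \E[\zeta(X,Y)\mid X] = 0.$$
Combined with $\E[h(X)]=\psi$, this shows $\E[m_{\beta^*}]=0$.

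The second moment is finite. This follows from the moment conditions in Assumption~\ref{ass:general_eif}, namely $\E[h(X)^2]<\infty$ and $\E[c(X)/\pi^*(X)]<\infty$, together with positivity from Assumption~\ref{ass:basic_sampling}. The Lindeberg--L\'evy CLT then applies to the leading sum.

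The variance is identified exactly as in the limiting-variance computation of Appendix~\ref{app:generalCLT}. Because the conditional mean of the weighted residual given $X$ is zero, the cross term between $h(X)-\psi$ and the residual term vanishes. Next, since $\xi^2=\xi$,
$$\Var\left(\frac{\xi}{\pi^*(X)}\zeta(X,Y)\,\middle|\,X\right) = \E\left[\frac{\xi}{\pi^*(X)^2}\zeta(X,Y)^2\,\middle|\,X\right] = \frac{c(X)}{\pi^*(X)}.$$
The law of total variance therefore yields $\sigma_{\cf}^2 = \Var(h(X)) + \E[c(X)/\pi^*(X)]$, which is the variance claimed in the theorem.

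The only real obstacle is justifying that the remainder in Assumption~\ref{ass:crossfit_eif} is $o_p(n^{-1/2})$. Since it is taken as an assumption here, I would only remark briefly on why it is plausible; this remark would be supporting discussion, not part of the proof. The remainder splits into two pieces.
\begin{itemize}
\item \emph{Replacing $\hat\pi$ by $\pi^*$.} This is handled by the equicontinuity argument of Lemma~\ref{lem:donsker_equicont}, with the multiplier $Y-f(X)$ replaced by $\zeta(X,Y)$, combined with $\hat\beta\toP\beta^*$ from Lemma~\ref{lem:beta_consistency_general}.
\item \emph{Replacing the out-of-fold $(\hat h,\hat\zeta)$ by $(h,\zeta)$.} Conditionally on the training fold, this is a mean-zero term with vanishing variance, as in the proof of Theorem~\ref{thm:batch_crossfit_mean}, plus a second-order product bias that is assumed to be $o_p(n^{-1/2})$.
\end{itemize}
With the assumption granted, the proof reduces to the three CLT checks above (i.i.d.\ summands, zero mean, finite variance) and the variance identity.
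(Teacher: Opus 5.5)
Your proposal is correct and follows the paper's proof: it invokes Assumption~\ref{ass:crossfit_eif} to reduce to an i.i.d.\ sum, applies the classical CLT, and identifies the variance via $\E[\zeta\mid X]=0$ and $\E[\xi\mid X]=\pi^*(X)$. Your explicit use of the coupled indicators $\xi_i=\mathbf 1\{V_i\le\pi^*(X_i)\}$ is a worthwhile clarification, because with the realized labels drawn from $\hat\pi$ the summands would not be i.i.d., a point the paper's short proof passes over without comment.
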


\begin{proof}
By Assumption~\ref{ass:crossfit_eif}, $$\sqrt{n}(\hat\psi^{\cf}-\psi) = \frac{1}{\sqrt{n}}\sum_{i=1}^n \left[h(X_i)-\psi + \frac{\xi_i}{\pi^*(X_i)} \zeta(X_i,Y_i)\right]+o_p(1).$$
The leading summands are i.i.d. and mean zero, so asymptotic normality follows from the classical CLT. The variance
identity follows from $\E[\zeta(X,Y)\mid X]=0$ and $\E[\xi\mid X]=\pi^*(X)$:
$$\Var\left(h(X)-\psi+\frac{\xi}{\pi^*(X)}\zeta(X,Y)\right) = \Var(h(X)) + \E\left[\frac{c(X)}{\pi^*(X)}\right].$$
\end{proof}

\section{Experimental performance evaluation details}

\subsection{Effective sample size for log odds ratio}\label{app:ESS}

Suppose there are $n$ total data points, all unlabeled at the start. Let $\hat{V}^{\mathrm{human}} / n_{\text{eff}}$ be an estimate of estimator variance when using only $n_{\text{eff}}$ labeled data points. Similarly, let $\hat{V}^{\text{method}}/n$ be an estimate of estimator variance for a different estimator which uses predicted labels as well (and thus is divided by $n$ since all $n$ data points are used in some fashion). We define $n_{\text{eff}}$ to be the number of human labeled data points needed for the confidence interval constructed with only labels to match the width of an interval constructed for an estimator $\hat{\theta}^{\text{method}}$ using the labeled data as well as predicted labels for other data points, satisfying $\Var(\hat{\theta}^{\text{method}}) = \Var(\hat{\theta}^{\text{human}}_{n_{\text{effective}}})$. Thus, we evaluate performance of any estimator $\hat{\theta}^{\text{method}}$ by calculating the associated effective sample size: 
$$\frac{\hat{V}^{\text{method}}}{n} = \frac{\hat{V}^{\mathrm{human}}}{n_{\text{eff}}} \implies n_{\text{eff}} = n \cdot \frac{\hat{V}^{\text{human}}}{ \hat{V}^{\text{method}}}$$

For estimating the log odds ratio with binary labels $Y_i \in \{0,1\}$, in a population of $n$ data points, $$\hat{V}^{\text{human}} = n \cdot \bigg(\frac{1}{\sum_{i\in G_1} (1 - Y_i)} + \frac{1}{\sum_{i \in G_1} Y_i} + \frac{1}{\sum_{i \in G_0} (1 - Y_i)} + \frac{1}{\sum_{i \in G_0} Y_i}\bigg),$$ where $G_1$ and $G_0$ perfectly separate the population of $n$ points, so $G_1 \cup G_0 = [n]$ and $|G_1| + |G_0| = n$. This is an oracle quantity which is used for the purpose of method evaluation. In a real-world setting where we do not have access to the true labels, then we can approximate $\hat{V}^{\text{human}}$ with the sample version of the component inside the parentheses using existing labeled data from a training split, or in the case of sequential inference where there is no training portion, based on the same burn-in labeled sample used to fit predictive model $f$.

\subsection{Finite-population coverage calibration for in-sample parameter evaluation}
\label{app:overcoverage}

Section~\ref{sec:overcoverage} explains why coverage against a fixed evaluation-set parameter $\theta_N$ can be conservative when intervals are calibrated to the superpopulation parameter $\theta(P)$. It also derives the basic mean-estimation calculation under a uniform labeling policy and introduces the finite-population variance estimator $\hat V_{\mathrm{HT}}$. Here we provide the corresponding details for nonuniform policies and for the log odds-ratio estimand used in the real-data experiments.

Let $F_N=\{(X_i,Y_i,Z_i)\}_{i=1}^N$
denote the fixed evaluation population, where $Z_i \in \{0,1\}$ is the group indicator. We condition throughout on $F_N$. Let $\pi_i = \pi(X_i)$ be the realized labeling probability for unit $i$, and suppose
$\xi_i\mid F_N \sim \Bern(\pi_i)$ independently across units. Let $a_i$ denote the augmentation term used by the estimator. For example, $a_i=f(X_i)$ for the base estimator, $a_i=\lambda f(X_i)$ for a power-tuned estimator, and $a_i=\hat f_i^{\mathrm{cf}}(X_i)$ for a cross-fitted estimator. Define the difference $R_i:=Y_i-a_i.$ For a finite-population mean,
$\theta_N=\frac{1}{N}\sum_{i=1}^N Y_i,$ the augmented inverse-probability weighted estimator is
$$\hat\theta = \frac{1}{N}\sum_{i=1}^N \left[a_i+\frac{\xi_i}{\pi_i}(Y_i-a_i)\right].$$
As shown in Section~\ref{sec:overcoverage},
$$\hat\theta-\theta_N = \frac{1}{N}\sum_{i=1}^N \left(\frac{\xi_i}{\pi_i}-1\right)R_i,$$
so the conditional finite-population variance is
$$V_{\mathrm{act}} = \Var(\hat\theta-\theta_N\mid F_N) = \frac{1}{N^2} \sum_{i=1}^N \frac{1-\pi_i}{\pi_i}R_i^2.$$
Because $R_i$ is observed only when $\xi_i = 1$, we estimate this quantity by
$$\hat V_{\mathrm{HT}} = \frac{1}{N^2} \sum_{i=1}^N \xi_i\frac{1-\pi_i}{\pi_i^2}R_i^2,$$
which satisfies
$\E[\hat V_{\mathrm{HT}}\mid F_N]=V_{\mathrm{act}}.$

\paragraph{Log odds-ratio calibration.}
For the log odds ratio, define the finite-population subgroup means
$$\mu_{k,N} = \frac{1}{N_k}\sum_{i:Z_i=k}Y_i, \qquad N_k=\sum_{i=1}^N \mathbf 1\{Z_i=k\}, \qquad k \in \{0,1\}.$$
The finite-population log odds ratio is
$$\theta_N = g(\mu_{1,N},\mu_{0,N}) = \log\frac{\mu_{1,N}}{1-\mu_{1,N}} - \log\frac{\mu_{0,N}}{1-\mu_{0,N}}.$$
The corresponding augmented estimator of the $k$th subgroup mean is
$$\hat\mu_k = \frac{1}{N_k} \sum_{i:Z_i=k}\left[ a_i+\frac{\xi_i}{\pi_i}(Y_i-a_i)\right].$$
Its conditional error is
$$\hat\mu_k-\mu_{k,N} = \frac{1}{N_k} \sum_{i:Z_i=k} \left(\frac{\xi_i}{\pi_i}-1\right)R_i,$$
with conditional variance
$$V_{k,\mathrm{act}} = \Var(\hat\mu_k-\mu_{k,N}\mid F_N) = \frac{1}{N_k^2} \sum_{i:Z_i=k} \frac{1-\pi_i}{\pi_i}R_i^2.$$
We estimate this variance using the groupwise Horvitz--Thompson estimator
$$\hat V_{k,\mathrm{HT}} = \frac{1}{N_k^2} \sum_{i:Z_i=k} \xi_i\frac{1-\pi_i}{\pi_i^2}R_i^2, \qquad \E[\hat V_{k,\mathrm{HT}}\mid F_N] = V_{k,\mathrm{act}}.$$
A first-order Delta-method approximation around the finite-population subgroup means gives
$$\hat\theta-\theta_N \approx d_{1,N}(\hat\mu_1-\mu_{1,N}) + d_{0,N}(\hat\mu_0-\mu_{0,N}),$$
where
$$d_{1,N} = \frac{1}{\mu_{1,N}(1-\mu_{1,N})}, \qquad d_{0,N} = -\frac{1}{\mu_{0,N}(1-\mu_{0,N})}.$$
Since the two subgroups are disjoint and labels are queried independently across units,
$$V_{\theta,\mathrm{act}} \approx d_{1,N}^2V_{1,\mathrm{act}} + d_{0,N}^2V_{0,\mathrm{act}}.$$
In practice, we plug in the estimated subgroup means:
$$\hat d_1 = \frac{1}{\hat\mu_1(1-\hat\mu_1)}, \qquad \hat d_0 = -\frac{1}{\hat\mu_0(1-\hat\mu_0)}.$$
Thus the finite-population variance estimate for the log odds ratio is
$$\hat V_{\theta,\mathrm{HT}} = \hat d_1^2\hat V_{1,\mathrm{HT}} + \hat d_0^2\hat V_{0,\mathrm{HT}}.$$
The finite-population-calibrated interval reported in the real-data coverage evaluation is therefore
$$\hat\theta \pm z_{1-\alpha/2}\sqrt{\hat V_{\theta,\mathrm{HT}}}.$$

\paragraph{Implementation in the real-data experiments.}
For each replicate in the real-data experiments, we compute both the usual Wald interval and the finite-population
calibrated interval. The implementation is:
\begin{enumerate}
    \item Compute the point estimate $\hat \theta$ using the queried labels and the chosen augmentation $a_i$.
    \item Compute the usual variance estimate $V_{\theta,\mathrm{int}}$ used in the superpopulation-style Wald interval.
    \item For each queried unit, compute $R_i=Y_i-a_i$. Unqueried units do not contribute to
    $\hat V_{k,\mathrm{HT}}$, since their contribution is multiplied by $\xi_i = 0$.
    \item Compute $\hat V_{1,\mathrm{HT}}$ and $\hat V_{0,\mathrm{HT}}$, then combine them through the Delta method to obtain $\hat V_{\theta,\mathrm{HT}}$.
    \item Report the finite-population-calibrated interval $\hat\theta
    \pm
    z_{1-\alpha/2}\sqrt{\hat V_{\theta,\mathrm{HT}}}.$
    \item Report $\hat\gamma_\theta=V_{\theta,\mathrm{int}}/\hat V_{\theta,\mathrm{HT}}$ as a diagnostic of the
    variance inflation of the usual Wald interval relative to the finite-population conditional variance.
\end{enumerate}
This calibration is used only for coverage evaluation against the fixed finite-population benchmark $\theta_N$. It does not replace the usual superpopulation interval when the inferential target is $\theta(P)$.

\section{Additional experimental results and details}
\subsection{Additional details: CheXpert real data example} \label{app:CHX_details}

We evaluate our procedure on the CheXpert chest radiograph dataset \cite{CheXpert, CheXpert-paper}, using a pretrained chest X-ray classifier to provide black-box predictions. This experiment is intended to mimic a deployment setting where a large collection of covariates and images is available, a pretrained model can produce predictions for all observations, but true labels are treated as costly and can only be queried for a subset of individuals.

\subsubsection{Dataset and preprocessing} \label{app:chex-data}

We use publicly available CheXpert training data, restricting the data to one study per patient to avoid repeated observations from the same individual, which violates the independence assumption. Demographic variables used include age, sex, frontal/lateral view, AP/PA view, and the original CheXpert pathology labels. CheXpert outcome labels take values in $\{0, 1, -1, \texttt{NA}\}$, where $-1$ denotes uncertainty and $\texttt{NA}$ denotes an unmentioned or undocumented finding. For our analyses, we restrict to observations with definitive ground-truth labels $Y \in \{0,1\}$ for the target pathology. If the \texttt{No finding} field is labeled as 1.0 in the original dataset, then all pathology labels are set to 0, even if they are currently \texttt{NA} or $\texttt{NaN}$. We set patient cardiomegaly status as the $Y$ label. The remaining demographic variables as well as X-ray images are used for generating predictions $f(X)$, while age binarized to above and below 40 years old defines the two disjoint subgroups of patients.
 
\subsubsection{Pretrained predictions} \label{app:chex-predictions}

We obtain predictions from a pretrained DenseNet-based chest radiograph classifier, using the \texttt{densenet121-res224-chex} weights from \texttt{torchxrayvision}, an open-source Python library accessible on GitHub. Images are center-cropped and resized to $224 \times 224$ using the preprocessing pipeline from \texttt{torchxrayvision}. The model outputs a predicted probability for each pathology, which we treat as a fixed black-box prediction function $$\hat{p}_i = f(X_i) = \hat{\P}(Y_i = 1 \mid x_i).$$ 

Summary statistics for the predictive model, including age-stratified performance, are given in Figure~\ref{fig:CHX_stats}. Although the age-stratified AUROCs are similar, indicating comparable ranking performance across age groups, the fixed threshold of 0.5 yields different operating statistics. In particular, the younger group has higher recall but lower accuracy, suggesting that the model is more sensitive but less specific in this stratum. 

\begin{figure}
    \centering
    \includegraphics[width=\linewidth]{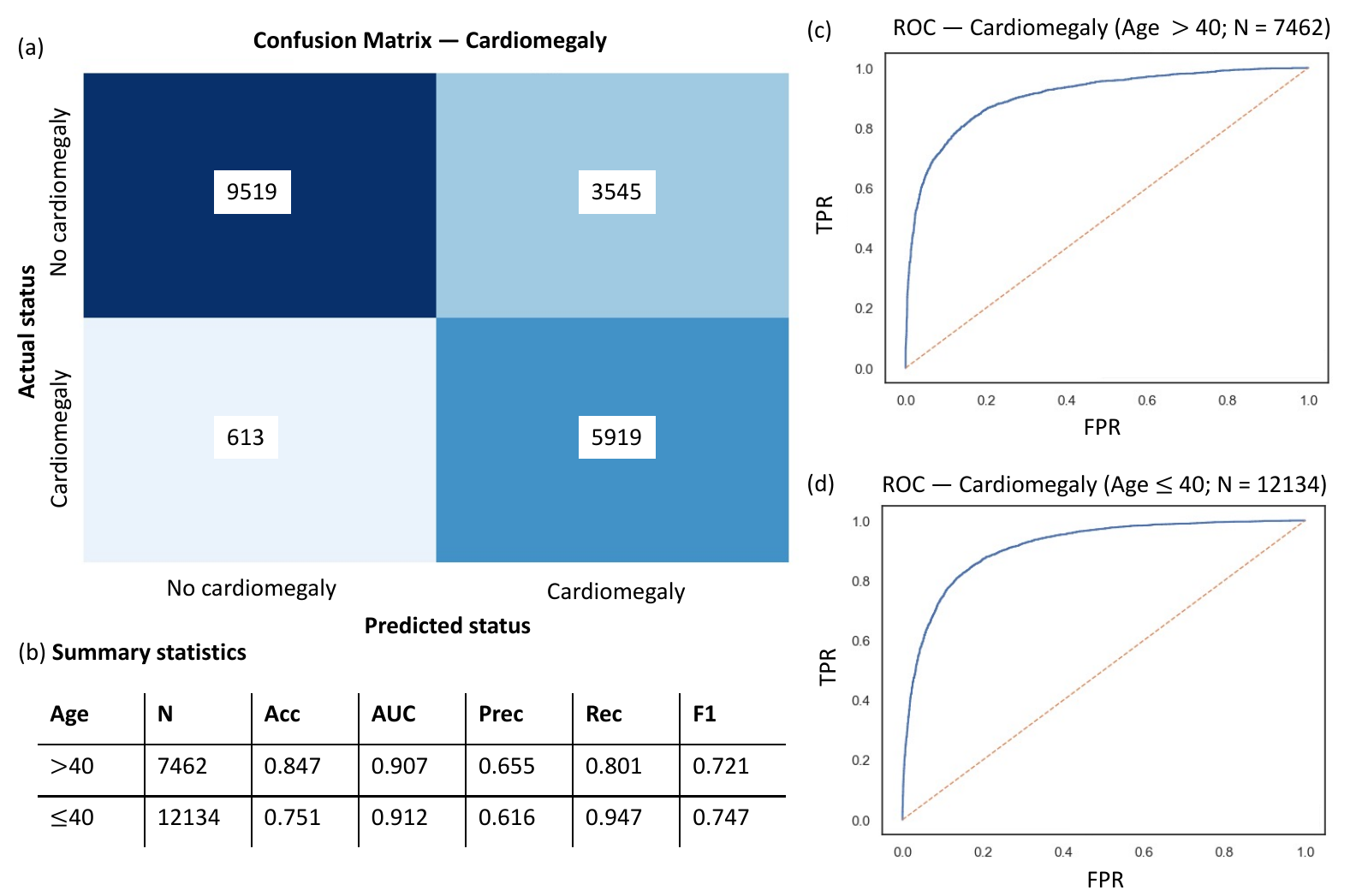}
    \caption{Predictive performance of the CheXpert-pretrained model for cardiomegaly: (a) shows the overall confusion matrix using a probability threshold of 0.5, evaluated on patients with definitive cardiomegaly labels; (b) reports age-stratified summary statistics, including accuracy, AUROC, precision, recall, and F1 score; (c) and (d) show ROC curves for patients with age $> 40$ and age $\leq 40$, respectively. The model exhibits strong discriminative performance in both age strata, with AUROC $0.907$ for patients with age $> 40$ and AUROC $0.912$ for patients with age $\leq 40$. However, the thresholded characteristics differ across age groups: recall is higher among patients with age $\leq 40$, while accuracy and precision are higher among patients with age $> 40$.}
    \label{fig:CHX_stats}
\end{figure}

\subsection{Additional details: breast cancer subtype real data example} \label{app:BRCA_details}

This appendix describes the construction of the The Cancer Genome Atlas Breast Cancer (TCGA-BRCA) whole-slide imaging (WSI) dataset used in the empirical study. The pathology slides serve as high-dimensional covariates $X_i$ from which we construct an auxiliary prediction $\hat{p}_i = f(X_i)$ of triple-negative breast cancer (TNBC) status. These predictions are used to guide label allocation. The true TNBC labels used for estimation and evaluation are clinically derived receptor-status labels.

\subsubsection{Acquisition and organization of whole-slide images}
\label{app:tcga-brca-wsi-acquisition}

Whole-slide pathology images for the TCGA-BRCA cohort were obtained from the Genomic Data Commons (GDC) using a reproducible, programmatic download pipeline. Because the raw imaging files are large and the full cohort requires substantial disk storage, all download and validation steps were carried out on a computing cluster. 

We first constructed an imaging manifest by querying the GDC \texttt{/files} endpoint for files associated with the TCGA-BRCA project. The query was restricted to slide images by filtering for
\texttt{cases.project.project\_id = TCGA-BRCA}, 
\texttt{files.data\_category = Biospecimen}, and \texttt{files.data\_type = Slide Image}.
The response was requested in tab-separated format and included the fields \texttt{file\_id}, \texttt{file\_name}, and \texttt{cases.case\_id}. This produced a file-level manifest in which each row corresponds to one candidate whole-slide image.

The file manifest uses GDC internal case universally unique identifiers (UUIDs), whereas the clinical data are indexed by TCGA submitter barcodes. To reconcile these identifiers, we separately queried the GDC \texttt{/cases} endpoint to obtain the mapping between \texttt{case\_id} and \texttt{submitter\_id}. The latter corresponds to the TCGA patient barcode. We merged the imaging manifest with this case-level
mapping to obtain a patient-linked imaging table. Since some patients have multiple associated slide images, we constructed a patient-level imaging cohort by retaining a single representative slide per patient, chosen as the first available slide in the manifest by default.

The imaging files are whole-slide pathology images, stored in SVS format. These files are high-resolution, multi-scale digital pathology images and are substantially larger than the accompanying clinical metadata. The downloaded data were stored on Sherlock in a directory structure indexed by GDC file identifiers, with a separate metadata table linking each file to its TCGA patient barcode and clinical annotations.

Due to intermittent connection failures in bulk transfer, the final acquisition procedure used a manifest-driven manual loop over GDC file identifiers. After each download pass, the set of successfully downloaded files was compared against the original manifest. Files that were missing, incomplete, or failed integrity checks were identified and re-queued for subsequent download attempts. This sweep-and-retry procedure was repeated until no additional files could be recovered. A small number of files remained unavailable after repeated attempts, due either to persistent connection failures or file-level unavailability from the remote source. These files were excluded from the final imaging cohort. The final dataset was obtained by merging the validated imaging table with the cleaned clinical cohort using TCGA patient barcodes.

\subsubsection{Generation of whole-slide subtype predictions}
\label{app:wsi-prediction-generation}

Each patient-level observation consists of a whole-slide image (WSI), which is a high-resolution digital pathology image. Since a WSI cannot be processed directly at full resolution by a standard convolutional neural network, we first converted each slide into a small collection of informative image patches. For each slide, we identified tissue-containing regions using a low-resolution thumbnail and excluded background regions with little or no tissue content. We then sampled a fixed number of RGB patches from the tissue mask. In our experiments, we used $K_i = 16$ patches per patient, each of size $224 \times 224$. This produced a set of image patches per patient $i$: $X_i = \{X_{i1}, \dots X_{iK_i}\}$, where $X_{ij}$ denotes patch $j$ from the patient.

We use a weakly supervised multiple-instance learning architecture to map the set of patches to a patient-level subtype prediction. Each patch was passed through a convolutional feature extractor based on a pretrained ResNet18 model, with the final fully connected layer removed. Let $z_{ij} = f(X_{ij}) \in \R^{512}$ denote the feature vector extracted from patch $j$ of patient $i$. The feature extractor was used to obtain patch-level representations, while the patient-level label was used only at the slide level.

To obtain a single slide-level representation, we aggregated patch-level feaure vectors by mean pooling: $$\bar{z}_i = \frac{1}{K_i}\sum_{j=1}^{K_i} z_{ij}.$$ This yields a 512-dimensional feature vector per patient. The aggregated feature vector was then passed through a trainable linear classifier, $\ell_i = w^\top \bar{z_i} + b,$ and transformed into a predicted probability of TNBC by a logistic link: $\hat{p}_i = \sigma(\ell_i) = (1 + \exp(-\ell_i))^{-1}.$ Recall $Y_i = 1$ denotes TNBC positive while $Y_i = 0$ denotes no TNBC.

The subtype label is observed at the patient level, while the model works on sampled patches from the patient's corresponding WSI. Thus, the patches are used to construct a slide-level representation unsupervised, while the label is only used for patient-level prediction. 

As noted in Section~\ref{sec:BRCA}, the classifier achieves accuracy around 80\% on the test set, but performs poorly in the minority group, resulting in a large fraction of false negatives. This is summarized in more detail in Figure~\ref{fig:BRCA_stats}.

The TNBC prediction task is substantially more challenging than the CheXpert cardiomegaly task because TNBC is rare (both in the broader population and in the data set), comprising approximately $17\%$ of observations. As a result, overall accuracy is a misleading performance measure: a majority classifier that always predicts non-TNBC already achieves approximately 83\% accuracy. The CNN's thresholded predictions behave similarly. At the default threshold, the model predicts very few TNBC cases, yielding high specificity but very low sensitivity. In the test set, the model correctly identifies only 8 of 101 TNBC cases, corresponding to a recall of approximately 0.08.

The subgroup results reveal an additional concern: the model's errors are not uniform across racial groups. The true TNBC rate is substantially higher among Black patients than among White patients, but the CNN predicts very low TNBC rates in both groups. Thus, the thresholded model substantially attenuates the
observed subgroup difference in TNBC prevalence. This is especially problematic for our estimand, an odds ratio comparing TNBC rates between Black and White patients, because using thresholded CNN predictions as surrogate labels would bias the estimated subgroup contrast toward the null.

The ROC curves suggest that the continuous CNN scores retain some ranking information, but this signal also differs by subgroup. The overall AUROC is approximately $0.70$, while the subgroup AUROCs are approximately 0.62 for Black
patients and 0.72 for White patients. Thus, the model discriminates TNBC cases less well in the Black subgroup, precisely the subgroup with higher TNBC prevalence. This disparity highlights that the model is not merely poorly calibrated at the default threshold; its predictive performance also varies across the groups relevant to the odds-ratio analysis. However, despite the poor performance of the predictive model on all standard metrics, we still see an increase in ESS by using \method~over using labels only, uniform labeling, and proportional-to-uncertainty sampling. This is promising, as it separates good results from a performative model/label predicting mechanism. We can view this as a pedagogical example to see how \method~performs with poor $f(X) = \hat{Y}$.

\begin{figure}
    \centering
    \includegraphics[width=\linewidth]{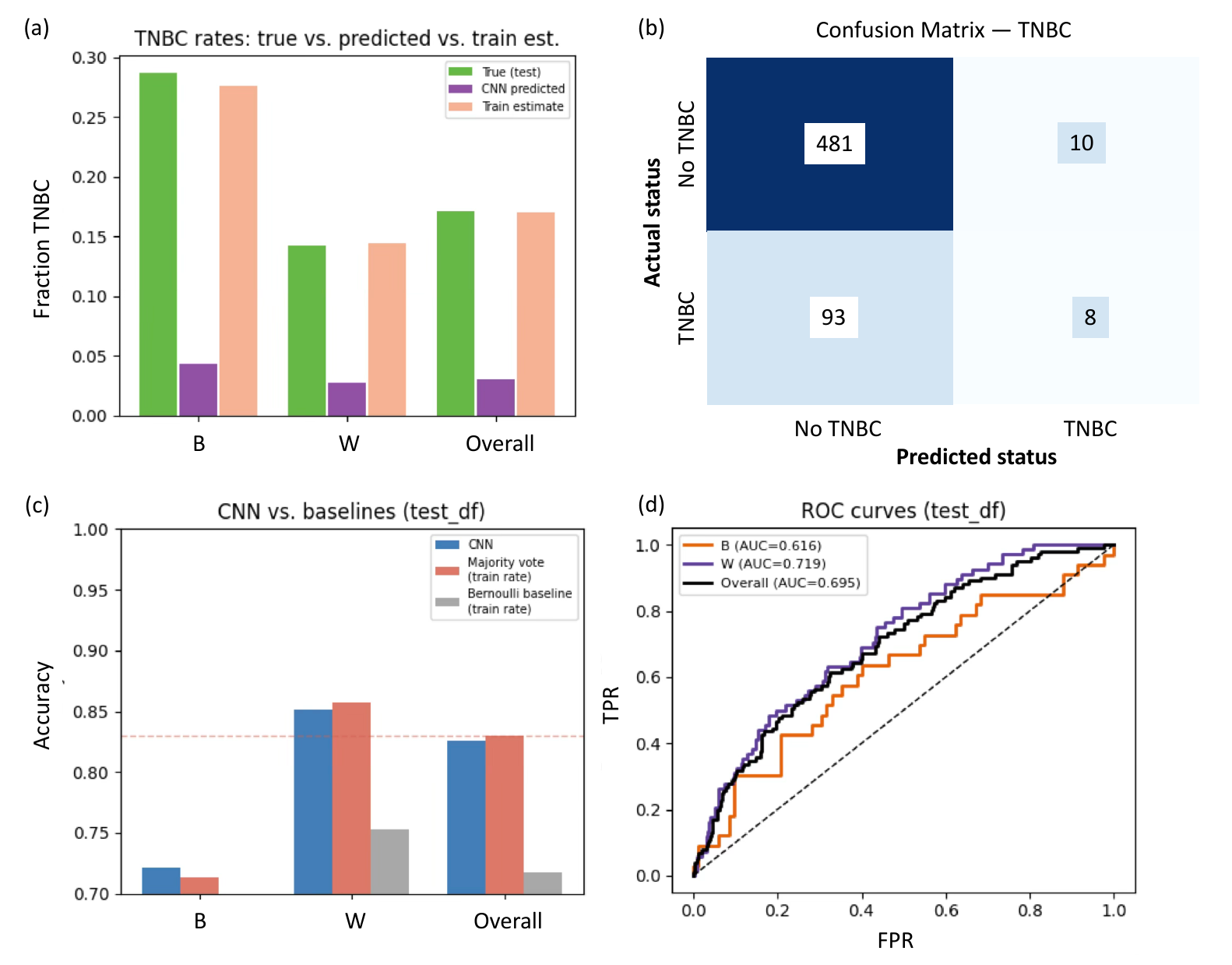}
    \caption{Predictive performance of the CNN for TNBC classification: Panel (a) compares the true TNBC rate, the CNN-predicted TNBC rate, and the training-set prevalence estimate overall and by racial group. The CNN substantially underpredicts TNBC, reflecting the difficulty of the imbalanced classification task. Panel (b) shows the confusion matrix at the default threshold, illustrating that most true TNBC cases are classified as non-TNBC. Panel (c) compares CNN accuracy to simple baselines based on the training-set prevalence. Because TNBC comprises only approximately $17\%$ of the data, a majority-class classifier already achieves high accuracy. Panel (d) shows ROC curves overall and by racial group.}
    \label{fig:BRCA_stats}
\end{figure}

\subsection{Additional details: media stance on global warming example}\label{app:stance}

We use the media-stance data analyzed in \cite{llms}, which builds on the stance dataset of \cite{stance}. The dataset consists of 2300 news headlines annotated according to whether they agree, are neutral toward, or disagree with the statement that global warming is a serious concern. Following \cite{llms}, we focus on the odds ratio measuring whether headlines containing affirming devices are more likely to agree with this statement.

Let $Y_i \in \{0,1\}$ indicate whether headline $i$ is annotated as agreeing with the global-warming statement, and let $Z_i \in \{0,1\}$ indicate the presence of an affirming device, such as words or phrases including ``expert,'' ``proven,'' or ``renowned.'' The target parameter is the log odds ratio
$$\theta = \log\frac{\mu_1/(1-\mu_1)}{\mu_0/(1-\mu_0)}, \qquad \mu_z:=\P(Y=1\mid Z=z),$$ which measures whether headlines containing affirming devices are more likely to agree with the global-warming statement. The human annotations are treated as the ground-truth labels. The predictive information is constructed following the two-stage LLM annotation procedure in \cite{llms}. In the first stage, an LLM is prompted zero-shot to annotate the stance of each headline, $\hat{Y}_i$. In the second stage, the LLM is asked to provide a verbalized confidence score $C_i$ for its annotation. The resulting LLM annotation is used as the surrogate prediction for the human stance label, while the confidence score is used to estimate the probability that the LLM annotation is incorrect. In \cite{llms}, GPT-4o (gpt-4o-2024-05-13 version) and GPT-3.5-turbo (gpt-3.5-turbo-0125 version) were used for data annotation via zero-shot prompting (see Table 2 of the paper for prompt texts in both stages, and Appendix A.3 for more specific details such as parameter values). 

Then, following \cite{llms}, we fit an XGBoost model to estimate the error probability $\widehat{\mathrm{err}}(C_i)$ as a function
of the verbalized confidence score $C_i$. This estimated error probability serves as the uncertainty score for label acquisition: headlines with larger predicted annotation error are treated as more informative to query. Thus, in this experiment, \method~uses the LLM-generated stance annotation $\hat{Y}_i$ as the black-box prediction for $Y_i$, and uses the XGBoost-estimated error probability $\widehat{\mathrm{err}}(C_i)$ as the input uncertainty score for constructing the labeling policy.

Coverage of intervals constructed using the varying methods, both uncorrected and adjusted are given in Figure~\ref{fig:stance-ESS-combined}.

\begin{figure}
    \centering
    \includegraphics[width = \linewidth]{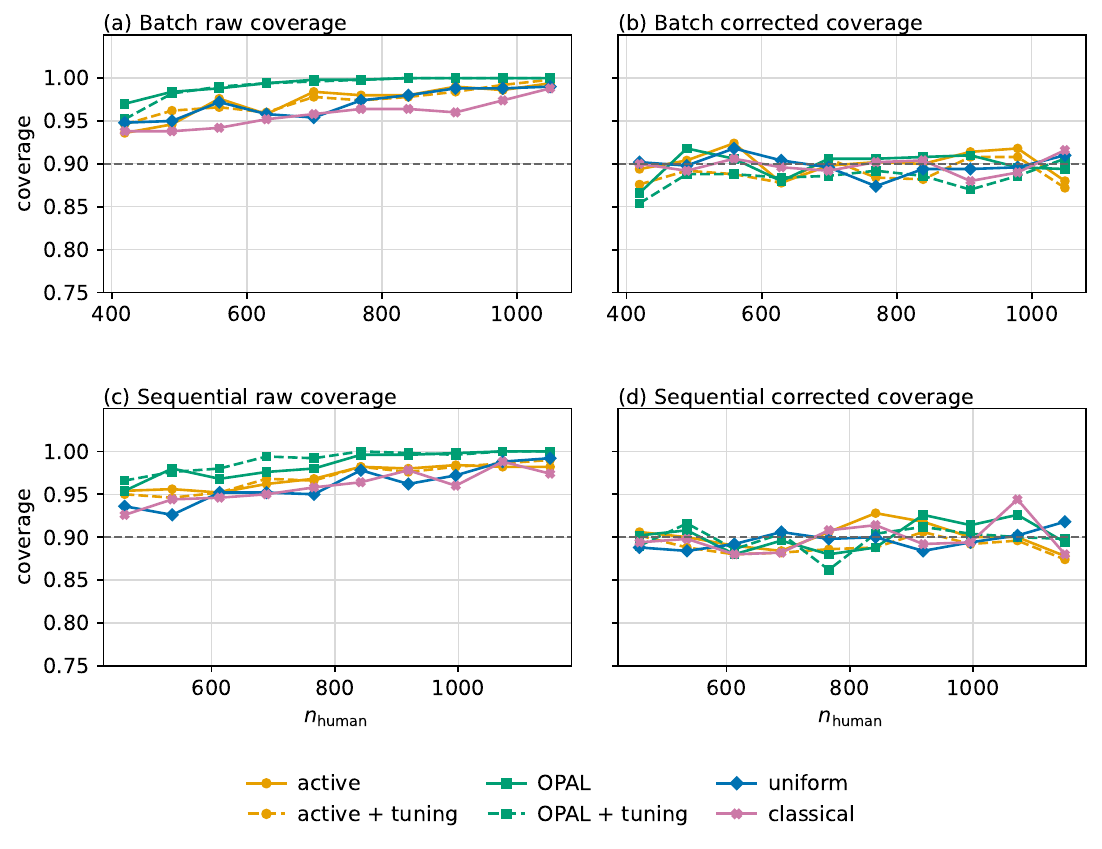}
    \caption{\textbf{Odds ratio estimation of global warming stance with affirming devices} Coverage of each method under (a) batch sampling, uncorrected and (b) batch, adjusted for finite population, (c) sequential sampling, uncorrected, and (d) sequential, adjusted for finite population. We perform 500 trials per method at each budget level (20-50\%), and average over these trials in the reported results.}
    \label{fig:stance-coverage-combined}
\end{figure}

Figure~\ref{fig:stance-coverage-combined} shows variance across endpoints of intervals generated from the 500 different Monte Carlo trials are shown below on a log scale.

\begin{figure}[h]
    \centering
    \includegraphics[width = \linewidth]{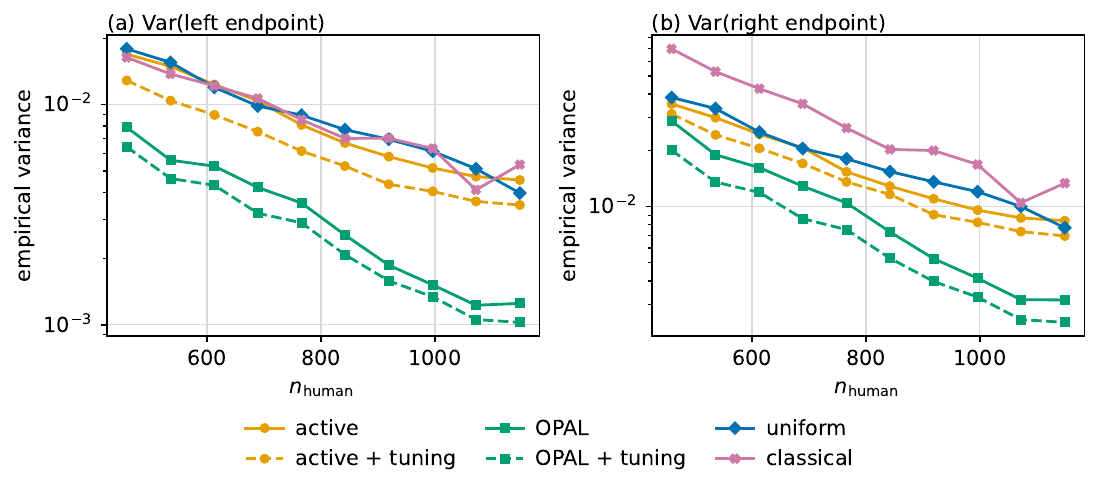}
    \caption{\textbf{Stability of odds ratio estimation of global warming stance in the media in the presence of affirming devices vs. no affirming devices.} Variability of estimates, interval widths, left and right endpoints over 500 Monte Carlo trials. The budget given on the x-axis (denoted by the number of labels acquired, $n_{\text{human}}$) ranges from 10\% to 20\% of the total unlabeled observations.}
    \label{fig:stance_var}
\end{figure}

\begin{figure}
    \centering
    \includegraphics[width=0.8\linewidth]{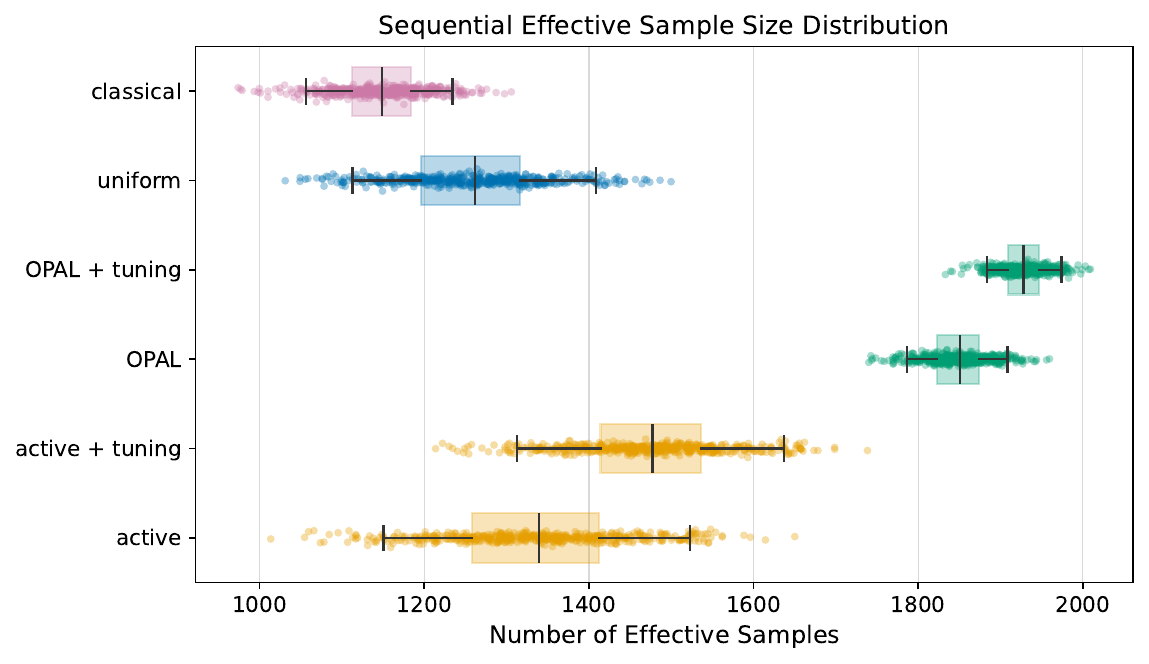}
    \caption{\textbf{Odds ratio estimation of global warming stance with affirming devices: sequential sampling.} Distribution of effective sample size (x-axis) of each method across 500 iterations}
    \label{fig:stance-sequential-dist}
\end{figure}

\subsection{Additional details: Alphafold-derived predictors for intrinsic disorder prediction example}\label{app:alphafold}

For the proteomics experiment, we use the AlphaFold-based example from \cite{bludau, ppi, active}. The scientific question is whether phosphorylation, a functional protein property, is associated with being part of an intrinsically disordered region
(IDR), a structural property which can only be obtained from knowledge about the protein structure measured accurately only via expensive experimental techniques. In our notation, $Y_i \in \{0,1\}$ indicates whether protein $i$ is part of an IDR, and $Z_i \in \{0, 1\}$ indicates whether the protein is phosphorylated. The target is again the log odds ratio $$\theta = \log \frac{\mu_1/(1 - \mu_1)}{\mu_0/(1 - \mu_0)}, \qquad \mu_z := \P(Y = 1 \mid Z = z),$$ where $z = 1$ corresponds to phosphorylated proteins and $z = 0$ to non-phosphorylated proteins.

Following \cite{active}, the gold-standard IDR measurements are treated as expensive labels, while the post-processed
AlphaFold outputs are treated as predictive information available prior to label acquisition. The predictive model $f$ is a logistic regression model trained to predict the IDR indicator $Y_i$ from the AlphaFold-derived features. In our implementation, we also include the phosphorylation indicator $Z_i$ as a covariate in the predictive model, since phosphorylation may itself be informative about intrinsic disorder. The fitted probabilities $\hat{p}_i = f(X_i)$ are for uncertainty scores $u(X_i) = \hat{p}_i (1 - \hat{p}_i)$.

\section{Additional simulation results and details}

\subsection{Mixing with uniform}
\label{app:mixing}

We also considered a simple data-driven way to mix an adaptive sampling rule with uniform sampling.  Let
$\pi_m(x)$ denote the sampling probability produced by method $m$, where $m$ is either active uncertainty sampling or OPAL, and let $\pi_{\mathrm{unif}}(x)$ denote the uniform policy with the same expected labeling budget.  For a grid of values $\lambda \in [0,1]$, we formed
$$\pi_{m,\lambda}(x) = \lambda \pi_m(x) + (1-\lambda)\pi_{\mathrm{unif}}(x),$$
with probabilities renormalized to preserve the target expected number of labels.  Under this convention, $\lambda=1$ gives the original adaptive method and $\lambda=0$ gives pure uniform sampling. For each budget, we selected $\lambda$ using a labeled pilot sample.  Specifically, for each candidate value of $\lambda$, we repeatedly simulated Bernoulli label acquisition from $\pi_{m,\lambda}$ on the pilot sample and computed the prediction-assisted log odds-ratio estimate.  The tuning criterion was the empirical Monte Carlo variance of this log-scale estimate; under the usual Wald approximation, this is the same variance term that determines confidence interval width up to a fixed multiplicative constant.  We then chose
$$\lambda_m^* = \arg\min_{\lambda} \widehat{\mathrm{Var}}_{\mathrm{MC}}\{\hat\theta_{\lambda,m}\},$$
where the variance is taken over the simulated label-acquisition step.  Equivalently, if $\tau=1-\lambda$ denotes the uniform-mixing weight, this procedure selects the amount of uniform exploration that minimizes the estimated Monte Carlo variability of the downstream estimator.

Figure~\ref{fig:sim-lambdatune} shows the selected $\lambda_m^*$ values across budgets.  OPAL typically selects a value close to one, meaning that the estimated optimum is the unmixed or nearly unmixed OPAL policy.  In the 500-repetition run, the median selected OPAL weight was $0.95$, and $17$ of the $20$ budgets selected $\lambda^* \geq 0.9$.  In contrast, active uncertainty sampling selected substantially smaller and more budget-dependent values: the median selected active weight was $0.25$, so the corresponding median uniform weight was $0.75$.  Thus the active rule often benefited from considerable uniform exploration, whereas OPAL did not show a systematic need for additional mixing. This pattern is consistent with the intuition behind OPAL.  Pure uncertainty sampling can concentrate labels in a narrow region of the score distribution and assign very small probabilities elsewhere.  Even if this is useful for prediction, it can be unstable for inverse-probability-weighted inference because observations with small sampling probabilities can create large weights.  Mixing with uniform sampling raises these lower-tail probabilities and reduces the chance that a small part of the population is effectively unsampled.  OPAL, by contrast, directly optimizes an inference-oriented variance criterion, so its policy already balances concentration against weight stability.

The tuning rule should be viewed as a diagnostic.  It requires a labeled pilot sample, and with a small or unrepresentative pilot sample the estimated Monte Carlo variance can be noisy and may not transfer to the final inference population.  Uniform mixing can also be inefficient when the variance-optimal policy is genuinely highly nonuniform, because part of the labeling budget is spent uniformly on observations with low influence.  These caveats are most relevant for active uncertainty sampling, for which explicit exploration can be important; in this experiment, OPAL was already stable without substantial uniform mixing.

\begin{figure}[t]
    \centering
    \includegraphics[width=0.55\linewidth]{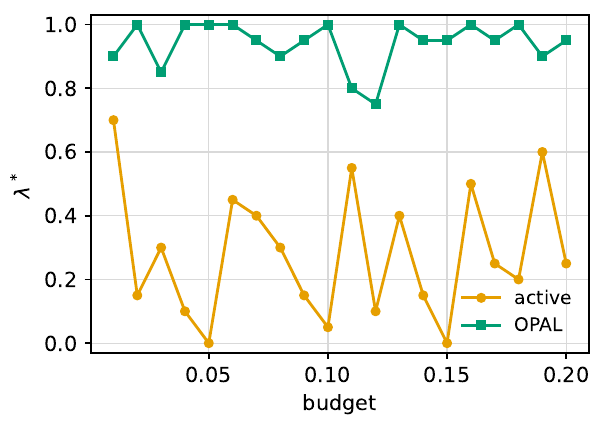}
    \caption{Optimal mixing weight with uniform sampling in the odds-ratio simulation.  For each budget (displayed as a fraction of all unlabeled data points) and each base method, $\lambda^*$ is selected by minimizing the empirical Monte Carlo variance of the prediction-assisted log odds-ratio estimate on the labeled pilot sample.  The convention is $\lambda=1$ for the unmixed method and $\lambda=0$ for pure uniform sampling.}
    \label{fig:sim-lambdatune}
\end{figure}

\subsection{Odds ratio simulation} \label{}

As discussed in Section~\ref{sec:sim_odds}, ESS is a ratio metric and therefore depends on the variance scale of the classical labeled-only baseline. Figure~\ref{fig:sim_0.95_ESS} illustrates the same phenomenon under more severe subgroup imbalance, with $95\%$ of observations in group 1 and only $5\%$ in group 0. Because classical labeling allocates labels roughly in proportion to group size, it spends relatively few labels on the smaller subgroup, which can dominate the uncertainty of the odds-ratio estimator. \method~adapts to these subgroup-specific variance contributions, so the ratio of classical variance to \method variance can become especially large. Thus, the larger ESS values in this figure should be read as reflecting both \method's adaptive allocation and the greater inefficiency of the classical baseline under severe imbalance.

\begin{figure}[th]
    \centering
    \includegraphics[width=0.95\linewidth]{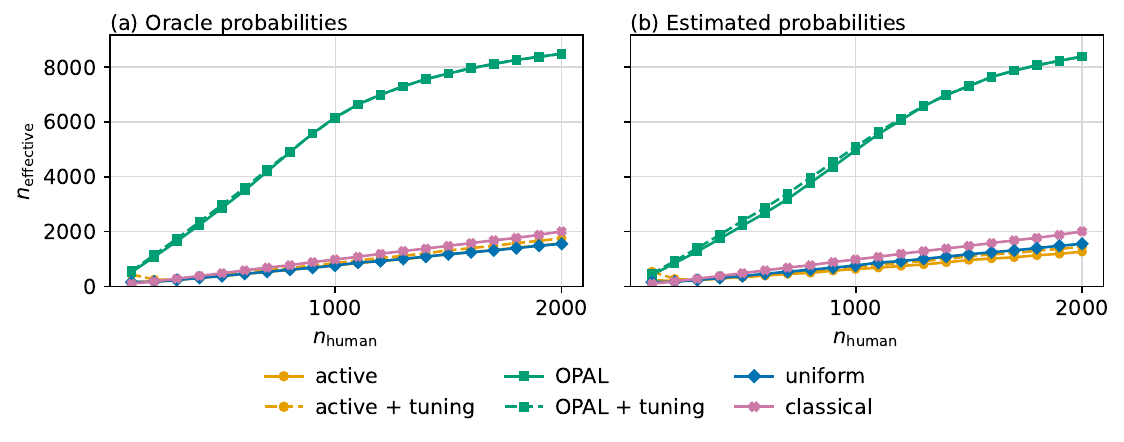}
    \caption{Unbalanced group sizes: group 1 comprises 95\% of population (a) oracle uncertainties used for finding optimal labeling probabilities; (b) estimated uncertainties used. Both plots show effective sample size for the methods we consider in our comparison.} 
    \label{fig:sim_0.95_ESS}
\end{figure}

\subsection{Kendall's tau simulation implementation}
In Section~\ref{app:EIF}, we derive the binary-outcome projection and resulting conditional variance weight $$c_i = c(X_i) = p(X_i)\{1-p(X_i)\}\left(1 - 2[p_0F_0(X_i) + p_1F_1(X_i)]\right)^2.$$
We use this to construct the sampling policies and corresponding one-step estimator.

We generate covariates $X$, a latent score $S^*$, an observed score $S = S^* + \text{noise}$, and labels $Y \sim \Bern(\mu(X))$. A small subset is used to fit the nuisance regression $\hat{\mu}(X) = \P(Y = 1 \mid X)$ by logistic regression with isotonic calibration to adjust the predicted probabilities. The one-step estimator, given labeling indicator $\xi_i$ and sampling probability $\pi_i$ is 
\begin{equation}\label{eq:kendall_tau_1step}
    \hat{\tau} = \frac{1}{n} \sum_i \left[\frac{\xi_i}{\pi_i}(Y_i - \hat{\mu}_i) \hat{A}_i + \hat{\mu}_i\hat{A}_i - \hat{B}_i\right],
\end{equation}
where $\hat{A}_i$ and $\hat{B}_i$ are empirical quantities based on :
\begin{itemize}
    \item $A_i = \tilde\psi(S_i, 1) - \tilde\psi(S_i, 0) = 2F_S(S_i) - 1$, the projection difference, where $F_S(s) = \P(S \leq s) := p_0F_0(s) + p_1F_1(s)$, so $$\hat{A}_i = 2\widehat F_S(S_i) - 1 = 2\frac{\operatorname{rank}(S_i)}{n}-1.$$
    \item $B_i = -\tilde\pi(S_i, 0) = p_1\{2F_1(S_i) - 1\}$, so $$\hat{B}_i = \frac{1}{n}\sum_{j: S_j < S_i} \hat{\mu}_j - \frac{1}{n}\sum_{j: S_j > S_i} \hat{\mu}_j.$$
\end{itemize}
Since $Y_i = 1 \implies Y_iA_i - B_i = A_i - B_i = \tilde{\psi}(S_i, 1)$ while $Y_i = 0 \implies Y_iA_i - B_i = -B_i = \tilde{\psi}(S_i, 0)$, then $$\tilde{\psi}(S_i, Y_i) = Y_iA_i - B_i \implies \E[\tilde\psi(S_i, Y_i) \mid S_i] = \E[Y_i \mid S_i]A_i - B_i.$$
We then get the form of the one-step estimator in Eq.~\eqref{eq:kendall_tau_1step} due to the plug-in conditional projection $$\E[\tilde \psi(S_i, Y_i) \mid S_i] = \mu_iA_i - B_i \implies \widehat{\E}[\tilde \psi(S_i, Y_i) \mid S_i] = \hat{\mu}_i\hat{A}_i -\hat{B}_i,$$ and residual direction $$\zeta_i = \tilde{\psi}(S_i, Y_i) - \E[\tilde \psi(S_i, Y_i) \mid S_i] = (Y_i - \mu_i)A_i,$$ which can be estimated by $\hat{\zeta}_i = (Y_i - \hat{\mu}_i)\hat{A}_i$. 

The variance is estimated by the empirical variance of the estimated one-step contributions. The empirical plug-in conditional variance terms are $$\hat{c}_i = \hat{A}_i^2\hat{\mu}_i(1 - \hat{\mu}_i).$$ 
\end{document}